\newif\ifIEEE
\newif\ifPAGELIMIT
\newif\ifANONYMOUS
\newif\ifARXIV
    \newcommand{\bibauthor}[1]{#1}
    \newcommand{\bibpaper}[1]{``#1''}
    \newcommand{\Footnotetext}[2]
    {
        \begin{figure}[!b]
        \footnotesize\vspace{-3ex}\hrulefill\hfill
        \makebox[0em]{}\hfill\makebox[0em]{}%
                                          \par${}^{#1}$ #2\vspace{-.6ex}
        \end{figure}
        \addtocounter{figure}{0}
     }
    \newcommand{\bibauthor}[1]{\textsc{#1}}
    \newcommand{\bibpaper}[1]{\textsl{#1}}
    \newenvironment{IEEEkeywords}{\begin{small}%
                                  \textbf{Index Terms} ---}{\end{small}}
\newcommand{\bibbook}[1]{\textit{#1}}
\newcommand{\bibperiodical}[1]{\textit{#1}}
\newtheorem{theorem}{\indent Theorem}
\newtheorem{proposition}[theorem]{\indent Proposition}
\newtheorem{lemma}[theorem]{\indent Lemma}
\theoremstyle{remark}
\newtheorem{remark}{\indent Remark}
\theoremstyle{definition}
\newtheorem{example}{\indent Example}
\renewcommand{\mathbf}[1]{{\bm{#1}}}     % Use italics-bold in formulas
\newcommand{\Emph}[1]{{\underline{#1}}}
\newcommand{\Integers}{{\mathbb{Z}}}
\newcommand{\Array}{{\mathsf{A}}}
\newcommand{\Antichain}{{\mathcal{A}}}
\newcommand{\Boolean}{{\mathbb{B}}}
\newcommand{\weight}{{\mathsf{w}}}
\newcommand{\bldtheta}{{\mathbf{\vartheta}}}
\newcommand{\bldxi}{{\mathbf{\xi}}}
\newcommand{\allzero}{{\mathit{0}}}
\newcommand{\allone}{{\mathit{1}}}
\newcommand{\bldf}{{\mathbf{f}}}
\newcommand{\bldg}{{\mathbf{g}}}
\newcommand{\bldu}{{\mathbf{u}}}
\newcommand{\bldv}{{\mathbf{v}}}
\newcommand{\bldx}{{\mathbf{x}}}
\newcommand{\TCAM}{{\mathrm{T}}}
\newcommand{\E}{{\mathrm{E}}}
\newcommand{\N}{{\mathrm{N}}}
\newcommand{\GE}{\Gamma}
\newcommand{\LE}{\mathrm{L}}
\newcommand{\Eset}{{\mathcal{E}}}
\newcommand{\Nset}{{\mathcal{N}}}
\newcommand{\GEset}{{\mathcal{G}}}
\newcommand{\LEset}{{\mathcal{L}}}
\newcommand{\Fset}{{\mathcal{F}}}
\newcommand{\Collection}{{\mathcal{H}}}
\newcommand{\Sphere}{{\mathcal{S}}}
\newcommand{\U}{{\mathcal{U}}}
\newcommand{\reject}{{\bullet}}
\newcommand{\Sdb}{{${*}{\circ}$}}
\newcommand{\Sbd}{{${\circ}{*}$}}
\newcommand{\Sdd}{{${*}{*}$}}
\newcommand{\Srd}{{$\reject{*}$}}
\newcommand{\Sdr}{{${*}\reject$}}
\newcommand{\Srr}{{$\reject\reject$}}
\newcommand{\Sbr}{{${\circ}\reject$}}
\newcommand{\Sbb}{{${\circ}{\circ}$}}
\newcommand{\Srb}{{$\reject{\circ}$}}
\newcommand{\ldbrack}{\makebox[.9ex]{$[\![$}}
\newcommand{\rdbrack}{\makebox[.9ex]{$]\!]$}}
    \newcommand{\ifandonlyif}{iff}
\newcommand{\ifandonlyif}{if and only if}
\newcommand{\Title}{On the Implementation of Boolean Functions
            on Content-Addressable Memories}
\newcommand{\Name}{Ron M. Roth}
\newcommand{\Address}{Computer Science Department,
                      Technion,
                      Haifa 3200003, Israel}
\newcommand{\Membership}{, \IEEEmembership{Fellow, IEEE}}
\newcommand{\Email}{ronny@cs.technion.ac.il}
    \newcommand{\NameX}{Anonymous}
    \newcommand{\AddressX}{(Address left blank)}
    \newcommand{\EmailX}{(Email left blank)}
\newcommand{\NameX}{\Name}
\newcommand{\AddressX}{\Address}
\newcommand{\EmailX}{\Email}
    \newcommand{\Grant}{Work supported in part by
                        Grant~1713/20 from
                        the Israel Science Foundation}
    \newcommand{\Addressalt}{Work done in part at
                Hewlett Packard Labs, Milpitas, CA}
\newcommand{\Grant}{This work was supported in part by
                    Grant~1713/20 from
                    the Israel Science Foundation}
\newcommand{\Conference}{
            Part of this work was accepted for presentation
            at the IEEE Int'l Symposium on Information
            Theory (ISIT), June 2023}
\newcommand{\Addressalt}{This work was done in part while
            visiting Hewlett Packard Laboratories,
            Milpitas, CA}
\newcommand{\Thnx}{\Name\ is with the \Address.
                    \Addressalt.
                    \Grant.
                    \Conference.
                    Email: \Email.
                    \par}
\begin{document}
\ifIEEE
    \title{\Title}
       \ifPAGELIMIT
           \author{\IEEEauthorblockN{\NameX\vspace{-1ex}}\\
                   \IEEEauthorblockA{\AddressX.
                                     Email: \EmailX\vspace{-2ex}}
           }
       \else
           \ifARXIV
               \author{\Name\thanks{\Thnx}}
           \else
               \author{\Name\Membership\thanks{\Thnx}}
           \fi
       \fi
\else
    \title{\textbf{\Title}}
    \author{\textsc{\Name}\thanks{\Thnx}}
    \date{}
\fi
\maketitle
% \pagestyle{empty} alone does not remove the numbering of the title
%  page in the two-column mode
%%\ifIEEE  \thispagestyle{empty}  \fi

\begin{abstract}
Let $[q\rangle$ denote the integer set $\{0,1,\ldots,q-1\}$
and let $\Boolean=\{0,1\}$.
The problem of implementing functions
$[q\rangle\rightarrow\Boolean$
on content-addressable memories (CAMs) is considered.
CAMs can be classified by the input alphabet
and the state alphabet of their cells;
for example, in binary CAMs,
those alphabets are both $\Boolean$, while
in a ternary CAM (TCAM), both alphabets are endowed with
a ``don't care'' symbol.

This work is motivated by recent proposals for using CAMs
for fast inference on decision trees.
In such learning models, the tree nodes carry out
integer comparisons, such as testing equality ($x=t$~?)
or inequality ($x\le t$~?),
where $x\in[q\rangle$ is an input to the node
and $t\in[q\rangle$ is a node parameter.
A CAM implementation of such comparisons
includes mapping (i.e., encoding) $t$ into internal states of
some number $n$ of cells
and mapping $x$ into inputs to these cells,
with the goal of minimizing $n$.

Such mappings are presented for various comparison families,
as well as for the set of all functions
$[q\rangle\rightarrow\Boolean$, under several scenarios of
input and state alphabets of the CAM cells.
All those mappings are shown to be optimal in that
they attain the smallest possible $n$ for any given $q$.
\end{abstract}

\ifPAGELIMIT
\else
\begin{IEEEkeywords}
Content-addressable memories,
Integer comparisons,
Representation of Boolean functions,
VC~dimension.
\end{IEEEkeywords}
\fi

\ifPAGELIMIT
    \ifANONYMOUS
    \else
    \Footnotetext{\quad}{\Addressalt. \Grant.}
    \fi
\fi

\section{Introduction}
\label{sec:introduction}

For $a, b \in \Integers$, denote by $[a:b]$ the integer subset
$\{ z \in \Integers \,:\, a \le z \le b \}$
and by $[a:b \rangle$ the set $[a:b-1]$;
we will use the shorthand notation $[b \rangle$ for $[0:b \rangle$.

Let~$\Boolean$ denote the Boolean set $\{ 0, 1 \}$
and let $\Boolean_*$ and $\Boolean_\reject$ denote
the alphabets $\Boolean \cup \{ {*} \}$
and $\Boolean \cup \{ {*}, \reject \}$, respectively.
We refer to $*$ as the ``don't-care'' symbol
and to $\reject$ as the ``reject'' symbol.
Define the function
$\TCAM : \Boolean_\reject^2 \rightarrow \Boolean$
by the truth table shown in Table~\ref{tab:T}.
\begin{table}[hbt]
\caption{Truth table of
the function $(u,\vartheta) \mapsto \TCAM(u,\vartheta)$.%
\ifPAGELIMIT
    \ifANONYMOUS
    \else
    \vspace{-2ex}
    \fi
\fi}
\label{tab:T}
\[
\ifPAGELIMIT
    \renewcommand{\arraystretch}{0.9}
    \arraycolsep0.9ex
\else
\renewcommand{\arraystretch}{1.1}
\arraycolsep1ex
\fi
\begin{array}{c|cccc}
\raisebox{-.2ex}{$u$}\backslash\raisebox{0.2ex}{$\vartheta$}
        & \reject & 0 & 1 & * \\
\hline
\raisebox{0ex}[2.2ex]{}
\reject & 0       & 0 & 0 & 1 \\
0       & 0       & 1 & 0 & 1 \\
1       & 0       & 0 & 1 & 1 \\
*       & 1       & 1 & 1 & 1
\end{array}
\]
\ifPAGELIMIT
    \ifANONYMOUS
    \else
    \vspace{-2ex}
    \fi
\fi
\end{table}
The first argument, $u$, of~$\TCAM$ is called the \emph{input}
and the second argument, $\vartheta$, is the \emph{internal state}.
While the function $(u,\vartheta) \mapsto \TCAM(u,\vartheta)$
is symmetric in its two
arguments (in that $\TCAM(u,\vartheta) = \TCAM(\vartheta,u)$),
there is a distinction between them in the way this function
will be used.

We extend the definition of the function~$\TCAM$ to words (vectors)
over $\Boolean_\reject$ as follows:
for $\bldu = (u_j)_{j \in [n \rangle}$
and $\bldtheta = (\vartheta_j)_{j \in [n \rangle}$
in $\Boolean_\reject^n$,
we define $\TCAM(\bldu,\bldtheta)$ by
\begin{equation}
\label{eq:TCAMconjunction}
\ifPAGELIMIT
    \textstyle
\fi
\TCAM(\bldu,\bldtheta)
= \bigwedge_{j \in [n \rangle} \TCAM(u_j,\vartheta_j) ,
\end{equation}
where $\wedge$ denotes Boolean conjunction (i.e., logical ``and'').

A \emph{content-addressable memory} (CAM) is a device which
consists of an $m \times \ell$ array of \emph{cells}, with each cell
$(i,j) \in [m \rangle \times [\ell \rangle$ implementing
the function $u \mapsto \TCAM(u,\vartheta_{i,j})$,
for some internal state value $\vartheta_{i,j} \in \Boolean_\reject$.
The input to the array is a word
$\bldu = (u_j)_{j \in [\ell \rangle} \in \Boolean_\reject^\ell$,
with $u_j$ serving as the input to all the cells along column~$j$.
The internal states need to be \emph{programmed} into the array
prior to its use and are therefore assumed to change
much less frequently than the input word~$\bldu$.
In addition, the selections of the internal state values and of
the input words are assumed to be independent.

Each row (``match line'') in the array computes the conjunction
of the outputs of the cells along
the row (as in~(\ref{eq:TCAMconjunction})),
and the results obtained by the $m$~rows form
the output word, $\bldv \in \Boolean^m$, of the array.
The nonzero entries in~$\bldv$ are referred to as ``matches,''
and in practice there are settings where one might be interested
only in the number of matches (namely, the Hamming weight of~$\bldv$)
or in the index~$i$ of the first match in~$\bldv$.

CAMs are classified, \emph{inter alia}, by the ranges
of the input values $u_j$
and the internal state values $\vartheta_{i,j}$.
In a binary CAM (in short, BCAM),
these values are constrained to be in~$\Boolean$,
and in a ternary CAM (in short, TCAM),
they are elements of $\Boolean_*$.
BCAMs are used in implementing fast look-up tables,
and TCAMs are used in networking equipment for performing
high-speed packet classification.

There are several common hardware designs for CAM cells,
e.g., \cite{GG},\cite{KTFY},\cite{YSMB}.
Most designs allow in effect the cells (including BCAM cells)
to take both~$*$ and~$\reject$ as input values.
Moreover, some of the TCAM designs, such as
the one shown in~\cite[Fig.~2(a)]{GG} and~\cite[Fig.~1]{KTFY},
allow the internal state to take the value~$\reject$
(hence our election
to define~$\TCAM$ to have the domain $\Boolean_\reject^2$).

For an integer $q \ge 2$,
let $\Fset_q$ denote the set of all functions
$f : [q \rangle \rightarrow \Boolean$
\ifPAGELIMIT
    (when $q = 2^h$, one can regard $\Fset_q$
    as the set of all $h$-variate Boolean functions
    $\Boolean^h \rightarrow \Boolean$).
\else
(to avoid trivialities, we exclude the case
$q = 1$).\footnote{%
When $q = 2^h$, one can regard $\Fset_q$
as the set of all $h$-variate Boolean functions
$\Boolean^h \rightarrow \Boolean$ (by simply replacing
each element in the domain $[q \rangle$ by its
length-$h$ binary representation).}
\fi
For certain function families $\Phi \subseteq \Fset_q$,
we are interested in the implementation of each
$x \mapsto f(x) \in \Phi$ in the form
\begin{equation}
\label{eq:TCAM}
\ifPAGELIMIT
    \textstyle
\fi
f(x) = \TCAM(\bldu(x),\bldtheta(f))
\stackrel{\textrm{(\ref{eq:TCAMconjunction})}}{=}
\bigwedge_{j \in [n \rangle} \TCAM(u_j(x),\vartheta_j(f)) ,
\end{equation}
where
$\bldu : [q \rangle \rightarrow \Boolean_\reject^n$
and
$\bldtheta : \Phi \rightarrow \Boolean_\reject^n$
are prescribed mappings which may (only) depend on~$\Phi$ and~$n$;
in fact, for a given
family~$\Phi$, we will be interested in the smallest possible~$n$
for which any function $f \in \Phi$ can be expressed
in the form~(\ref{eq:TCAM}).
Note that the first argument of $\TCAM(\cdot,\cdot)$ in~(\ref{eq:TCAM})
depends only on~$x$, while the second depends only on~$f$.
Such a separation is dictated by the common use
of CAMs: the functions~$f$ are pre-specified
and the respective words $\bldtheta(f)$ are programmed
into (the internal state values of) $n$~CAM cells along a row,
while~$x$ is an input within a sequence of
inputs to which the functions are to be applied.

Our study of implementations
of Boolean functions on CAMs is motivated primarily
by certain models of machine learning---specifically,
by recent proposals for using CAMs for fast
inference on decision trees~\cite{LGSMFPS},\cite{PGLSSFMS}.
In such learning models, the basic operations carried out
in the nodes of the trees are comparisons between
(real or) integer numbers.
The implementation in~\cite{LGSMFPS},\cite{PGLSSFMS}
can be seen as a generalized notion of a CAM
which consists of an $m \times \ell$ array $\Array$ of cells,
with each cell $(i,s) \in [m \rangle \times [\ell \rangle$
computing a function (comparator)
$x \mapsto f_{i,s}(x) \in \Fset_q$
which tests~$x$ against some fixed threshold.
The input to such an array is
a word $\bldx = (x_s)_{s \in [\ell \rangle} \in [q \rangle^\ell$
and the output of each row~$i$ is
the conjunction of the outputs of the cells along the row:
\begin{equation}
\label{eq:fconjunction}
\ifPAGELIMIT
    \textstyle
\fi
\bigwedge_{s \in [\ell \rangle} f_{i,s}(x_s) .
\end{equation}
When used to implement decision trees, the entries of~$\bldx$
are the features (attributes) and each row~$i$ corresponds
to (a path that leads to) a leaf in the tree.
The expression~(\ref{eq:fconjunction})
evaluates to~$1$ \ifandonlyif\ the feature values in~$\bldx$
lead to the (unique) leaf that corresponds to row~$i$.

Our families~$\Phi$ of interest have been selected
accordingly and, in order to define them,
we introduce the following notation.
Let the bivariate functions
$\E_q$, $\N_q$, $\GE_q$, and $\LE_q$,
all with the domain $[q \rangle^2$
and range~$\Boolean$, be defined
for every $(x,t) \in [q \rangle^2$ by:
\ifPAGELIMIT
\[
    \begin{array}{lclclcl}
    \E_q(x,t)
    & = & \ldbrack x = t \rdbrack ,
    &&
    \GE_q(x,t)
    & = & \ldbrack x \ge t \rdbrack, \\
    \N_q(x,t)
    & = & \ldbrack x \ne t \rdbrack,
    &&
    \LE_q(x,t)
    & = & \ldbrack x \le t \rdbrack ,
    \end{array}
\]
\else
\begin{eqnarray*}
\E_q(x,t)
& = & \ldbrack x = t \rdbrack ,  \\
\N_q(x,t)
& = & \ldbrack x \ne t \rdbrack, \\
\GE_q(x,t)
& = & \ldbrack x \ge t \rdbrack, \\
\LE_q(x,t)
& = & \ldbrack x \le t \rdbrack ,
\end{eqnarray*}
\fi
where $\ldbrack \cdot \rdbrack$ denotes the Iverson bracket
(which evaluates to~$1$ if its argument is true,
and to~$0$ otherwise).
Respectively, define the following subsets of $\Fset_q$
(each of size~$q$):
\ifPAGELIMIT
\[
    \arraycolsep0.6ex
    \begin{array}{lclclcl}
    \Eset_q & = &
    \bigl\{ x \mapsto \E_q(x,t) \bigr\}_{t \in [q \rangle} ,
    &&
    \GEset_q & = &
    \bigl\{ x \mapsto \GE_q(x,t) \bigr\}_{t \in [q \rangle} , \\
    \Nset_q & = &
    \bigl\{ x \mapsto \N_q(x,t) \bigr\}_{t \in [q \rangle} ,
    &&
    \LEset_q & = &
    \bigl\{ x \mapsto \LE_q(x,t) \bigr\}_{t \in [q \rangle} .
\end{array}
\]
\else
\begin{eqnarray*}
\Eset_q & = &
\bigl\{ x \mapsto \E_q(x,t) \bigr\}_{t \in [q \rangle} ,  \\
\Nset_q & = &
\bigl\{ x \mapsto \N_q(x,t) \bigr\}_{t \in [q \rangle} ,  \\
\GEset_q & = &
\bigl\{ x \mapsto \GE_q(x,t) \bigr\}_{t \in [q \rangle} , \\
\LEset_q & = &
\bigl\{ x \mapsto \LE_q(x,t) \bigr\}_{t \in [q \rangle} .
\end{eqnarray*}
\fi
The families~$\Phi$ that we consider are:
\ifPAGELIMIT
    $\Eset_q$, $\Nset_q$, $\GEset_q$
    (or $\LEset_q$), $\GEset_q \cup \LEset_q$,
    as well as the whole set $\Fset_q$.
\else
\[
\Eset_q,\;\;
\Nset_q,\;\;
\GEset_q
\; \;\textrm{(or $\LEset_q$)},
\;\;
\GEset_q \cup \LEset_q,
\]
as well as the whole set $\Fset_q$.\footnote{%
The family $\GEset_q \cup \LEset_q$ is quite prevailing
in decision trees,
where the same feature $x_s$ is typically compared against
both lower and upper thresholds.}
\fi

Once the functions in the selected family~$\Phi$
have an implementation of the form~(\ref{eq:TCAM})
(with mappings $x \mapsto \bldu(x)$ and $f \mapsto \bldtheta(f)$
that attain the smallest possible~$n$),
we can realize an $m \times \ell$ array~$\Array$ of functions
$f_{i,s} \in \Phi$ by an (ordinary) CAM consisting of
an $m \times \ell n$ array~$\tilde{\Array}$
of cells, with each cell
implementing the function $u \mapsto \TCAM(u,\vartheta_{i,j})$
with the internal states $\vartheta_{i,j}$ selected so that
non-overlapping $n$-cell blocks in $\tilde{\Array}$
implement the functions $f_{i,s}$.\footnote{%
The particular architecture of CAMs
that is proposed~\cite{LGSMFPS},\cite{PGLSSFMS},
referred to therein as an analog CAM (a-CAM), consists of cells
which perform comparisons between integers in $[q \rangle$,
for some prescribed~$q$.
The design of the cells makes use of programmable resistors
and the comparisons are carried out in the analog domain.
The reliability of such an a-CAM in practice, however, is yet to be
tested and, in any event, such an a-CAM is expected to operate
only for relatively small values of~$q$.
The alternate approach, which motivates this work, proposes
using more traditional CAM designs (which are well tested and robust)
to realize the same (targeted) functionality of an a-CAM.}
Specifically, we select the internal states in~$\tilde{\Array}$
for each $(i,j) \in [m \rangle \times [\ell n \rangle$ so that
\[
\bigl(
\vartheta_{i,s n} \; \vartheta_{i,s n+1} \; \ldots
\; \vartheta_{i,(s+1)n-1} \bigr) = \bldtheta(f_{i,s}) ,
\]
in which case the output of row~$i$ in~$\Array$
(namely, (\ref{eq:fconjunction})) can be expressed as
\begin{eqnarray*}
\ifPAGELIMIT
    \textstyle
\fi
\bigwedge_{s \in [\ell \rangle}
f_{i,s}(x_s)
& \stackrel{\textrm{(\ref{eq:TCAM})}}{=}
&
\ifPAGELIMIT
    \textstyle
\fi
\bigwedge_{s \in [\ell \rangle}
\TCAM(\bldu(x_s), \bldtheta(f_{i,s})) \\
& = &
\ifPAGELIMIT
    \textstyle
\fi
\bigwedge_{s \in [\ell \rangle}
\bigwedge_{j \in [n \rangle}
\TCAM(u_j(x_s), \vartheta_{i,s n + j}) ,
\end{eqnarray*}
which is the output of row~$i$ in~$\tilde{\Array}$.

Based upon the intended use of CAMs
and the various designs of their cells, we consider
several scenarios, which will be identified
by pairs of symbols from $\{ \circ, {*}, \reject \}$:
the first (respectively, second) symbol in such a pair specifies
the alphabet of the range
of the mapping~$\bldu$ (respectively, $\bldtheta$)
that appears in~(\ref{eq:TCAM}), with $\Boolean_\circ = \Boolean$;
for example, in Scenario~(\Sbd),
the ranges of~$\bldu$ and~$\bldtheta$
are $\Boolean^n$ and $\Boolean_*^n$, respectively.
We will then say that a subset $\Phi \subseteq \Fset_q$
is \emph{$n$-cell implementable} under a given scenario
if there exist mappings~$\bldu$ and~$\bldtheta$
with the appropriate ranges
($\Boolean^n$, $\Boolean_*^n$, or $\Boolean_\reject^n$)
such that every function $f \in \Phi$ can be written in
the form~(\ref{eq:TCAM}).
\ifPAGELIMIT
\else

\subsection{Summary of results}
\label{sec:summary}
\fi

Table~\ref{tab:summary} summarizes our results:
for each of our selected choices for~$\Phi$
and for six possible scenarios,
the table shows the largest value of~$q$
for which~$\Phi$ is $n$-cell implementable.
Due to the symmetries
$\E_q(x,t) = \E_q(t,x)$,
$\N_q(x,t) = \N_q(t,x)$, and
$\GE_q(x,t) = \GE_q(q{-}1{-}t,q{-}1{-}x) = \LE_q(t,x)$,
the columns~(\Sdb) and~(\Sbd) are the same
in the rows that correspond to
the families $\Eset_q$, $\Nset_q$, and $\GEset_q$,
and so are the columns~(\Srd) and~(\Sdr).
Hence, for these families, there are four scenarios to consider.

Three scenarios were omitted from the table:
Scenario~(\Sbb) corresponds to
\ifPAGELIMIT
    a BCAM and puts restrictions on
\else
a BCAM, where the two arguments of $\TCAM(\cdot,\cdot)$
are restricted to~$\Boolean$. Such a restriction limits
\fi
the subsets~$\Phi$ that can be expressed in the form~(\ref{eq:TCAM})
(regardless of how large~$n$
\ifPAGELIMIT
    is); among our families of interest, only $\Eset_q$ satisfies those
    restrictions (see Footnote~\ref{footnote:Eset} below).
    Scenarios~(\Sbr) and~(\Srb) are also
    omitted since---except for isolated cases---they coincide with
    Scenarios~(\Sbd) and~(\Sdb), respectively,
    in any of the families~$\Phi$ of interest.
\else
is), namely, if $f(y) = f(z) = 1$ for some $f \in \Phi$
and some two distinct $y, z \in [q \rangle$, then
$g(y) = g(z)$ for all $g \in \Phi$.
Among our families of interest, only $\Eset_q$ satisfies this
property (see Footnote~\ref{footnote:Eset} below).
Scenarios~(\Sbr) and~(\Srb) are also omitted.
Scenario~(\Sbr) is identical to~(\Sbd) unless~$\Phi$ contains
the all-zero function, and among our families of interest,
this occurs only when $\Phi = \Fset_q$;
in this case Scenarios~(\Sbr) and~(\Sbd) differ
only in one isolated
case (see Footnote~\ref{footnote:Fsetq=2} below).
Similarly, Scenarios~(\Srb) and~(\Sdb) would differ
only if there were an element $y \in [q \rangle$ such that
$f(y) = 0$ for all $f \in \Phi$; yet this does not happen
in any of the families~$\Phi$ of interest.
\fi

We see from the table that for a given $q \ge 3$
and under Scenarios~(\Sdb), (\Sbd), (\Sdr), and~(\Srr),
the family $\Nset_q$ is $n$-cell implementable, (if and) only if
the \emph{whole} set $\Fset_q$ is.
As such, the family $\Nset_q$, albeit forming an exponentially
small fraction of
\ifPAGELIMIT
    $\Fset_q$,
\else
$\Fset_q$ ($q$ out of the $2^q$ functions),
\fi
necessarily has the least-efficient CAM implementation
under these scenarios.
And this ``record'' is almost tied by the family
\ifPAGELIMIT
    $\GEset_q$.
\else
$\GEset_q$ (which, too, is of size~$q$).
\fi
On the other hand, under Scenarios~(\Sdd) and~(\Srd),
each of the families $\Nset_q$, $\GEset_q$
(or even $\GEset_q \cup \LEset_q$) is
$\lceil q/2 \rceil$-cell implementable,
while the whole set $\Fset_q$ requires (at least) $q$ cells
(i.e., for this set, Scenarios~(\Sdd) and~(\Srd)
do not offer any advantage over Scenarios~(\Sdb) or~(\Sbd)).

\begin{table*}[hbt]
\caption{Largest value of~$q$ that can be accommodated for
any number of cells~$n$.%
\ifPAGELIMIT
    \vspace{-2ex}
\fi}
\label{tab:summary}
\ifIEEE\else\scriptsize\fi
\[
\ifIEEE\else\arraycolsep0.9ex\fi
\ifPAGELIMIT
    \newcommand{\V}{\raisebox{0ex}[3.7ex][2.3ex]{}}
\else
\renewcommand{\arraystretch}{1.2}
\newcommand{\V}{\raisebox{0ex}[4.2ex][3.2ex]{}}
\fi
\begin{array}{|c||c|c|c|c|c|c|}
\hline
\V
\Phi
&
\makebox[19ex]{Scenario~(\Sdb)}
&
\makebox[19ex]{Scenario~(\Sbd)}
&
\makebox[19ex]{Scenario~(\Sdd)}
&
\makebox[19ex]{Scenario~(\Srd)}
&
\makebox[19ex]{Scenario~(\Sdr)}
&
\makebox[19ex]{Scenario~(\Srr)} \\
\hline\hline
\V
\Eset_q
& \multicolumn{3}{c|}{2^n}
& \multicolumn{2}{c|}{
\displaystyle
\binom{n}{\lfloor n/3\rfloor}\cdot 2^{\lceil 2n/3 \rceil}
}
&
\displaystyle\binom{2n}{n} \\
\hline
\V
\Nset_q
&
\multicolumn{2}{c|}{
  \arraycolsep2ex
  \begin{array}{ll}
  2 & \textrm{($n = 1$)} \\
  n & \textrm{($n \ge 2$)}
  \end{array}
}
& \multicolumn{4}{c|}{2n} \\
\hline
\V
\GEset_q
& \multicolumn{2}{c|}{n+1}
&
\arraycolsep1ex
\begin{array}{ll}
2n     & \textrm{($n = 1, 2$)} \\
2n{+}1 & \textrm{($n \ge 3$)}
\end{array}
& \multicolumn{3}{c|}{2n + 1} \\
\hline
\V
\GEset_q \cup \LEset_q
& n
& n + 1
& \multicolumn{2}{c|}{
  \arraycolsep2ex
  \begin{array}{ll}
  2    & \textrm{($n = 1$)} \\
  2n-1 & \textrm{($n \ge 2$)}
  \end{array}
}
& \multicolumn{2}{c|}{} \\
\cline{1-5}
\V
\Fset_q
& \multicolumn{4}{c|}{n}
& \multicolumn{2}{c|}{\raisebox{3.7ex}[0ex]{$2n$}} \\
\hline
\end{array}
\]%
\ifPAGELIMIT
    \vspace{-2ex}
\fi
\end{table*}

The problem of constructing efficient encodings
of integer intervals on a TCAM (which include in particular
efficient implementations of integer comparisons)
has been the studied in the literature, e.g., \cite{BHH},\cite{FRY}.
However, in those papers, the encoding of an interval indicator
(or of a comparison) can utilize several rows of the TCAM.
This translates into allowing taking a disjunction
(``or'', ``$\vee$'') of several terms of
the form seen in the right-hand side of~(\ref{eq:TCAM}),
with each term having its own word~$\bldtheta$
(the common mapping $x \mapsto \bldu(x)$ is usually taken
as the ordinary binary representation of~$x$).
However, in our setting, each TCAM row may contain
(the conjunction of) the implementation of several comparators
and, therefore, any such implementation cannot be split
across several rows.

\ifPAGELIMIT
\else
Our families of interest,
$\Eset_q$, $\Nset_q$, $\GEset_q$, $\GEset_q \cup \LEset_q$,
and $\Fset_q$ are covered, respectively,
in Sections~\ref{sec:Eset} through~\ref{sec:Fset}
(followed by a discussion in Section~\ref{sec:discussion}).
In each section except Section~\ref{sec:Fset},
the scenarios are covered in the order
they appear in Table~\ref{tab:summary}
(skipping Scenarios~(\Sdb) and~(\Srd) when
they coincide with Scenarios~(\Sbd) and~(\Sdr)).
Some effort has been put into making the text less tedious,
e.g., by combining several proofs.
\fi
We substantiate the values in Table~\ref{tab:summary}
by presenting both lower and upper bounds for each entry.
The lower bounds are all obtained by presenting explicit
mappings $x \mapsto \bldu(x)$
and $f \mapsto \bldtheta(f)$ for which Eq.~(\ref{eq:TCAM}) holds
for the respective family~$\Phi$.
Admittedly, many of the optimal
\ifPAGELIMIT
    mappings, such as those that are used in the (omitted) proof of
    Proposition~\ref{prop:simple1} below,
    are rather straightforward;
\else
mappings are rather straightforward
(e.g., they are either standard binary representations
of their argument or kinds of unary representations, as in the proofs of
Propositions~\ref{prop:simple1}--\ref{prop:simple3} below);
\fi
in those cases, the contribution of this work lies mainly
in showing that these mappings are, in fact, the best possible.
Notable exceptions are the mappings for
Scenarios~(\Srd), (\Sdr), and~(\Srr)
when $\Phi = \Eset_q$, and for Scenarios~(\Sdd) and~(\Srd)
when $\Phi = \GEset_q \cup \LEset_q$
(and, perhaps, to some extent, Scenario~(\Sdd) when $\Phi = \GEset_q$).
\ifPAGELIMIT
    Due to space limitations, we elaborate in subsequent sections
    only on selected entries in Table~\ref{tab:summary}
    (in particular, we will skip the entries that correspond to
    $\Nset_q$).
    \ifANONYMOUS
        Some omitted proofs can be found in the appendices.
    \else
        The omitted parts and proofs can be found in the full
        version of this work.
    \fi

    We end our introduction by stating the following simple
    observation, which leads to some of the lower bounds in
    Table~\ref{tab:summary}.

    \begin{proposition}
    \label{prop:simple1}
    \label{prop:simple3}
    Under each of the scenarios in Table~\ref{tab:summary},
    any subset $\Phi \subseteq \Fset_q$ is $n$-cell implementable,
    whenever $q \le n$.
    Moreover, under Scenario~\textup{(\Sdr)} or~\textup{(\Srr)},
    such a subset is $n$-cell implementable, whenever $q \le 2n$.
    \end{proposition}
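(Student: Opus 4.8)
The plan is to produce, for each scenario, explicit mappings $\bldu$ and $\bldtheta$ realizing~(\ref{eq:TCAM}). Two preliminary reductions streamline this. First, since an implementation of $\Fset_q$ restricts to one of any $\Phi\subseteq\Fset_q$, it suffices to take $\Phi=\Fset_q$. Second, both claims are monotone in~$n$: a dummy cell is forced to output~$1$ by placing~$*$ in whichever argument of~$\TCAM$ the scenario permits (as $\TCAM(*,\cdot)=\TCAM(\cdot,*)=1$, and every scenario in Table~\ref{tab:summary} admits~$*$ in at least one argument). Hence it is enough to treat $n=q$ for the first claim and $q\in\{2n-1,2n\}$ for the second.

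First I would dispose of the case $q\le n$ with a \emph{single active cell} scheme: for each input~$x$, one cell is made sensitive to its state while every other cell outputs~$1$. The masking may be applied to either argument, yielding two dual constructions. In the first, set $u_j(x)=*$ for $j\ne x$ and $u_x(x)=0$, so the row output collapses to $\TCAM(0,\vartheta_x(f))$; taking $\vartheta_j(f)=0$ if $f(j)=1$ and $\vartheta_j(f)=1$ if $f(j)=0$ then returns $f(x)$ by the $u=0$ row of Table~\ref{tab:T}. Here $\bldu$ ranges over $\{0,*\}\subseteq\Boolean_*$ and $\bldtheta$ over $\Boolean$. The dual takes $\bldu(x)$ to be the one-hot vector ($u_j(x)=1$ iff $j=x$) and sets $\vartheta_j(f)=*$ if $f(j)=1$ and $\vartheta_j(f)=0$ if $f(j)=0$, so that $\bldu$ ranges over $\Boolean$ and $\bldtheta$ over $\{0,*\}\subseteq\Boolean_*$.

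The point tying these together is that every scenario in Table~\ref{tab:summary} permits~$*$ in at least one of the two arguments: Scenario~(\Sbd) is the only one forbidding~$*$ in $\bldu$, and Scenario~(\Sdb) the only one forbidding it in $\bldtheta$. As these two scenarios are distinct, no scenario forbids~$*$ in both arguments, so at least one of the two constructions above is admissible under each of the six. (The excluded Scenario~(\Sbb), with both arguments $*$-free, is exactly the case neither handles.)

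Finally, for Scenarios~(\Sdr) and~(\Srr) with $q\le 2n$, I would exploit the symbol~$\reject$ in $\bldtheta$ to let one cell serve two inputs. Pair the inputs as $\{2j,2j+1\}$ and route them to cell~$j$: set $u_j(x)=*$ for $j\ne\lfloor x/2\rfloor$ and $u_{\lfloor x/2\rfloor}(x)=x\bmod 2$, so the row output reduces to the single term $\TCAM(x\bmod 2,\vartheta_{\lfloor x/2\rfloor}(f))$. The crux is that, reading the $u=0$ and $u=1$ rows of Table~\ref{tab:T}, the four states $\vartheta\in\{*,0,1,\reject\}$ realize exactly the four output pairs $(\TCAM(0,\vartheta),\TCAM(1,\vartheta))=(1,1),(1,0),(0,1),(0,0)$. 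Thus letting $\vartheta_j(f)$ be the unique state realizing $(f(2j),f(2j+1))$ implements any~$f$ (an unpaired final input, when $q$ is odd, is handled by the single-cell rule above); here $\bldu$ ranges over $\Boolean_*$ and $\bldtheta$ over $\Boolean_\reject$, admissible under both~(\Sdr) and~(\Srr). I expect this four-state correspondence to be the one step with any real content: it is the pair $(0,0)$, reachable only via~$\reject$, that raises the per-cell capacity from one input to two and hence $q$ from $n$ to $2n$. The remaining obstacle is purely one of bookkeeping---checking that $\bldu$ depends only on~$x$ and $\bldtheta$ only on~$f$, which holds by inspection in all three constructions.
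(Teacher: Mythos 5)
Your proposal is correct and takes essentially the same route as the paper: for $q\le n$ you give the same two single-active-cell constructions (your one-hot scheme for Scenario~(\Sbd) is literally the paper's, and your $*$-masked scheme differs from the paper's only by swapping the roles of~$0$ and~$1$ in the active cell), and for $q\le 2n$ your pairing of inputs $\{2j,2j{+}1\}$ with the four-state correspondence $(1,1),(1,0),(0,1),(0,0)\leftrightarrow{*},0,1,\reject$ is exactly the paper's construction in~(\ref{eq:simple3u})--(\ref{eq:simple3theta}). The only cosmetic difference is that the paper extends the domain so that $q=n$ (resp.\ $q=2n$) exactly, whereas you add dummy cells and treat an unpaired input separately; both are routine.
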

\else
As part of our study of the whole set $\Fset_q$ in
Section~\ref{sec:Fset},
we also draw a connection between the subject of this paper
and that of computing the VC~dimension of Boolean monomials.

\subsection{Some simple observations}
\label{sec:simple}

We end our introduction by stating three simple observations
which will be useful in the sequel.

\begin{proposition}
\label{prop:simple1}
Under each of the scenarios in Table~\ref{tab:summary},
any subset $\Phi \subseteq \Fset_q$ is $n$-cell implementable, whenever
\[
q \le n .
\]
\end{proposition}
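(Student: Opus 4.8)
The plan is to first reduce the six tabulated scenarios to the two most restrictive ones, and then to give an explicit construction for each of those. The reduction rests on an obvious monotonicity: enlarging either of the two alphabets (the range of $\bldu$ or the range of $\bldtheta$) can only enlarge the class of $n$-cell implementable families, since the same pair of mappings remains admissible. Ordering the scenarios by componentwise inclusion of their $(\bldu,\bldtheta)$-alphabets (where $\Boolean\subset\Boolean_*\subset\Boolean_\reject$), the two minimal elements present in the table are Scenario~(\Sdb), with alphabets $(\Boolean_*,\Boolean)$, and Scenario~(\Sbd), with alphabets $(\Boolean,\Boolean_*)$; each of the remaining four scenarios dominates at least one of these. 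Hence it suffices to treat Scenario~(\Sbd) and Scenario~(\Sdb) for $q\le n$. In both I would take $n=q$ and pad any surplus cells $j\in[q:n\rangle$ with a permanently transparent setting (the input fixed to $*$ in Scenario~(\Sdb), the state fixed to $*$ in Scenario~(\Sbd)), which contributes $1$ to the conjunction~(\ref{eq:TCAMconjunction}) and is harmless.

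The main obstacle, and the reason the statement is not vacuous, is the enforced separation in~(\ref{eq:TCAM}): $\bldu$ may depend only on $x$ and $\bldtheta$ only on $f$, so one cannot simply tabulate $f$ in cells indexed by $x$. The way around it is to note what a single cell computes as a function of $x$ once its state is fixed. In Scenario~(\Sbd) a state $\vartheta_j=1$ turns cell~$j$ into $\TCAM(u_j(x),1)=u_j(x)$, a state $\vartheta_j=0$ into $\TCAM(u_j(x),0)=\neg u_j(x)$, and a state $\vartheta_j={*}$ makes it transparent; thus a row computes a conjunction of literals in the designer-chosen bits $u_j(x)$. I would take the complementary one-hot encoding $u_j(x)=\ldbrack x\ne j\rdbrack$ for $j\in[q\rangle$, and set $\vartheta_j(f)=1$ when $f(j)=0$ and $\vartheta_j(f)={*}$ when $f(j)=1$. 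The row then evaluates to $\bigwedge_{j\,:\,f(j)=0}\ldbrack x\ne j\rdbrack=\ldbrack f(x)=1\rdbrack=f(x)$. The $*$ symbol is indispensable here for deactivating the cells outside the zero-set of $f$, which is exactly why the all-binary Scenario~(\Sbb) is excluded from the table.

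For Scenario~(\Sdb) I would run the dual construction, now placing the $*$ symbol on the side of $\bldu$: set $u_j(x)=0$ when $x=j$ and $u_j(x)={*}$ when $x\ne j$, and $\vartheta_j(f)=1-f(j)\in\Boolean$, for $j\in[q\rangle$. For a fixed input $x$, every cell $j\ne x$ sees $u_j(x)={*}$ and outputs $1$, while cell~$x$ outputs $\TCAM(0,1-f(x))=\ldbrack f(x)=1\rdbrack=f(x)$, so the row again equals $f(x)$. I would emphasize that, although $\TCAM$ is symmetric in its two arguments, this symmetry does \emph{not} permit recycling the Scenario~(\Sbd) construction here: swapping the arguments would force the $x$-dependent mapping to take on the $f$-dependent values $\vartheta_j(f)$, breaking the separation of variables. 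The two constructions must therefore be supplied separately, after which the monotonicity of the first paragraph disposes of the remaining four scenarios.
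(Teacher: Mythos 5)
Your proof is correct and takes essentially the same approach as the paper's: the paper likewise gives a one-hot construction for Scenario~(\Sbd) (with $u_j(x) = \ldbrack x = j \rdbrack$ and $\vartheta_j(f) \in \{0,{*}\}$) and the dual construction ($u_j(x) \in \{1,{*}\}$, $\vartheta_j(f) = f(j)$) covering all remaining scenarios at once, and your two constructions differ from these only by complementing bits. Your explicit monotonicity reduction and transparent-cell padding are implicit in the paper's ``otherwise'' clause and in its arbitrary extension of the functions to the domain $[n\rangle$.
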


\begin{proof}
Given that $q \le n$, we can extend (arbitrarily)
the functions in $\Fset_q$ so that they are defined
over the (possibly larger) domain
$[n \rangle$ and, so, we assume that $q = n$.
For Scenario~(\Sbd), we take the mappings
$x \mapsto \bldu(x) = (u_j(x))_{j \in [n \rangle}$ and
$f \mapsto \bldtheta(f) = (\vartheta_j(f))_{j \in [n \rangle}$ to be
\[
u_j(x) =
\left\{
\arraycolsep0.8ex
\begin{array}{ll}
1   & \textrm{if $x = j$}   \\
0   & \textrm{if $x \ne j$}
\end{array}
\right.
\quad \textrm{and} \quad
\vartheta_j(f)
=
\left\{
\arraycolsep0.8ex
\begin{array}{ll}
{*} & \textrm{if $f(j) = 1$} \\
0   & \textrm{if $f(j) = 0$}
\end{array}
\right.
.
\]
Otherwise, we take them to be
\[
u_j(x) =
\left\{
\begin{array}{ll}
1   & \textrm{if $x = j$} \\
{*} & \textrm{if $x \ne j$}
\end{array}
\right.
\quad \textrm{and} \quad
\vartheta_j(f) = f(j) .
\]
\end{proof}

Taking the family
$\Phi = \Nset_q = \{ \N_q(\cdot, t) \}_{t \in [q \rangle}$
as an example, Table~\ref{tab:NEsetSbd} shows the respective
mappings $x \mapsto \bldu(x)$
and $\N_q(\cdot,t) \mapsto \bldtheta(\N_q(\cdot,t))$
under Scenario~(\Sbd).

\begin{table}[hbt]
\caption{Mappings for $\Phi = \Nset_q$ that attain $q = n$
under Scenario~(\Sbd).}
\label{tab:NEsetSbd}
\[
\begin{array}{c|c}
x      & \bldu(x)                      \\
\hline
\raisebox{0ex}[2ex]{}
0      & 10000             \ldots 0    \\
1      & 01000             \ldots 0    \\
2      & 00100             \ldots 0    \\
\vdots & \hspace{2ex}\ddots            \\
n{-}2  & 000              \ldots 010   \\
n{-}1  & 0000             \ldots 01
\end{array}
\quad\quad\quad
\begin{array}{c|c}
t      & \bldtheta(\N_q(\cdot,t))      \\
\hline
\raisebox{0ex}[2ex]{}
0      & 0{*}{*}{*}{*} \ldots {*}      \\
1      & {*}0{*}{*}{*} \ldots {*}      \\
2      & {*}{*}0{*}{*} \ldots {*}      \\
\vdots & \hspace{1ex} \ddots           \\
n{-}2  & {*}{*}{*}     \ldots {*}0{*} \\
n{-}1  & {*}{*}{*}{*}  \ldots {*}0    \\
\end{array}
\]
\end{table}

\begin{proposition}
\label{prop:simple2}
Under each of the scenarios in Table~\ref{tab:summary},
any subset $\Phi \subseteq \Fset_q$ is $n$-cell implementable, whenever
\[
|\Phi| \le n .
\]
\end{proposition}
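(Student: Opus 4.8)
The plan is to dualize the proof of Proposition~\ref{prop:simple1}: there one cell was dedicated to each point of the domain~$[q\rangle$, whereas here I would dedicate one cell to each member of~$\Phi$. Writing $\Phi = \{f_0, f_1, \ldots, f_{m-1}\}$ with $m = |\Phi| \le n$, I would arrange that coordinate~$j$ is \emph{responsible} for~$f_j$, in the sense that, when the programmed word is $\bldtheta(f_k)$, coordinate~$j$ outputs $f_k(x)$ if $j = k$ and the neutral value~$1$ if $j \ne k$. The conjunction in~(\ref{eq:TCAMconjunction}) then collapses to the single active term $f_k(x)$, as required. Any surplus coordinates $j \ge m$ would simply be neutralized in the same way.

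The mechanism for neutralization rests on the observation that $\TCAM(u,*) = 1$ for every~$u$ and $\TCAM(*,\vartheta) = 1$ for every~$\vartheta$ (the last column and last row of Table~\ref{tab:T}), so a single don't-care entry disables a cell. The one fact to verify up front is that in each of the six scenarios the symbol~$*$ is available on at least one side: it lies in the state alphabet in every scenario except~(\Sdb), and in the input alphabet in~(\Sdb). Accordingly I would split into the same two cases as in Proposition~\ref{prop:simple1}. When~$*$ is admissible for~$\bldtheta$, I would take $u_j(x) = f_j(x) \in \Boolean$ and program $\vartheta_j(f_k) = 1$ for $j = k$ and $\vartheta_j(f_k) = *$ otherwise; the diagonal cell reads $\TCAM(f_k(x),1) = f_k(x)$ and every other cell reads $\TCAM(\cdot,*) = 1$. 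For scenario~(\Sdb), where $*$ is available only to~$\bldu$, I would use the dual assignment $u_j(x) = {*}$ if $f_j(x)=1$ and $u_j(x)=0$ otherwise, with $\vartheta_j(f_k)=1$ for $j=k$ and $\vartheta_j(f_k)=0$ otherwise; now the diagonal cell reads $1$ exactly when $u_k(x)={*}$, i.e.\ when $f_k(x)=1$, while each off-diagonal cell reads $\TCAM(u_j(x),0)=1$ because $u_j(x)\in\{0,{*}\}$.

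I do not anticipate a real obstacle: the encodings are explicit and the verification is a direct reading of Table~\ref{tab:T}. The only place that calls for care is the bookkeeping that these two assignments---between them---cover all six columns of Table~\ref{tab:summary} with alphabets respecting each scenario's constraints, together with the neutralization of the surplus coordinates $j \ge m$ (setting $\vartheta_j \equiv {*}$ in the first case, and $u_j \equiv 0$, $\vartheta_j \equiv 0$ in the second, each of which yields a constant-$1$ cell). These checks are routine, so the real content of the argument is the one-cell-per-function idea itself.
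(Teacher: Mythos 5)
Your proposal is correct and is essentially identical to the paper's own proof: the two constructions you give---$u_j(x)=f_j(x)$ with $\vartheta_j(f_k)=1$ for $j=k$ and ${*}$ otherwise when ${*}$ is available to $\bldtheta$, and the dual assignment $u_j(x)\in\{0,{*}\}$ encoding $f_j(x)$ with $\vartheta_j(f_k)\in\{0,1\}$ for Scenario~(\Sdb)---are exactly the paper's two mappings, including the same case split and the same one-cell-per-function idea. The paper handles the surplus coordinates by simply assuming $|\Phi|=n$ (padding $\Phi$ with arbitrary extra functions), which is equivalent to your explicit neutralization of the cells with $j \ge m$.
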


\begin{proof}
Again, we can assume that $|\Phi| = n$.
Writing $\Phi = \{ f_1, f_2, \ldots, f_{|\Phi|} \}$,
for Scenario~(\Sdb), we take the mappings
$x \mapsto \bldu(x)$ and $f \mapsto \bldtheta(f)$
to be
\[
u_j(x) =
\left\{
\arraycolsep0.8ex
\begin{array}{ll}
{*} & \textrm{if $f_j(x) = 1$} \\
0   & \textrm{if $f_j(x) = 0$}
\end{array}
\right.
\;\; \textrm{and} \;\;
\vartheta_j(f)
=
\left\{
\arraycolsep0.8ex
\begin{array}{ll}
1   & \textrm{if $f =   f_j$} \\
0   & \textrm{if $f \ne f_j$}
\end{array}
\right.
.
\]
Otherwise, we take them to be
\[
u_j(x) = f_j(x)
\quad \textrm{and} \quad
\vartheta_j(f)
=
\left\{
\begin{array}{ll}
1   & \textrm{if $f   = f_j$} \\
{*} & \textrm{if $f \ne f_j$}
\end{array}
\right.
.
\]
\end{proof}

\begin{proposition}
\label{prop:simple3}
Under each of the scenarios~\textup{(\Sdr)} or~\textup{(\Srr)},
any subset $\Phi \subseteq \Fset_q$ is
$n$-cell implementable, whenever
\[
q \le 2n .
\]
\end{proposition}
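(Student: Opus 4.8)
The plan is to reduce to the case $q = 2n$ (if $q < 2n$, extend the functions in $\Phi$ arbitrarily to the larger domain $[2n\rangle$, exactly as in the proof of Proposition~\ref{prop:simple1}) and then to pack \emph{two} domain elements into each of the $n$ cells, rather than the one-element-per-cell encoding of Proposition~\ref{prop:simple1}. Concretely, I would assign to cell $j \in [n\rangle$ the pair of inputs $\{2j,\,2j{+}1\}$ and set
\[
u_j(x) =
\begin{cases}
0   & \textrm{if } x = 2j   \\
1   & \textrm{if } x = 2j{+}1 \\
{*} & \textrm{otherwise.}
\end{cases}
\]
Thus $u_j(x) = {*}$ whenever $x$ is not one of the two inputs handled by cell~$j$, and by the last row of Table~\ref{tab:T} such a cell contributes the neutral value $\TCAM({*},\vartheta_j) = 1$ to the conjunction in~(\ref{eq:TCAM}), so it does not constrain the output.

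The crux is to choose, for the unique active cell $j = \lfloor x/2 \rfloor$, a single state $\vartheta_j(f) \in \Boolean_\reject$ that simultaneously yields the two required output values $f(2j)$ and $f(2j{+}1)$. Reading off the rows $u = 0$ and $u = 1$ of Table~\ref{tab:T} gives the pair $\bigl(\TCAM(0,\vartheta),\,\TCAM(1,\vartheta)\bigr)$ for each of the four admissible states:
\[
\reject \mapsto (0,0), \quad
0 \mapsto (1,0), \quad
1 \mapsto (0,1), \quad
{*} \mapsto (1,1) .
\]
This correspondence is a bijection onto $\Boolean^2$, so I would let $\vartheta_j(f)$ be the (unique) state whose associated pair equals $\bigl(f(2j),\,f(2j{+}1)\bigr)$, realizing both values at once. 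Note that the state alphabet $\Boolean_\reject$ is used in full---the value $\reject$ is precisely what realizes the pair $(0,0)$---while the input alphabet stays within $\Boolean_* = \{0,1,{*}\}$; this is permitted under both Scenario~\textup{(\Sdr)} and Scenario~\textup{(\Srr)}.

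Finally I would verify~(\ref{eq:TCAM}) directly: for any $x$, every inactive cell outputs~$1$, and the active cell $j = \lfloor x/2 \rfloor$ outputs $\TCAM(u_j(x),\vartheta_j(f))$, which by the displayed correspondence equals $f(2j) = f(x)$ when $x$ is even and $f(2j{+}1) = f(x)$ when $x$ is odd; hence the conjunction over all cells equals $f(x)$. I do not expect any genuine obstacle here---the only real idea is the observation that the four states in $\Boolean_\reject$ hit all four output patterns on the two inputs $\{0,1\}$, which is exactly what doubles the capacity from $n$ (Proposition~\ref{prop:simple1}) to $2n$.
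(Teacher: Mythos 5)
Your construction is correct and is exactly the paper's own proof: the paper also reduces to $q = 2n$, pairs inputs $\{2j, 2j{+}1\}$ into cell~$j$ with the identical mapping $x \mapsto \bldu(x)$ (see~(\ref{eq:simple3u})), and selects $\vartheta_j(f)$ via the same four-state correspondence $\reject, 0, 1, {*} \mapsto (0,0), (1,0), (0,1), (1,1)$ as in~(\ref{eq:simple3theta}). Your write-up merely makes explicit the verification that the paper leaves implicit; no differences of substance.
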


\begin{proof}
Assuming that $q = 2n$, we take the mappings
$x \mapsto \bldu(x)$ and $f \mapsto \bldtheta(f)$ to be
\begin{equation}
\label{eq:simple3u}
u_j(x) =
\left\{
\begin{array}{lcl}
0   && \textrm{if $x = 2j$}   \\
1   && \textrm{if $x = 2j+1$} \\
{*} && \textrm{otherwise}
\end{array}
\right.
\phantom{.}
\end{equation}
and
\begin{equation}
\label{eq:simple3theta}
\vartheta_j(f) =
\left\{
\begin{array}{lcl}
{*}     && \textrm{if $f(2j) = f(2j{+}1) = 1$}         \\
0       && \textrm{if $f(2j) = 1$ and $f(2j{+}1) = 0$} \\
1       && \textrm{if $f(2j) = 0$ and $f(2j{+}1) = 1$} \\
\reject && \textrm{if $f(2j) =  f(2j{+}1) = 0$}
\end{array}
\right.
.
\end{equation}
\end{proof}
\fi

\section{The family $\Eset_q$}
\label{sec:Eset}

In this section, we consider the family $\Phi = \Eset_q$.
Since each function $x \mapsto \E_q(x,t) \in \Eset_q$
can be identified by the parameter $t \in [q \rangle$,
we will use for convenience the notation $t \mapsto \bldtheta(t)$
for $\E_q(\cdot,t) \mapsto \bldtheta(\E_q(\cdot,t))$.
Due to the symmetry $\E_q(x,t) = \E_q(t,x)$,
it will suffice to state our results only for
Scenarios~(\Sbd), (\Sdd), (\Sdr), and~(\Srr).
\ifPAGELIMIT
\else

The next proposition handles the case $\Phi = \Eset_q$
under Scenarios~(\Sbd) and~(\Sdd).
\fi

\begin{proposition}
\label{prop:EsetSbdSdd}
Under each of the scenarios~\textup{(\Sbd)} or~\textup{(\Sdd)},
the family $\Eset_q$ is $n$-cell implementable, \ifandonlyif\
\ifPAGELIMIT
    $q \le 2^n$.
\else
\[
q \le 2^n .
\]
\fi
\end{proposition}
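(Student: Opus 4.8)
The plan is to prove both inequalities. For sufficiency ($q \le 2^n$) I would exhibit explicit mappings: take both $\bldu(x)$ and $\bldtheta(t)$ to be the length-$n$ binary representations of $x$ and $t$. Since these words lie in $\Boolean^n$, each cell computes $\TCAM(u_j(x),\vartheta_j(t))$ with both arguments in $\Boolean$, where (from Table~\ref{tab:T}) $\TCAM$ reduces to the equality test $\ldbrack u_j(x) = \vartheta_j(t)\rdbrack$; the conjunction over $j$ is therefore $\ldbrack x = t\rdbrack = \E_q(x,t)$. As these mappings use only binary symbols, they are admissible under both Scenario~(\Sbd) and Scenario~(\Sdd), establishing implementability whenever $q \le 2^n$.

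For necessity I would first record the behaviour of a single cell in these scenarios, where the symbol $\reject$ is absent: for $u,\vartheta \in \Boolean_*$ we have $\TCAM(u,\vartheta) = 0$ exactly when $\{u,\vartheta\} = \{0,1\}$. Under Scenario~(\Sbd) the argument is then immediate: $\bldu$ maps $[q\rangle$ into $\Boolean^n$, and if $\bldu(x) = \bldu(x')$ for some $x \ne x'$, then $\E_q(x,t) = \TCAM(\bldu(x),\bldtheta(t)) = \TCAM(\bldu(x'),\bldtheta(t)) = \E_q(x',t)$ for every $t$, contradicting $\E_q(x,x) = 1 \ne 0 = \E_q(x',x)$. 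Hence $\bldu$ is injective and $q \le |\Boolean^n| = 2^n$.

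The main obstacle is Scenario~(\Sdd), where $\bldu$ ranges over $\Boolean_*^n$, so injectivity of $\bldu$ alone yields only $q \le 3^n$. The key step is to show that wildcards in the input are useless for equality, i.e.\ that any implementation can be converted into one in which $\bldu$ takes values in $\Boolean^n$, after which the argument above applies verbatim. I would do this by a one-coordinate-at-a-time replacement: if $u_{j_0}(x_0) = {*}$, replace it by the bit $c := \vartheta_{j_0}(x_0)$ when the latter lies in $\Boolean$, and by $c := 0$ when $\vartheta_{j_0}(x_0) = {*}$. Replacing a $*$ by a bit can only change a cell output from $1$ to $0$, hence can only decrease $\TCAM(\bldu(x),\bldtheta(t))$; off the diagonal this conjunction is already $0$ and stays $0$, while on the diagonal the choice of $c$ guarantees $\TCAM(c,\vartheta_{j_0}(x_0)) = 1$, so the entry $(x_0,x_0)$ is preserved. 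Iterating over all wildcard positions removes every $*$ from $\bldu$.

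I expect the delicate point to be confirming that this coordinatewise surgery never damages a correct entry: only one coordinate of the single input word $\bldu(x_0)$ changes, so no diagonal entry other than $(x_0,x_0)$ is touched, and the only entries that could move are off-diagonal ones (which are allowed, indeed required, to become $0$) and $(x_0,x_0)$ (which the choice of $c$ protects). Making this bookkeeping precise—that outputs only decrease, the diagonal remains all-ones, and an implementation of $\Eset_q$ is exactly characterized by the identity pattern—is the crux; the binary-representation construction and the injectivity of $\bldu$ are then routine.
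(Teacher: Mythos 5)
Your proposal is correct and takes essentially the same route as the paper: the binary-representation construction for sufficiency, the injectivity argument for Scenario~(\Sbd), and your coordinatewise replacement of each $*$ in $\bldu(x)$ by $\vartheta_j(x)$ (or by $0$ when both entries are $*$) is precisely the paper's transformation $\bldu \mapsto \widehat{\bldu}$, carried out one entry at a time instead of all at once. The bookkeeping you identify as the crux is settled in the paper by the same two observations you make---replacing $*$ by a bit can only decrease cell outputs, and the chosen bit keeps every diagonal entry equal to~$1$---so there is no gap.
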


\ifPAGELIMIT
    Sufficiency follows by letting $\bldu(x)$ and $\bldtheta(t)$ be
    the length-$n$ binary representations of~$x$ and~$t$,
    respectively.\footnote{%
    \label{footnote:Eset}
    Such mappings suit also Scenario~(\Sbb).}
\else
\begin{proof}
Sufficiency follows by letting $\bldu(x)$ and $\bldtheta(t)$ be
the length-$n$ binary representations of~$x$ and~$t$,
respectively.\footnote{%
\label{footnote:Eset}
Such mappings suit also Scenario~(\Sbb).}

Necessity under Scenario~(\Sbd) follows from the fact
that the mapping $x \mapsto \bldu(x)$ must be injective
(otherwise, if $\bldu(x_0) = \bldu(x_1)$ for $x_0 \ne x_1$,
then Eq.~(\ref{eq:TCAM})
would imply $\E_q(x_0,x_1) = \E_q(x_1,x_1) = 1$, which is
a contradiction).

As for Scenario~(\Sdd), suppose that Eq.~(\ref{eq:TCAM}) holds
for some $x \mapsto \bldu(x)$ and $t \mapsto \bldtheta(t)$.
Define the mapping
$x \mapsto \widehat{\bldu}(x) = (\widehat{u}_j(x))_{j \in [n \rangle}$
as follows:
\[
\widehat{u}_j(x) =
\left\{
\begin{array}{ccl}
u_j(x)         && \textrm{if $u_j(x) \in \Boolean$} \\
\vartheta_j(x) &&
         \textrm{if $u_j(x) = {*}$ and $\vartheta_j(x) \in \Boolean$} \\
0              && \textrm{if $u_j(x) = \vartheta_j(x) = {*}$}
\end{array}
\right.
.
\]
It readily follows that
$\TCAM(\widehat{\bldu}(x),\bldtheta(x))
= \TCAM(\bldu(x),\bldtheta(x)) = 1$
for every $x \in [q \rangle$. On the other hand, since
$\widehat{\bldu}(x)$ is obtained from $\bldu(x)$ by
(possibly) changing some entries from~$*$ into elements of~$\Boolean$,
then $\TCAM(\bldu(x),\bldtheta(t)) = 0$ implies
$\TCAM(\widehat{\bldu}(x),\bldtheta(t)) = 0$.
Hence, $\E_q(x,t) = \TCAM(\widehat{\bldu}(x),\bldtheta(t))$
for all $x, t \in [q \rangle$, thereby reducing to Scenario~(\Sbd).
\end{proof}
\fi

The family $\Phi = \Eset_q$
under Scenarios~(\Sdr) and~(\Srr) will be treated next,
yet, to this end, we will need some definitions
and two lemmas.

We introduce the following partial ordering, $\preceq$, on
the elements of $\Boolean_\reject$:
\[
\reject \;\;\preceq\;\; 0, 1  \;\;\preceq\;\; {*}
\]
(with no ordering defined between~$0$ and~$1$),
and extend it to words $\bldv, \bldv' \in \Boolean_\reject^n$,
with $\bldv \preceq \bldv'$ if the relation holds componentwise.

A \emph{(maximal) chain} over $\Boolean_*^n$ is a list of $n+1$ words
in $\Boolean_*^n$,
\ifPAGELIMIT
    $(\bldv_0, \bldv_1, \ldots, \bldv_n)$,
\else
\[
(\bldv_0, \bldv_1, \ldots, \bldv_n) ,
\]
\fi
where $\bldv_0$ is a word in $\Boolean^n$
and for each $i \in [n \rangle$,
the word $\bldv_{i+1}$ is obtained from $\bldv_i$ by changing one
non-$*$ entry into a~$*$; thus, $\bldv_n$ is always the all-$*$
\ifPAGELIMIT
    word.
\else
word.\footnote{The definition of a chain
can also be extended to $\Boolean_\reject^n$,
but our analysis below will use a shortcut that will not require
this extension.}
\fi
It is rather easy to see that
the number of chains over $\Boolean_*^n$ is $2^n \cdot n!$.

An \emph{antichain} in $\Boolean_*^n$
(respectively, $\Boolean_\reject^n$) is
a subset~$\Antichain$ of $\Boolean_*^n$
(respectively, $\Boolean_\reject^n$)
such that $\bldv \npreceq \bldv'$ for any two distinct words
$\bldv, \bldv' \in \Antichain$.

For a word $\bldv \in \Boolean_\reject^n$
and a symbol $\sigma \in \Boolean_\reject$,
we denote by $\weight_\sigma(\bldv)$ the number of occurrences
of the symbol~$\sigma$ in~$\bldv$.
For $w \in [0:n]$, let
\[
\Boolean_*^n(w) =
\bigl\{ \bldv \in \Boolean_*^n \,:\,
\weight_0(\bldv) + \weight_1(\bldv) = w \bigr\} .
\]

The following lemma is a generalization
of Sperner's theorem~\cite[p.~2]{Anderson}.

\begin{lemma}
\label{lem:SpernerSdr}
Let $\Antichain$ be an antichain in $\Boolean_*^n$. Then
\[
|\Antichain| \le
\binom{n}{\lfloor n/3 \rfloor} \cdot 2^{\lceil 2n/3 \rceil} ,
\]
with equality attained by the set
$\Boolean_*^n (\lceil 2n/3 \rceil)$.
\end{lemma}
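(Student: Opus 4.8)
The plan is to adapt the standard Lubell--Yamamoto--Meshalkin (LYM) chain-counting proof of Sperner's theorem to the present setting, exploiting the maximal chains over $\Boolean_*^n$ that were introduced above (of which there are $2^n\cdot n!$). Throughout, write $\weight(\bldv) = \weight_0(\bldv)+\weight_1(\bldv)$ for the number of non-$*$ entries of $\bldv$, so that $\bldv\in\Boolean_*^n(\weight(\bldv))$ and $|\Boolean_*^n(w)| = \binom{n}{w}2^w$ (choose the $w$ non-$*$ positions, then their $\Boolean$-values).

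First I would count, for a fixed word $\bldv$ with $\weight(\bldv)=w$, how many maximal chains pass through it. A chain $(\bldv_0,\ldots,\bldv_n)$ contains $\bldv$ exactly when $\bldv=\bldv_{n-w}$ (since $\bldv_i$ has $n-i$ non-$*$ entries). The lower segment $\bldv_0,\ldots,\bldv_{n-w}$ is obtained by first assigning arbitrary $\Boolean$-values to the $n-w$ positions that $\bldv$ leaves as~$*$ (there are $2^{n-w}$ such assignments, the other $w$ positions being forced to agree with $\bldv$) and then choosing the order in which these $n-w$ positions are turned into~$*$ (there are $(n-w)!$ orders); the upper segment $\bldv_{n-w},\ldots,\bldv_n$ is obtained by choosing the order in which the remaining $w$ non-$*$ entries are turned into~$*$ (there are $w!$ orders). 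Hence the number of chains through $\bldv$ is $2^{n-w}(n-w)!\,w!$. As a sanity check, summing this over all $\bldv\in\Boolean_*^n(w)$ gives $\binom{n}{w}2^w\cdot 2^{n-w}(n-w)!\,w! = 2^n n!$, the total chain count, as it must since every chain meets level~$w$ exactly once.

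Next, since any two words on a common chain are $\preceq$-comparable, each maximal chain contains at most one element of the antichain $\Antichain$, so the chain-sets of distinct elements of $\Antichain$ are disjoint. Summing the per-word chain counts over $\Antichain$ and dividing by the total $2^n n!$ yields the LYM-type inequality
\[
\sum_{\bldv\in\Antichain}\frac{1}{|\Boolean_*^n(\weight(\bldv))|}
= \sum_{\bldv\in\Antichain}\frac{(n-\weight(\bldv))!\,\weight(\bldv)!}{2^{\weight(\bldv)}\,n!}\le 1 .
\]
To finish, I would bound $|\Boolean_*^n(w)| = \binom{n}{w}2^w$ from above by its maximum over~$w$. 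The ratio $|\Boolean_*^n(w{+}1)|/|\Boolean_*^n(w)| = 2(n-w)/(w+1)$ is at least~$1$ exactly when $w\le(2n-1)/3$, so the maximum is attained at $w=\lceil 2n/3\rceil$, where $\binom{n}{\lceil 2n/3\rceil}=\binom{n}{\lfloor n/3\rfloor}$ gives the value $M:=\binom{n}{\lfloor n/3\rfloor}2^{\lceil 2n/3\rceil}$. Replacing each denominator by~$M$ in the inequality gives $|\Antichain|/M\le 1$, i.e.\ $|\Antichain|\le M$. Equality holds for $\Boolean_*^n(\lceil 2n/3\rceil)$: words of equal weight are pairwise incomparable (any strict $\preceq$-relation turns some non-$*$ entry into~$*$ and hence strictly decreases the weight), so this level set is an antichain, and it has exactly $M$ elements.

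The routine parts are the two combinatorial counts and the ratio test; the one step demanding care is the per-word chain count $2^{n-w}(n-w)!\,w!$, where one must correctly account for the $2^{n-w}$ free $\Boolean$-values on the positions lying below~$\bldv$. This factor is precisely what distinguishes the bound from ordinary Sperner (where the analogous count is $(n-w)!\,w!$); it is responsible for the $2^{\lceil 2n/3\rceil}$ factor and for the shift of the optimal level from $n/2$ to $2n/3$.
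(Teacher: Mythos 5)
Your proof is correct and takes essentially the same route as the paper's, which likewise adapts Lubell's chain-counting (LYM) argument: the same per-word chain count $2^{n-w}\,(n-w)!\,w!$, the same normalized inequality $\sum_{w}|\Antichain(w)|/\bigl(2^w\binom{n}{w}\bigr)\le 1$, and the same maximization of $2^w\binom{n}{w}$ at $w=\lceil 2n/3\rceil$. The only differences are presentational: you derive the chain count and verify the maximizing level via the ratio test, steps the paper asserts without detail.
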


\begin{proof}
We adapt Lubell's proof of Sperner's theorem (in~\cite{Lubell})
to our setting.
Let~$\Antichain$ be an antichain and
for $w \in [0:n]$, let
$\Antichain(w) = \Antichain \cap \Boolean_*^n(w)$.
We first observe that any word $\bldv \in \Antichain(w)$
is contained in exactly
\[
2^{n-w} \cdot w! (n-w)!
\]
chains over $\Boolean_*^n$
and none of the other words in those chains belongs to $\Antichain$.
Hence,
\[
\sum_{w \in [0:n]}
2^{n-w} \cdot w! (n-w)! \cdot |\Antichain(w)| \le 2^n \cdot n! ,
\]
which can also be written as
\begin{equation}
\label{eq:LYM}
\sum_{w \in [0:n]}
\frac{|\Antichain(w)|}{\beta(w)} \le 1 ,
\end{equation}
where
\[
\beta(w) = 2^w \cdot \binom{n}{w} .
\]
The latter expression is maximized
when $w = w_{\max} = \lceil 2n/3 \rceil$ and, so,
\ifPAGELIMIT
    \begin{eqnarray*}
    |\Antichain|
    & = &
    \sum_{w \in [0:n]} |\Antichain(w)| 
    \; \le \;
    \sum_{w \in [0:n]}
    \frac{\beta(w_{\max})}{\beta(w)} \cdot |\Antichain(w)| \\
    & \stackrel{\textrm{(\ref{eq:LYM})}}{\le} &
    \beta(w_{\max}) =
    \binom{n}{\lfloor n/3 \rfloor} \cdot 2^{\lceil 2n/3 \rceil} .
    \end{eqnarray*}
\else
\begin{eqnarray*}
|\Antichain|
& = &
\sum_{w \in [0:n]} |\Antichain(w)| \\
& \le &
\sum_{w \in [0:n]}
\frac{\beta(w_{\max})}{\beta(w)} \cdot |\Antichain(w)| \\
& \stackrel{\textrm{(\ref{eq:LYM})}}{\le} &
\beta(w_{\max}) =
\binom{n}{\lfloor n/3 \rfloor} \cdot 2^{\lceil 2n/3 \rceil} .
\end{eqnarray*}
\fi
The inequalities become equalities when
$\Antichain = \Boolean_*^n( \lceil 2n/3 \rceil)$
which is, indeed, an antichain.
\end{proof}

The counterpart of Lemma~\ref{lem:SpernerSdr}
for $\Boolean_\reject^n$ takes the following form.

\begin{lemma}
\label{lem:SpernerSrr}
Let $\Antichain$ be an antichain in $\Boolean_\reject^n$. Then
\[
|\Antichain| \le \binom{2n}{n} ,
\]
with equality attained by the set
\[
\Boolean_\circledast^n =
\bigl\{
\bldv \in \Boolean_\reject^n \,:\,
\weight_\reject(\bldv) = \weight_*(\bldv)
\bigr\}
\ifPAGELIMIT
    .
\fi
\]
\ifPAGELIMIT
\else
(namely, $\Boolean_\circledast^n$ is
the set of all words in $\Boolean_\reject^n$ in which
the symbols~$\reject$ and $*$ have the same count).
\fi
\end{lemma}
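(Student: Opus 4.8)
The plan is to sidestep any notion of a chain over $\Boolean_\reject^n$ (this is the shortcut alluded to in the footnote above) and instead reduce the statement to the ordinary Sperner theorem on a Boolean lattice. The key observation is that the four-element poset $(\Boolean_\reject,\preceq)$---with $\reject$ at the bottom, ${*}$ at the top, and $0,1$ incomparable in between---is isomorphic to the Boolean lattice $\{0,1\}^2$ under the bijection
\[
\phi:\quad
\reject \mapsto (0,0),\quad
0 \mapsto (1,0),\quad
1 \mapsto (0,1),\quad
{*} \mapsto (1,1) .
\]
One checks directly that $\sigma \preceq \sigma'$ in $\Boolean_\reject$ \ifandonlyif\ $\phi(\sigma) \le \phi(\sigma')$ componentwise; in particular $\phi(0)$ and $\phi(1)$ are incomparable, matching the incomparability of $0$ and $1$.

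Next I would extend $\phi$ coordinatewise to a bijection $\widehat{\phi} : \Boolean_\reject^n \rightarrow \{0,1\}^{2n}$. Since both $\preceq$ on $\Boolean_\reject^n$ and the order on $\{0,1\}^{2n}$ are defined componentwise, $\widehat{\phi}$ is a poset isomorphism: $\bldv \preceq \bldv'$ holds exactly when $\widehat{\phi}(\bldv) \le \widehat{\phi}(\bldv')$. Consequently, a subset $\Antichain \subseteq \Boolean_\reject^n$ is an antichain precisely when $\widehat{\phi}(\Antichain)$ is an antichain in the Boolean lattice on $2n$ elements, and $|\widehat{\phi}(\Antichain)| = |\Antichain|$. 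Sperner's theorem~\cite{Anderson} then bounds every such antichain by the size of the middle layer, namely $\binom{2n}{\lfloor 2n/2 \rfloor} = \binom{2n}{n}$, which is exactly the claimed bound.

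For the equality case I would compute the rank (number of $1$'s) of $\widehat{\phi}(\bldv)$ in $\{0,1\}^{2n}$: each $\reject$ contributes $0$, each of $0,1$ contributes $1$, and each ${*}$ contributes $2$, so the rank equals $\weight_0(\bldv) + \weight_1(\bldv) + 2\weight_*(\bldv)$. Using the identity $\weight_\reject(\bldv) + \weight_0(\bldv) + \weight_1(\bldv) + \weight_*(\bldv) = n$, this simplifies to $n - \weight_\reject(\bldv) + \weight_*(\bldv)$. Hence $\widehat{\phi}(\bldv)$ lies in the middle layer (rank $n$) \ifandonlyif\ $\weight_\reject(\bldv) = \weight_*(\bldv)$, i.e.\ $\bldv \in \Boolean_\circledast^n$. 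Thus $\widehat{\phi}$ carries $\Boolean_\circledast^n$ bijectively onto the extremal middle layer, so $\Boolean_\circledast^n$ is an antichain of the maximal size $\binom{2n}{n}$.

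As for difficulty, there is no genuine obstacle once the isomorphism $\phi$ is spotted---the whole argument is a reduction to classical Sperner, and the only mildly fiddly point is the bookkeeping that identifies the middle layer with $\Boolean_\circledast^n$ via the rank formula. The reason this route is cleaner here than in Lemma~\ref{lem:SpernerSdr} is that $\Boolean_\reject$ is itself a Boolean lattice $\{0,1\}^2$, whereas $\Boolean_*$ (with $0,1$ minimal and ${*}$ maximal) is not; that is precisely why the earlier proof needed a weighted Lubell/LYM count rather than a direct reduction.
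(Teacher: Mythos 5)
Your proposal is correct and takes essentially the same route as the paper: the paper's proof likewise encodes $\Boolean_\reject$ into $\Boolean^2$ by an order-preserving bijection ($\lambda(\reject)=00$, $\lambda(0)=01$, $\lambda(1)=10$, $\lambda({*})=11$), extends it coordinatewise to a poset isomorphism onto $\Boolean^{2n}$, applies classical Sperner, and identifies the extremal middle layer with $\Boolean_\circledast^n$ exactly as your rank computation does. The only difference is the immaterial labeling of the two coordinates assigned to $0$ and $1$.
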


\ifPAGELIMIT
\else
\begin{proof}
Consider the following mapping
$\lambda : \Boolean_\reject \rightarrow \Boolean^2$:
\[
\lambda(\reject) = 00 , \quad
\lambda(0) = 01 , \quad
\lambda(1) = 10 , \quad
\lambda({*}) = 11 .
\]
We extend this definition to a mapping from words
$\bldv = (v_j)_{j \in [n \rangle} \in \Boolean_\reject^n$
to words in $\Boolean^{2n}$ by
\[
\lambda(\bldv) =
\lambda(v_0) \|
\lambda(v_1) \|
\ldots \|
\lambda(v_{n-1})
\]
(with $\|$ denoting concatenation)
and, accordingly, from subsets of $\Boolean_\reject^n$
to subsets of $\Boolean^{2n}$. Clearly, $\lambda$
is a bijection under all these settings.
Moreover, a subset $\Antichain \subseteq \Boolean_\reject^n$
is an antichain in $\Boolean_\reject^n$, \ifandonlyif\
$\lambda(\Antichain)$ is an antichain in $\Boolean^{2n}$
under the ordering $0 \le 1$ on the elements of $\Boolean$;
namely, there are no two distinct words $\bldv, \bldv' \in \Antichain$
such that $\lambda(\bldv) \le \lambda(\bldv')$ (componentwise).
Hence, by Sperner's theorem we have,
for every antichain~$\Antichain$ in $\Boolean_\reject^n$,
\[
|\Antichain| = |\lambda(\Antichain)| \le \binom{2n}{n} ,
\]
with equality holding when~$\Antichain$ is the following
subset of $\Boolean_\reject^n$ (see~\cite[p.~2]{Anderson}):
\[
\bigl\{
\bldv \in \Boolean_\reject^n \,:\,
\weight_0(\lambda(\bldv)) = \weight_1(\lambda(\bldv)) = n \bigr\} .
\]
One can easily verify that this set is identical
to~$\Boolean_\circledast^n$.
\end{proof}

We now turn to stating our result
for the family $\Phi = \Eset_q$ under
Scenarios~(\Sdr) and~(\Srr).
\fi

\begin{proposition}
\label{prop:EsetSdrSrr}
The following holds for the subset $\Eset_q$.
\begin{list}{}{\settowidth{\labelwidth}{\textit{(a)}}}
\item[(a)]
Under Scenario~\textup{(\Sdr)},
it is $n$-cell implementable, \ifandonlyif
\[
q \le \binom{n}{\lfloor n/3 \rfloor} \cdot 2^{\lceil 2n/3 \rceil} .
\]
\item[(b)]
Under Scenario~\textup{(\Srr)},
it is $n$-cell implementable, \ifandonlyif
\[
q \le \binom{2n}{n} .
\]
\end{list}
\end{proposition}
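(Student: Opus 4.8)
The plan rests on a convenient reformulation of the match condition. Let $c : \Boolean_\reject \rightarrow \Boolean_\reject$ be the involution that swaps $\reject$ and ${*}$ while fixing $0$ and $1$, extended to words coordinatewise. Inspecting Table~\ref{tab:T} shows that, for $u,\vartheta \in \Boolean_\reject$, one has $\TCAM(u,\vartheta)=1$ \ifandonlyif\ $c(u) \preceq \vartheta$; taking the conjunction in~(\ref{eq:TCAMconjunction}), this lifts to words as $\TCAM(\bldv,\bldw)=1$ exactly when $c(\bldv)\preceq\bldw$ componentwise. Since $c$ reverses $\preceq$ and is an involution, the relation is symmetric in $\bldv,\bldw$, and $\TCAM(\cdot,\bldw)$ is nondecreasing in its first argument with respect to $\preceq$. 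These two facts are essentially all I would need, and they unify the two scenarios.

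For necessity (the ``only if'' direction) I would suppose $\Eset_q$ is $n$-cell implementable via $x\mapsto\bldu(x)$ and $t\mapsto\bldtheta(t)$, and show that $x\mapsto\bldu(x)$ is injective with image an antichain. Indeed, if $\bldu(x)\preceq\bldu(t)$ for some $x\ne t$, then monotonicity applied to the diagonal identity $\TCAM(\bldu(x),\bldtheta(x))=1$ gives $\TCAM(\bldu(t),\bldtheta(x))\ge\TCAM(\bldu(x),\bldtheta(x))=1$, contradicting $\E_q(t,x)=0$. Hence $\{\bldu(x)\}_{x\in[q\rangle}$ is an antichain of size~$q$: it lies in $\Boolean_*^n$ under Scenario~(\Sdr) and in $\Boolean_\reject^n$ under Scenario~(\Srr), since those are the prescribed ranges of~$\bldu$. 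Applying Lemma~\ref{lem:SpernerSdr} in case~(a) and Lemma~\ref{lem:SpernerSrr} in case~(b) yields exactly the stated upper bounds on~$q$.

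For sufficiency I would run the construction in reverse. Given $q$ at most the stated bound, I pick an antichain $\{\bldu(x)\}_{x\in[q\rangle}$ of size~$q$ in the relevant poset (a subset of $\Boolean_*^n(\lceil 2n/3\rceil)$ for~(a), or of $\Boolean_\circledast^n$ for~(b), both of which exist by the two lemmas), and set $\bldtheta(t)=c(\bldu(t))$. The range is correct: $c$ maps $\Boolean_*^n$ into $\{0,1,\reject\}^n\subseteq\Boolean_\reject^n$ and maps $\Boolean_\reject^n$ into itself, so $\bldtheta(t)\in\Boolean_\reject^n$ as required under both scenarios. Finally, by the word-level characterization, $\TCAM(\bldu(x),\bldtheta(t))=1$ \ifandonlyif\ $c(\bldu(x))\preceq c(\bldu(t))$, i.e.\ (applying the order-reversing $c$) \ifandonlyif\ $\bldu(t)\preceq\bldu(x)$, which for an antichain holds exactly when $x=t$; thus $\TCAM(\bldu(x),\bldtheta(t))=\E_q(x,t)$ and the implementation is valid.

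The one place demanding care---and the main obstacle---is the order-reversal bookkeeping: the map $c$ must be verified against Table~\ref{tab:T} to satisfy $\TCAM(u,\vartheta)=1\Leftrightarrow c(u)\preceq\vartheta$, and every later step hinges on $c$ being an order-reversing involution, so that $c(\bldu(x))\preceq c(\bldu(t))$ flips to $\bldu(t)\preceq\bldu(x)$. Once that single characterization is in place, both directions and both scenarios reduce to the two Sperner-type counts already established in Lemmas~\ref{lem:SpernerSdr} and~\ref{lem:SpernerSrr}; the only scenario-dependent detail is which ambient poset the antichain inhabits.
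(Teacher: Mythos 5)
Your proof is correct and follows essentially the same route as the paper: necessity by showing that the images of $x\mapsto\bldu(x)$ form an antichain (via the diagonal identity $\TCAM(\bldu(x),\bldtheta(x))=1$) and then invoking Lemmas~\ref{lem:SpernerSdr} and~\ref{lem:SpernerSrr}, and sufficiency via exactly the paper's construction $\bldtheta(t)=c(\bldu(t))$, which the paper writes out entrywise (swapping $*$ and $\reject$) without naming the involution. Your explicit characterization $\TCAM(u,\vartheta)=1\Leftrightarrow c(u)\preceq\vartheta$ is a tidy packaging of the monotonicity the paper uses implicitly, but it does not alter the argument.
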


\begin{proof}
Starting with proving necessity, we observe that the mapping
$x \mapsto \bldu(x)$ has to be injective and its set of images
has to be an antichain in $\Boolean_*^n$ (for part~(a))
or in $\Boolean_\reject^n$
\ifPAGELIMIT
    (for part~(b)).
\else
(for part~(b)):
indeed, if we had $\bldu(x) \preceq \bldu(x')$
for distinct $x, x' \in [q \rangle$ then, from
$\E_q(x,x) = \TCAM(\bldu(x),\bldtheta(x)) = 1$
we would get the contradiction
$\E_q(x',x) = \TCAM(\bldu(x'),\bldtheta(x)) = 1$.
\fi
The sought result then follows from
Lemmas~\ref{lem:SpernerSdr} and~\ref{lem:SpernerSrr}.

Sufficiency follows by selecting
$x \mapsto \bldu(x)$ to be
any injective mapping into
$\Boolean_*^n( \lceil 2n/3 \rceil)$ (for part~(a))
or into $\Boolean_\circledast^n$ (for part~(b))
and defining
$t \mapsto \bldtheta(t)$
for every $t \in [q \rangle$ as follows:
\[
\vartheta_j(t)
=
\left\{
\begin{array}{ccl}
u_j(t)  && \textrm{if $u_j(t) \in \Boolean$} \\
\reject && \textrm{if $u_j(t) = {*}$} \\
{*}     && \textrm{if $u_j(t) = \reject$}
\end{array}
\right.
.
\]
\end{proof}

Using known approximations for
the binomial coefficients (see~\cite[p.~309]{MS}),
the inequalities in parts~(a) and~(b) of
Proposition~\ref{prop:EsetSdrSrr} are implied by
\begin{equation}
\ifPAGELIMIT
    \nonumber
\else
    \ifARXIV
        \nonumber
     \else
\label{eq:approximation}
     \fi
\fi
q \le \frac{3}{4 \sqrt{n}} \cdot 3^n
\quad \textrm{and} \quad
q \le  \frac{1}{2 \sqrt{n}} \cdot 4^n ,
\end{equation}
respectively.
This means that~$q$ can almost get
to the size of $\Boolean_*^n$ (in part~(a)) or of $\Boolean_\reject^n$
\ifPAGELIMIT
    (in part~(b)).
\else
(in part~(b))
(this size would have been reachable if $\TCAM(\cdot,\cdot)$
had been defined so that $\TCAM(u,\vartheta) = 1$ \ifandonlyif\
$u = \vartheta$, for any $u, \vartheta \in \Boolean_\reject$).

\begin{remark}
\label{rem:lambda}
Referring to the mapping
$\lambda : \Boolean_\reject \rightarrow \Boolean^2$
in the proof of Lemma~\ref{lem:SpernerSrr},
for any $u \in \Boolean_\reject$,
denote by $\lambda_0(u)$ and $\lambda_1(u)$
the first and second entries (in $\Boolean$)
of $\lambda(u)$. It is fairly easy
to see that for every $u, \vartheta \in \Boolean_\reject$,
\[
\TCAM(u,\vartheta)
= 
\bigl( \lambda_0(u) \vee \lambda_1(\vartheta) \bigr)
\wedge
\bigl( \lambda_1(u) \vee \lambda_0(\vartheta) \bigr) .
\]
In fact, this is (essentially) how TCAM cells
are implemented in~\cite[Fig.~2(a)]{GG} and in~\cite[Fig.~1]{KTFY}.
Given a family $\Phi \subseteq \Fset_q$,
we can therefore rewrite~(\ref{eq:TCAM}) as
\[
f(x) = 
\bigwedge_{j \in [2n \rangle}
\left( \widetilde{u}_j(x) \vee \widetilde{\vartheta}_j(f) \right)
\]
with mappings
$\widetilde{\bldu} : [q \rangle \rightarrow \Boolean^{2n}$
and
$\widetilde{\bldtheta} : \Phi \rightarrow \Boolean^{2n}$
that satisfy certain constraints, depending on the scenario
(the only exception is Scenario~(\Srr), where no constraints
are imposed).\qed
\end{remark}

\section{The family $\Nset_q$}
\label{sec:Nset}

In this section, we consider the family $\Phi = \Nset_q$.
As we did in Section~\ref{sec:Eset},
we will identify each function $x \mapsto \N_q(x,t) \in \Nset_q$
by the parameter $t \in [q \rangle$
and will use the notation $t \mapsto \bldtheta(t)$
for $\N_q(\cdot,t) \mapsto \bldtheta(\N_q(\cdot,t))$.
And, here, too, it will suffice to state the results only for
Scenarios~(\Sbd), (\Sdd), (\Sdr), and~(\Srr).

\begin{proposition}
\label{prop:NsetSbd}
Under Scenario~\textup{(\Sbd)},
the family $\Nset_q$ is $n$-cell implementable, \ifandonlyif
\[
q \le
\left\{
\begin{array}{lcl}
2 && \textrm{if $n = 1$} \\
n && \textrm{if $n \ge 2$}
\end{array}
\right.
.
\]
\end{proposition}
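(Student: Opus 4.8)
The plan is to translate the implementation condition into a purely combinatorial statement about binary points and subcubes. For $u \in \Boolean$ and $\vartheta \in \Boolean_*$, the truth table of $\TCAM$ gives $\TCAM(u,\vartheta) = 1$ exactly when $\vartheta = {*}$ or $u = \vartheta$; hence, writing $M(t) \subseteq \Boolean^n$ for the subcube of all binary words that agree with $\bldtheta(t)$ on its non-$*$ coordinates, Eq.~(\ref{eq:TCAM}) for $\Nset_q$ reads
\[
\bldu(x) \in M(t) \iff x \neq t .
\]
Sufficiency is then immediate: for $q \le n$ the family is $n$-cell implementable by Proposition~\ref{prop:simple1} (realized by the mappings of Table~\ref{tab:NEsetSbd}), while for $n = 1$ the value $q = 2$ is attained by $\bldu(x) = x$ together with $\bldtheta(0) = 1$ and $\bldtheta(1) = 0$. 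It remains to prove necessity, i.e., that an implementation forces $q \le n$ when $n \ge 2$ and $q \le 2$ when $n = 1$.

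For necessity I would first record that $\bldu$ must be injective: if $\bldu(x) = \bldu(x')$ with $x \ne x'$, then applying the displayed equivalence with $t = x$ gives the contradiction $\bldu(x') \in M(x)$ yet $\bldu(x') = \bldu(x) \notin M(x)$. Thus $S := \{\bldu(x)\}_{x \in [q\rangle}$ consists of $q$ distinct points, whence $q \le 2^n$; for $n = 1$ this already yields $q \le 2$. Next, for each $t$ the point $\bldu(t)$ lies outside $M(t)$, so $M(t)$ has a fixed coordinate $j_t$ at which $u_{j_t}(t)$ disagrees with $\vartheta_{j_t}(t)$; since every other image lies in $M(t)$, all $\bldu(x)$ with $x \ne t$ share the common value $c_t := \vartheta_{j_t}(t)$ in coordinate $j_t$, whereas $\bldu(t)$ takes the value $\overline{c_t}$ there. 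In other words, coordinate $j_t$ \emph{isolates} $\bldu(t)$ from the rest of $S$.

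The crux — and the step I expect to be the main obstacle — is to show that for $q \ge 3$ these witness coordinates are pairwise distinct, which immediately gives $q \le n$ and finishes the proof (the residual case $q \le 2$ being subsumed by $q \le 2 \le n$ for $n \ge 2$). Suppose to the contrary that $j_t = j_{t'} =: j$ for distinct $t, t'$. Reading off coordinate $j$ from the two isolation conditions forces $c_{t'} = \overline{c_t}$, since $\bldu(t)$ must take the value $\overline{c_t}$ as the $t$-outlier and the value $c_{t'}$ as a non-$t'$ point. Now pick any third index $t''$, which exists because $q \ge 3$: as a non-$t$ point it has value $c_t$ in coordinate $j$, while as a non-$t'$ point it has value $c_{t'} = \overline{c_t}$, a contradiction. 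Hence $t \mapsto j_t$ is injective into $[n\rangle$, giving $q \le n$, as required.
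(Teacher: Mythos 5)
Your proof is correct and takes essentially the same route as the paper's: a witness coordinate $j_t$ at which $\bldu(t)$ is isolated from all other images, a pigeonhole argument on these coordinates, and a third element $t''$ forced to carry two opposite values at the shared coordinate --- the paper obtains the opposite-values fact $\vartheta_j(t_0) \ne \vartheta_j(t_1)$ via injectivity of $\bldtheta$ after normalizing all other entries to~$*$, whereas you read it off $\bldu(t)$ directly, an inessential variation. The subcube reformulation $M(t)$ is just a repackaging of the $\TCAM$ truth table, and your sufficiency constructions (Proposition~\ref{prop:simple1} for $n \ge 2$, the explicit $n=1$ mapping) match the paper's.
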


\begin{proof}
The case $n = 1$ is easily verified, so we assume
hereafter in the proof that $n \ge 2$.

Sufficiency follows from Proposition~\ref{prop:simple1}
or~\ref{prop:simple2} (see Table~\ref{tab:NEsetSbd}),
and necessity follows essentially from
the proof of Theorem~1 in~\cite{NS};
we include a proof for completeness and for reference in the sequel.
Suppose that Eq.~(\ref{eq:TCAM}) holds for~$\Nset_q$
with mappings $\bldu : [q \rangle \rightarrow \Boolean^n$ and
$\bldtheta : \Nset_q \rightarrow \Boolean_*^n$;
both these mappings are necessarily injective.
For every $t \in [q \rangle$
we have $\TCAM(\bldu(t),\bldtheta(t)) = \N_q(t,t) = 0$
and, so, there exists an index $j \in [n \rangle$
such that $\vartheta_j(t) \in \Boolean$ and
$u_j(t) \ne \vartheta_j(t)$; moreover, there is no loss of generality
in assuming that all the other entries in $\bldtheta(t)$ are~$*$.
Assume now to the contrary that $q > n \ge 2$.
By the pigeonhole principle there exist
distinct $t_0, t_1 \in [q \rangle$ for which that index~$j$ is the same.
Since the mapping~$\bldtheta$ is injective,
we must have $\vartheta_j(t_0) \ne \vartheta_j(t_1)$.
Yet then, for (any) $t_2 \ne t_0, t_1$ in $[q \rangle$
we have $u_j(t_2) \ne \vartheta_j(t_i)$ for some $i \in [2 \rangle$,
thereby yielding the contradiction
$\N_q(t_2,t_i) = \TCAM(\bldu(t_2),\bldtheta(t_i)) = 0$.
\end{proof}

Recall from Propositions~\ref{prop:simple1} and~\ref{prop:simple2} that
the whole set $\Fset_q$ is $q$-cell implementable
and that any subset~$\Phi$ of $\Fset_q$
of size $|\Phi| < q$ is $(q-1)$-cell implementable.
Proposition~\ref{prop:NsetSbd} implies that
for any $q \ge 3$ and under Scenarios~(\Sdb) and~(\Sbd),
the family $\Nset_q$ is a smallest possible
subset of $\Fset_q$ that requires the same number of cells, $q$,
as the whole set $\Fset_q$ does.

\begin{proposition}
\label{prop:NsetSdd}
Under each of
the scenarios~\textup{(\Sdd)}, \textup{(\Sdr)}, or~\textup{(\Srr)},
the family $\Nset_q$ is $n$-cell implementable, \ifandonlyif
\[
q \le 2n .
\]
\end{proposition}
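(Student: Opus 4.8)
The plan is to establish the two directions separately and, for necessity, to reduce the three scenarios to a single one. Observe first that since $\Boolean_* \subseteq \Boolean_\reject$, any pair of mappings witnessing $n$-cell implementability under~(\Sdd) or~(\Sdr) also witnesses it under the most permissive scenario~(\Srr). Hence, to prove necessity in all three cases it suffices to show that $n$-cell implementability of $\Nset_q$ under~(\Srr) forces $q \le 2n$; the same bound then holds \emph{a fortiori} under the stricter scenarios~(\Sdd) and~(\Sdr).

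For sufficiency, the cases~(\Sdr) and~(\Srr) are immediate from Proposition~\ref{prop:simple3}, which already yields $n$-cell implementability of \emph{any} $\Phi \subseteq \Fset_q$ as soon as $q \le 2n$. For~(\Sdd) I would reuse the very same mappings~(\ref{eq:simple3u})--(\ref{eq:simple3theta}) and merely check that the reject symbol is never invoked in $\bldtheta$ when $\Phi = \Nset_q$: because $\N_q(\cdot,t)$ vanishes only at $x = t$, no pair $\{2j, 2j{+}1\}$ can have $\N_q(2j,t) = \N_q(2j{+}1,t) = 0$, so the fourth case of~(\ref{eq:simple3theta}) never arises and every $\vartheta_j(t)$ lies in $\Boolean_*$. (When $q < 2n$ one first extends each $\N_q(\cdot,t)$ to the domain $[2n\rangle$ by assigning the value~$1$ to the new points, which keeps it an inequality function; this handles $q \le 2n$ uniformly.)

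For necessity under~(\Srr), I would argue by a ``responsible coordinate'' count. For each $t \in [q\rangle$, the diagonal identity $\TCAM(\bldu(t),\bldtheta(t)) = \N_q(t,t) = 0$ supplies some coordinate $j = j(t) \in [n\rangle$ with $\TCAM(u_j(t),\vartheta_j(t)) = 0$; note that then $u_j(t), \vartheta_j(t) \in \{\reject, 0, 1\}$, since a~$*$ in either argument of $\TCAM(\cdot,\cdot)$ forces the output~$1$. The heart of the proof is to bound the fibers of $t \mapsto j(t)$, and the key fact is that on $\{\reject, 0, 1\}$ the cell collapses to $\TCAM(a,b) = 1$ \ifandonlyif\ $a = b \in \Boolean$. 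Suppose, for contradiction, that one coordinate~$j$ is responsible for three distinct parameters $t_1, t_2, t_3$, and write $a_i = u_j(t_i)$, $b_i = \vartheta_j(t_i)$. For each $i \ne i'$ the off-diagonal identity $\TCAM(\bldu(t_i),\bldtheta(t_{i'})) = 1$ forces $\TCAM(a_i, b_{i'}) = 1$, i.e.\ $a_i = b_{i'} \in \Boolean$; chaining these six forced equalities across the three indices collapses all of $a_1,a_2,a_3,b_1,b_2,b_3$ to a single common bit, whereupon $\TCAM(a_i, b_i) = 1$, contradicting the responsibility of~$j$. Thus each coordinate is responsible for at most two parameters, and the pigeonhole principle gives $q \le 2n$.

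I expect the main difficulty to be precisely this last combinatorial step: one must verify that the $\{\reject,0,1\}$-restriction of $\TCAM$ permits a coordinate to ``kill'' two diagonal entries (so that $2n$ is tight, matching the construction) while ruling out three. Getting the chain of forced equalities exactly right --- and confirming that the symbol~$\reject$, although allowed under~(\Srr), cannot be exploited to beat the bound --- is the delicate part; everything else is bookkeeping.
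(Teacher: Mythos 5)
Your proposal is correct and follows essentially the same route as the paper: sufficiency via Proposition~\ref{prop:simple3} together with the observation that, for $\Nset_q$, the fourth case of~(\ref{eq:simple3theta}) never arises so the range of $\bldtheta$ stays in $\Boolean_*^n$ (precisely the paper's remark), and necessity under the loosest scenario~(\Srr) by assigning each $t$ a responsible coordinate and pigeonholing with fibers of size at most two. The only local difference is in excluding three parameters per coordinate: the paper first normalizes each $\bldtheta(t)$ to have a single non-$*$ entry and invokes injectivity of $\bldtheta$ to force a $\reject$ entry at one of the three colliding parameters, whereas you chain the six off-diagonal constraints $\TCAM(a_i,b_{i'})=1$ directly to collapse all entries to one bit --- the same counting, reached without the normalization step.
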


\begin{proof}
Sufficiency follows from Proposition~\ref{prop:simple3}
which, for the family $\Nset_q$, holds in fact
also under Scenario~(\Sdd): specifically, for this family,
the range of the mapping~$\bldtheta$
in~(\ref{eq:simple3theta}) is actually $\Boolean_*^n$.

We next prove necessity under (the loosest) Scenario~(\Srr), and
our arguments will be similar to those used in the previous proof.
Specifically, assuming that Eq.~(\ref{eq:TCAM}) holds,
for every $t \in [q \rangle$
there exists an index $j \in [n \rangle$
such that $u_j(t), \vartheta_j(t) \in \Boolean \cup \{ \reject \}$ and
either $\vartheta_j(t) = \reject$ or $u_j(t) \ne \vartheta_j(t)$;
moreover, all the other entries in $\bldtheta(t)$ can be assumed
to be~$*$. Suppose to the contrary that $q > 2n$.
By the pigeonhole principle there exist
distinct $t_0, t_1, t_2 \in [q \rangle$
for which that index~$j$ is the same.
Since~$\bldtheta$ is injective,
we must have $\vartheta_j(t_i) = \reject$
for (exactly) one $i \in [3 \rangle$.
Yet then $\N_q(t_k,t_i) = \TCAM(\bldu(t_k),\bldtheta(t_i)) = 0$
for $k \in [3 \rangle \setminus \{ i \}$, which is a contradiction.
\end{proof}
\fi

\section{The family $\GEset_q$}
\label{sec:GEset}

In this section, we consider the family $\Phi = \GEset_q$.
We again identify each function $x \mapsto \GE_q(x,t) \in \GEset_q$
by the parameter $t \in [q \rangle$
and use the notation
$t \mapsto \bldtheta(t)$
for $\GE_q(\cdot,t) \mapsto \bldtheta(\GE_q(\cdot,t))$.
Clearly, $\GE_q(x,t) = \TCAM(\bldu(x),\bldtheta(t))$
\ifandonlyif\
$\LE_q(x,t) = \TCAM(\bldu'(x),\bldtheta'(t))$,
where $\bldu'(x) = \bldu(q{-}1{-}x)$
and $\bldtheta'(t) = \bldtheta(q{-}1{-}t)$;
hence, our analysis will apply just as well to $\Phi = \LEset_q$.
Moreover, since $\GE_q(x,t) = \LE_q(t,x)$,
it will suffice to state the results only for
Scenarios~(\Sbd), (\Sdd), (\Sdr), and~(\Srr).

\begin{proposition}
\label{prop:GEsetSbd}
Under Scenario~\textup{(\Sbd)},
the family $\GEset_q$ is $n$-cell implementable, \ifandonlyif\
\ifPAGELIMIT
    $q \le n + 1$.
\else
\[
q \le n + 1.
\]
\fi
\end{proposition}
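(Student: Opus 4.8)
The plan is to prove both directions directly, after first unwinding what Scenario~\textup{(\Sbd)} means for the matching relation. Since here $u_j(x)\in\Boolean$ and $\vartheta_j(t)\in\Boolean_*$, the truth table of $\TCAM$ gives $\TCAM(u_j(x),\vartheta_j(t))=1$ exactly when $\vartheta_j(t)={*}$ or $u_j(x)=\vartheta_j(t)$. Hence $\TCAM(\bldu(x),\bldtheta(t))=1$ is just ordinary ternary matching: $\bldu(x)$ agrees with the pattern $\bldtheta(t)$ on every coordinate where $\bldtheta(t)$ is not~${*}$. The requirement~(\ref{eq:TCAM}) for $\GEset_q$ is therefore that, for each $t$, the set of $x$ whose codeword $\bldu(x)$ matches $\bldtheta(t)$ is precisely $[t:q\rangle$.

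For sufficiency (taking $q=n+1$) I would use the thermometer encoding $u_j(x)=\ldbrack x>j\rdbrack$ for $j\in[n\rangle$, so that $\bldu(x)$ consists of $x$ ones followed by $n-x$ zeros. I would then take $\bldtheta(0)$ to be the all-${*}$ word, and for $t\ge1$ let $\bldtheta(t)$ carry a single~$1$ in coordinate $t-1$ and~${*}$ elsewhere. Then $\bldu(x)$ matches $\bldtheta(t)$ \ifandonlyif\ $u_{t-1}(x)=1$, i.e.\ \ifandonlyif\ $x\ge t$, as required; all $q\le n+1$ follow by restricting this construction.

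The substance is necessity: if $\GEset_q$ is $n$-cell implementable under~\textup{(\Sbd)}, then $q\le n+1$. For each threshold $t\in[1:q\rangle$ I would exploit the flip at $x=t$. Since $\GE_q(t-1,t)=0$ while $\GE_q(x,t)=1$ for every $x\ge t$, the word $\bldu(t-1)$ fails to match $\bldtheta(t)$, so there is a coordinate $j_t$ with $\vartheta_{j_t}(t)\in\Boolean$ and $u_{j_t}(t-1)\ne\vartheta_{j_t}(t)$; write $b_t:=\vartheta_{j_t}(t)$ for this non-${*}$ value. Because every $x\ge t$ matches, this coordinate is pinned: $u_{j_t}(x)=b_t$ for all $x\ge t$, whereas $u_{j_t}(t-1)=1-b_t$.

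I claim the map $t\mapsto j_t$ is injective on $[1:q\rangle$, which immediately yields $q-1\le n$. This is the step I expect to be the crux. Suppose $j_t=j_{t'}=j$ with $t<t'$. Reading the constraint coming from $t'$ gives $u_j(t')=b_{t'}$ and $u_j(t'-1)=1-b_{t'}$; reading the constraint coming from $t$ (legitimate since $t'-1\ge t$) gives $u_j(t')=b_t$ and $u_j(t'-1)=b_t$. Comparing these forces simultaneously $b_t=b_{t'}$ and $b_t=1-b_{t'}$, a contradiction. Hence the coordinates $j_t$ are pairwise distinct, so $q-1=|[1:q\rangle|\le n$, i.e.\ $q\le n+1$. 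The only points requiring care are that both arguments of each $\TCAM$ cell are genuinely in $\Boolean$ (so that a cell value of~$0$ forces a strict disagreement, which is what lets me select $j_t$ and deduce $u_{j_t}(t-1)=1-b_t$), and that the index $t'-1$ used in the contradiction indeed lies in the range $[t:q\rangle$ on which the $t$-constraint pins the coordinate.
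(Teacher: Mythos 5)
Your proof is correct, and its necessity argument takes a genuinely more self-contained route than the paper's. For sufficiency you use the same thermometer encoding $u_j(x)=\ldbrack x>j\rdbrack$ as the paper, but your patterns $\bldtheta(t)$ place a single~$1$ at coordinate $t-1$ (with $\bldtheta(0)$ all-${*}$), whereas the paper uses the prefix patterns $\vartheta_j(t)=1$ for $j<t$ and ${*}$ otherwise; both verify immediately. For necessity the paper does not give a standalone argument: it derives the bound inside the proof of the Scenario~(\Sdd) result (Proposition~\ref{prop:GEsetSdd}), where it normalizes $\bldtheta$ without loss of generality, introduces per-coordinate thresholds $y_j$ (first $t$ with $\vartheta_j(t)\in\Boolean$) and $z_j$ (first subsequent~$0$), proves the covering $[1:q\rangle\subseteq\bigcup_j\{y_j,z_j\}$ via the contradiction $\GE_q(y-1,y)=1$, and then observes that a binary input range forces $z_j=\infty$, so $q-1\le n$. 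Your injection $t\mapsto j_t$ is the mirror image of that covering, but you prove it directly: the pinning $u_{j_t}(x)=b_t$ for all $x\ge t$ together with $u_{j_t}(t-1)=1-b_t$ makes two thresholds sharing a coordinate contradictory, with no WLOG renaming of $0$ and $1$ per coordinate and no normalization of $\bldtheta$. You also correctly isolate where binarity of the inputs is used---a cell output of~$0$ against a binary state forces strict disagreement, which is exactly the role played by $z_j=\infty$ in the paper. What the paper's detour buys is economy at the level of the whole section: the same $y_j,z_j$ machinery simultaneously yields the Scenario~(\Sdd) bound $q\le 2n+1$, making the present proposition a one-line specialization; what your route buys is a shorter, fully elementary proof if this proposition is wanted in isolation.
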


\begin{proof}
Necessity will follow from the proof of
Proposition~\ref{prop:GEsetSdd} below and
sufficiency follows by defining
$x \mapsto \bldu(x)$ and $t \mapsto \bldtheta(t)$ as follows:
\[
u_j(x) =
\left\{
\begin{array}{ll}
0   & \textrm{if $x \le j$} \\
1   & \textrm{if $x >   j$}
\end{array}
\right.
\quad \textrm{and} \quad
\vartheta_j(t) =
\left\{
\begin{array}{ll}
{*} & \textrm{if $t \le j$} \\
1   & \textrm{if $t > j$}
\end{array}
\right.
\ifPAGELIMIT
    .
\fi
\]
\ifPAGELIMIT
\else
(these mappings are also shown in Table~\ref{tab:GEsetSbd}).
\fi
\end{proof}

\ifPAGELIMIT
\else
\begin{table}[hbt]
\caption{Mappings for $\GEset_q$ that attain $q = n+1$
under Scenario~(\Sbd).}
\label{tab:GEsetSbd}
\[
\begin{array}{c|c}
x      & \bldu(x)                   \\
\hline
\raisebox{0ex}[2ex]{}
0      & 00000             \ldots 0 \\
1      & 10000             \ldots 0 \\
2      & 11000             \ldots 0 \\
3      & 11100             \ldots 0 \\
\vdots & \hspace{2ex}\ddots         \\
n{-}1  & 1111             \ldots 10 \\
n      & 11111             \ldots 1
\end{array}
\quad\quad\quad
\begin{array}{c|c}
t      & \bldtheta(t)               \\
\hline
\raisebox{0ex}[2ex]{}
0      & {*}{*}{*}{*}{*} \ldots {*} \\
1      & 1{*}{*}{*}{*}   \ldots {*} \\
2      & 11{*}{*}{*}     \ldots {*} \\
3      & 111{*}{*}       \ldots {*} \\
\vdots & \hspace{1ex} \ddots        \\
n{-}1  & 1111           \ldots 1{*} \\
n      & 11111             \ldots 1
\end{array}
\]
\end{table}
\fi

\begin{proposition}
\label{prop:GEsetSdd}
Under Scenario~\textup{(\Sdd)},
the family $\GEset_q$ is $n$-cell implementable, \ifandonlyif
\[
q \le
\left\{
\begin{array}{lcl}
2n     && \textrm{if $n = 1, 2$} \\
2n + 1 && \textrm{if $n \ge 3$}
\end{array}
\right.
.
\]
\end{proposition}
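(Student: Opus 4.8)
The plan is to translate the defining identity $\GE_q(x,t)=\bigwedge_{j\in[n\rangle}\TCAM(u_j(x),\vartheta_j(t))$ into a covering problem. Since Scenario~(\Sdd) uses no~$\reject$, for $u,\vartheta\in\Boolean_*$ we have $\TCAM(u,\vartheta)=0$ exactly when $\{u,\vartheta\}=\{0,1\}$. Hence, writing $Q_j=\{x:u_j(x)=0\}$, $P_j=\{x:u_j(x)=1\}$, $S_j=\{t:\vartheta_j(t)=1\}$, and $U_j=\{t:\vartheta_j(t)=0\}$, coordinate~$j$ ``rejects'' the pair $(x,t)$ iff $(x,t)\in(Q_j\times S_j)\cup(P_j\times U_j)$. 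Thus Eq.~(\ref{eq:TCAM}) holds for $\GEset_q$ iff the strict upper triangle $T=\{(x,t):0\le x<t\le q-1\}$ equals $\bigcup_j\bigl[(Q_j\times S_j)\cup(P_j\times U_j)\bigr]$, where each factor rectangle avoids $\{x\ge t\}$; the latter forces $\max Q_j<\min S_j$ and $\max P_j<\min U_j$, so every factor rectangle lies inside~$T$. I would treat the whole proposition as this rectangle-cover question, with the added ``pairing'' constraints $Q_j\cap P_j=\varnothing$ and $S_j\cap U_j=\varnothing$.

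For the upper bound I would count the near-diagonal cells $c_a=(a,a+1)$, $a\in[0:q-2]$. A rectangle $A\times B\subseteq T$ (so $\max A<\min B$) contains at most one $c_a$, so the $2n$ factor rectangles cover at most $2n$ of the $q-1$ near-diagonal cells; this already gives $q\le 2n+1$. The same count proves the deferred necessity of Proposition~\ref{prop:GEsetSbd}: under Scenario~(\Sbd) the map $u$ is binary, so $Q_j\cup P_j=[q\rangle$; then $q-1$ lies in $Q_j$ or in $P_j$, forcing $S_j=\varnothing$ or $U_j=\varnothing$, i.e.\ at most one nonempty factor rectangle per coordinate and hence $q\le n+1$.

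To sharpen the bound to $q\le 2n$ when $n\le 2$, I would assume $q=2n+1$ and extract structure: now every factor rectangle must cover exactly one, and a distinct, near-diagonal cell, and the rectangle responsible for $c_a$ must have $x$-part with maximum~$a$ and $t$-part with minimum~$a+1$. Letting $a_j<b_j$ be the two indices handled by coordinate~$j$, a short check shows coordinate~$j$ cannot cover the ``blind-spot'' cell $(a_j,b_j+1)$: its $Q_j\times S_j$ misses the row $b_j+1$, which lies in $U_j$, and its $P_j\times U_j$ misses the column $a_j$, which lies in $Q_j$. The blind spots must therefore be covered by the remaining coordinates, and for $n=1,2$ a finite case analysis over the few ways to split $\{0,\dots,2n-1\}$ into pairs shows this is impossible (some blind spot, or the corner-type cell $(0,2n)$, is left uncovered), giving the contradiction.

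For sufficiency, the cases $n\le 2$ (where the bound is $q=2n$) are settled by small explicit covers, with $q=2$ immediate for $n=1$. The substantive case is $q=2n+1$ with $n\ge 3$, where I would produce explicit mappings; in the rectangle language this amounts to pairing gap~$a$ with gap~$a+n$ and choosing the $x$- and $t$-parts of the $2n$ rectangles so that their union is all of~$T$ while keeping $Q_j\cap P_j=\varnothing$ and $S_j\cap U_j=\varnothing$ within each pair. I expect this to be the main obstacle: one must cover the corner cell $(0,2n)$ and every coordinate's blind spot using the \emph{other} coordinates, and it is precisely having $n\ge 3$ pairs that permits a non-conflicting assignment. (For $n=3$, $q=7$, one valid cover is $\{0\}\times\{1,2,3\}$, $\{1\}\times\{2,3\}$, $\{0,2\}\times\{3,4\}$, $\{1,2,3\}\times\{4,5,6\}$, $\{0,4\}\times\{5,6\}$, $\{5\}\times\{6\}$, paired as $(0,3),(1,4),(2,5)$.) Reconstructing $u_j,\vartheta_j$ from such a cover and checking the two identities in Eq.~(\ref{eq:TCAM}) is then routine.
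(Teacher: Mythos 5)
Your rectangle-cover reformulation is sound: under Scenario~(\Sdd) one indeed has $\TCAM(u,\vartheta)=0$ exactly when $u,\vartheta\in\Boolean$ and $u\ne\vartheta$, so Eq.~(\ref{eq:TCAM}) for $\GEset_q$ is equivalent to writing the strict upper triangle $T$ as a union of $2n$ rectangles contained in~$T$, subject to your two disjointness constraints. Your near-diagonal count giving $q\le 2n+1$ is correct, and it is the paper's necessity argument in different clothing: the paper first normalizes $t\mapsto\vartheta_j(t)$ to the threshold form~(\ref{eq:t}) with parameters $y_j,z_j$ and then proves $[1:q\rangle\subseteq\bigcup_{j}\{y_j,z_j\}$ (its Eq.~(\ref{eq:GEsetSdd})) via the contradiction $\GE_q(y-1,y)=1$; your observation that a rectangle inside~$T$ contains at most one cell $(a,a+1)$ yields the same count $q-1\le 2n$ while dispensing with the normalization, which is arguably cleaner. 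Your specialization $Q_j\cup P_j=[q\rangle$ correctly recovers the deferred necessity of Proposition~\ref{prop:GEsetSbd}, just as the paper does inside its proof of Proposition~\ref{prop:GEsetSdd}. The blind-spot analysis for $q=2n+1$ is also right --- I checked that in either assignment of coordinate~$j$'s two rectangles to the cells $(a_j,a_j{+}1)$ and $(b_j,b_j{+}1)$, the constraints $Q_j\cap P_j=\varnothing$ and $S_j\cap U_j=\varnothing$ exclude $(a_j,b_j{+}1)$ --- and the three pairings for $n=2$, $q=5$ fail exactly as you say (two via an uncoverable blind spot, the pairing $\{(0,2),(1,3)\}$ via the corner cell $(0,4)$), matching what the paper verifies ad hoc in Appendix~\ref{sec:specialcases}.

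The one genuine gap is sufficiency for general $n\ge 3$: you exhibit a valid cover only for $n=3$ and state the pairing pattern (gap~$a$ with gap~$a+n$) without constructing the sets or proving coverage for arbitrary~$n$, flagging this yourself as ``the main obstacle.'' The paper's Table~\ref{tab:GEsetSdd} is precisely the missing construction, and it follows your pattern: in your language, coordinate~$j$ carries the rectangles $[0:j]\times[j{+}1:n{+}j]$ and $[j{+}1:n{+}j]\times[n{+}j{+}1:2n]$, except that the element $2n$ is moved from $U_1$ into $S_1$ (this is the special word $\bldtheta(2n)=0100\ldots0$ in the table), so that the corner cells $(0,2n)$ and $(1,2n)$ get covered; the cells $(x,2n)$ thereby orphaned from coordinate~$1$, namely $x\in[2:n{+}1]$, are re-covered by coordinate~$0$ (for $x\le n$) and by coordinate~$2$ (for $x=n{+}1$) --- and the latter step is exactly where $n\ge 3$ is needed, confirming your intuition about why three pairs suffice while two do not. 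Until such a general construction is written down and its coverage of~$T$ verified (a short but necessary check), your argument establishes the proposition only for $n\le 3$.
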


\begin{proof}
Starting with proving necessity,
suppose that Eq.~(\ref{eq:TCAM}) holds.
For each $j \in [n \rangle$, let $y_j$ denote the smallest
$t \in [q \rangle$ such that $\vartheta_j(t) \in \Boolean$
(define $y_j = \infty$ if no such~$t$
\ifPAGELIMIT
    exists); without loss of generality we can assume
\else
exists). By possibly switching between the roles of~$0$ and~$1$
in $t \mapsto \vartheta_j(t)$ and in $x \mapsto u_j(x)$,
we can assume without loss of generality
\fi
that $\vartheta_j(y_j) = 1$.
We now observe from Eq.~(\ref{eq:TCAM}) that
$u_j(x) \in \{ 1, * \}$ for every $x \ge y_j$.
Thus, (\ref{eq:TCAM}) still holds if
we re-define $\vartheta_j(t)$ to be equal to~$1$
at every $t > y_j$ for which $\vartheta_j(t) = {*}$.
Hence, we assume hereafter without loss of generality that
\[
\vartheta_j(y_j) = 1 \;\; \textrm{and} \;\;
\vartheta_j(t) \in \Boolean \;\; \textrm{for every $t > y_j$}
\]
(provided that $y_j < \infty$).

Next, for each $j \in [n \rangle$, we let $z_j$ be the smallest
$t \in [y_j:q \rangle$ such that $\vartheta_j(t) = 0$
(with $z_j = \infty$ if no such~$t$ exists).
Note that for Eq.~(\ref{eq:TCAM}) to hold, we must have
\begin{equation}
\label{eq:x}
u_j(x) = {*} \;\; \textrm{for every $x \ge z_j$}.
\end{equation}
In particular, under Scenario~(\Sbd)
(as in Proposition~\ref{prop:GEsetSbd}),
we must have $z_j = \infty$ for every $j \in [n \rangle$.

In summary,
\ifPAGELIMIT
    we can assume that
\else
the mapping
$t \mapsto \bldtheta(t)$
is assumed hereafter to take the following form:
\fi
\begin{equation}
\label{eq:t}
\vartheta_j(t) =
\left\{
\begin{array}{lcl}
{*}                 && \textrm{if $t < y_j$}      \\
1                   && \textrm{if $y_j\le t<z_j$} \\
0                   && \textrm{if $t = z_j$}      \\
0 \;\textrm{or}\; 1 && \textrm{if $t > z_j$}
\end{array}
\right.
.
\end{equation}

Next, we claim that
\begin{equation}
\label{eq:GEsetSdd}
\ifPAGELIMIT
    \textstyle
\fi
[1:q \rangle \subseteq
\bigcup_{j \in [n \rangle} \{ y_j, z_j  \}
\end{equation}
(where the right-hand side is regarded as a set, ignoring
multiplicities). Indeed, suppose to the contrary
that there exists some $y \in [1:q \rangle$ such that
$y \notin \{ y_j, z_j \}$ for every $j \in [n \rangle$.
By~(\ref{eq:x}) and~(\ref{eq:t})
we then have, for every $j \in [n \rangle$:
\begin{equation}
\label{eq:yjzj}
\begin{array}{lcl}
\vartheta_j(y) = {*}                     && \textrm{if $y < y_j$}   \\
\vartheta_j(y)=\vartheta_j(y-1) \; (= 1) && \textrm{if $y_j<y<z_j$} \\
u_j(y-1) = {*}                           && \textrm{if $y > z_j$}
\end{array}
.
\end{equation}
\ifPAGELIMIT
    Hence,
    \begin{eqnarray*}
    \lefteqn{
    \GE_q(y-1,y)
    \stackrel{\textrm{(\ref{eq:TCAM})+(\ref{eq:yjzj})}}{=}
    \textstyle
    \bigwedge_{j \,:\, y_j < y < z_j}
    \TCAM(u_j(y-1),\vartheta_j(y))} \makebox[5ex]{} \\
    & \stackrel{\textrm{(\ref{eq:yjzj})}}{=} &
    \textstyle
    \bigwedge_{j \,:\, y_j < y < z_j}
    \TCAM(u_j(y-1),\vartheta_j(y-1))
    \; \stackrel{\textrm{(\ref{eq:TCAM})}}{=} \; 1,
\end{eqnarray*}
\else
Hence,\footnote{A conjunction over an empty set is defined to be~$1$.}
\begin{eqnarray*}
\GE_q(y-1,y)
& \stackrel{\textrm{(\ref{eq:TCAM})}}{=} &
\TCAM(\bldu(y-1),\bldtheta(y)) \\
& \stackrel{y \ne y_j, z_j}{=} &
\Bigl(
\bigwedge_{j \,:\, y < y_j}
\TCAM(u_j(y-1),{\underbrace{\vartheta_j(y)}_{*}})
\Bigr) \\
&&
\quad {}
\wedge
\Bigl( \bigwedge_{j \,:\, y_j < y < z_j}
\!\!\!\!
\TCAM(u_j(y-1),\vartheta_j(y)) \Bigr) \\
&&
\quad {}
\wedge
\Bigl( \bigwedge_{j \,:\, y > z_j}
\TCAM({\underbrace{u_j(y-1)}_{*}},\vartheta_j(y)) \Bigr) \\
& \stackrel{\textrm{(\ref{eq:yjzj})}}{=} &
\bigwedge_{j \,:\, y_j < y < z_j}
\TCAM(u_j(y-1),\vartheta_j(y)) \\
& \stackrel{\textrm{(\ref{eq:yjzj})}}{=} &
\bigwedge_{j \,:\, y_j < y < z_j}
\TCAM(u_j(y-1),\vartheta_j(y-1)) \\
& \stackrel{\textrm{(\ref{eq:TCAM})}}{=} &
1,
\end{eqnarray*}
\fi
which is a contradiction.
By~(\ref{eq:GEsetSdd}) we thus conclude that $q-1 \le 2n$,
thereby proving the necessary condition for $n \ge 3$
\ifPAGELIMIT
    (the special cases $n = 1, 2$ can be easily verified separately).
\else
(leaving the special cases of $n = 1, 2$ to
Appendix~\ref{sec:specialcases}).
\fi
Moreover, when the range of~$\bldu$ is constrained to $\Boolean^n$,
then $z_j = \infty$ for every $j \in [n \rangle$ and, so,
by~(\ref{eq:GEsetSdd}) we get $q-1 \le n$,
thus proving the necessary condition
in Proposition~\ref{prop:GEsetSbd}.

Sufficiency follows from the mappings shown
\ifPAGELIMIT
    in Table~\ref{tab:GEsetSdd}.
\else
in Table~\ref{tab:GEsetSdd};
in that table, $\bldu(0)$ and $\bldtheta(2n)$
can be set to any two words in $\Boolean_*^n$ that start
and end with a~$0$ and $\TCAM(\bldu(x),\bldtheta(2n)) = 0$
(e.g., we can take $\bldu(0)$ and $\bldtheta(2n)$ to be
distinct in $\Boolean^n$ that start and end with a~$0$).
Note that this is always possible when $n \ge 3$;
for $n = 1, 2$ we restrict the table to the rows
that correspond to $x, t \in [2n \rangle$.
\fi%
\end{proof}

\begin{table}[hbt]
\caption{Mappings
for $\GEset_q$ that attain $q = 2n+1$ under Scenario~(\Sdd).}
\label{tab:GEsetSdd}
\[
\ifPAGELIMIT
    \renewcommand{\arraystretch}{0.7}
\fi
\begin{array}{c|c}
x      & \bldu(x)                \\
\hline
\raisebox{0ex}[2ex]{}
0      & 0000           \ldots 0 \\
1      & 1000           \ldots 0 \\
2      & 1100           \ldots 0 \\
\vdots & \ddots                  \\
n{-}1  & 111           \ldots 10 \\
n      & 1111           \ldots 1 \\
n{+}1  & {*}111         \ldots 1 \\
n{+}2  & {*}{*}11       \ldots 1 \\
\vdots & \hspace{3ex} \ddots     \\
2n{-}1 & {*}{*}{*}   \ldots {*}1 \\
2n     & {*}{*}{*}{*} \ldots {*}
\end{array}
\quad\quad\quad
\begin{array}{c|c}
t      & \bldtheta(t)            \\
\hline
\raisebox{0ex}[2ex]{}
0      & {*}{*}{*}{*} \ldots {*} \\
1      & 1{*}{*}{*}   \ldots {*} \\
2      & 11{*}{*}     \ldots {*} \\
\vdots & \ddots                  \\
n{-}1  & 111         \ldots 1{*} \\
n      & 1111           \ldots 1 \\
n{+}1  & 0111           \ldots 1 \\
n{+}2  & 0011           \ldots 1 \\
\vdots & \hspace{3ex} \ddots     \\
2n{-}1 & 000           \ldots 01 \\
2n     & 0100           \ldots 0
\end{array}
\]
\end{table}

\ifPAGELIMIT
    A similar proof leads to our results in Table~\ref{tab:summary}
    for $\Phi = \GEset_q$
    under Scenarios~\textup{(\Sdr)} and~\textup{(\Srr)}.
\else
\begin{proposition}
\label{prop:GEsetSdr}
Under each of the scenarios~\textup{(\Sdr)} or~\textup{(\Srr)},
the family $\GEset_q$ is $n$-cell implementable, \ifandonlyif
\[
q \le 2n+1 .
\]
\end{proposition}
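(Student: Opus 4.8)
The plan is to prove both directions at the most economical scenarios: I would establish the upper bound $q \le 2n+1$ under the looser Scenario~(\Srr) (which then also yields it for~(\Sdr)), and conversely exhibit mappings attaining $q = 2n+1$ under the tighter Scenario~(\Sdr) (which then also serves~(\Srr), since $\Boolean_*^n \subseteq \Boolean_\reject^n$). The overall skeleton parallels the proof of Proposition~\ref{prop:GEsetSdd}: I would again read off the constraints imposed by requiring the conjunction in~(\ref{eq:TCAM}) to equal $\GE_q$, but organize the counting around the $q-1$ cells just above the diagonal rather than around the thresholds $y_j,z_j$.

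For necessity, assume~(\ref{eq:TCAM}) holds and fix a column $j$. Two facts drive the argument. First, since $\GE_q(x,t)=1$ for every $x\ge t$, every conjunct obeys $\TCAM(u_j(x),\vartheta_j(t))=1$ there; hence the state $\vartheta_j(t)$ must \emph{accept} every symbol in the tail set $S_t=\{u_j(x):x\ge t\}$, and these sets are nested, $S_0\supseteq S_1\supseteq\cdots$. Second, because $\TCAM({*},\cdot)\equiv 1$, any cell $(t-1,t)$ at which $\TCAM(u_j(t-1),\vartheta_j(t))=0$ must have $u_j(t-1)\ne{*}$ and, moreover, $u_j(t-1)\notin S_t$ (otherwise $\vartheta_j(t)$ would accept it); thus $t-1$ is the \emph{last} occurrence of the symbol $u_j(t-1)$ in column $j$. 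Now $\GE_q(t-1,t)=0$ for each $t\in[1:q\rangle$, so some column witnesses this zero; letting $R_j=\{t\in[1:q\rangle:\TCAM(u_j(t-1),\vartheta_j(t))=0\}$, the sets $R_j$ cover $[1:q\rangle$, and it remains to show $|R_j|\le 2$, which gives $q-1\le 2n$.

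Under Scenario~(\Sdr) the inputs lie in $\Boolean_{*}$, so every rejection symbol is $0$ or $1$; by uniqueness of last occurrences $|R_j|\le 2$ at once. Under Scenario~(\Srr) the rejection symbol lies in $\{0,1,\reject\}$, and ruling out a third rejection is the main obstacle. I would argue by contradiction: three rejections $t_0<t_1<t_2$ use three \emph{distinct} last-occurrence symbols, hence exactly $\{0,1,\reject\}$. But the tail set $S_{t_0}$ then contains the symbols sitting at positions $t_1-1,t_2-1\ge t_0$, i.e.\ two distinct elements of $\{0,1,\reject\}$; every such pair either contains $\reject$ or equals $\{0,1\}$, and no non-${*}$ state accepts such a pair (a state $0$, $1$, or $\reject$ accepts only subsets of $\{0,{*}\}$, $\{1,{*}\}$, or $\{{*}\}$, respectively). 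This contradicts that $\vartheta_j(t_0)$---necessarily non-${*}$ in order to reject---accepts all of $S_{t_0}$. The nesting $S_{t_0}\supseteq S_{t_1}\supseteq S_{t_2}$ is exactly where the monotonicity of $\GEset_q$ enters.

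For sufficiency it suffices to recycle the mappings of Table~\ref{tab:GEsetSdd} and to lift the restriction to $n\ge 3$ by using the reject symbol now available in $\bldtheta$. Concretely, I would keep $\bldu$ as in the table (with $\bldu(0)=0^n$) and keep $\bldtheta(t)$ for $t\in[0:2n\rangle$, but set $\bldtheta(2n)={*}^{n-1}\reject$. The block $x,t\in[0:2n\rangle$ is verified exactly as in Proposition~\ref{prop:GEsetSdd}; the row $\bldu(2n)={*}^n$ is accepted by every state, matching $\GE_q(2n,t)=1$; and since the last coordinate of $\bldu(x)$ equals ${*}$ \ifandonlyif\ $x=2n$, one gets $\TCAM(\bldu(x),\bldtheta(2n))=\ldbrack x=2n\rdbrack=\GE_q(x,2n)$. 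This works for every $n\ge1$, so in particular it repairs the cases $n=1,2$ that the $\Boolean_{*}^n$-valued table of Proposition~\ref{prop:GEsetSdd} could not reach. The only genuinely new work is the count $|R_j|\le2$ under~(\Srr); everything else is bookkeeping parallel to the preceding proposition.
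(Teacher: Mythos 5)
Your proposal is correct, and both halves take a genuinely different route from the paper's. For necessity, the paper (folding this proposition into the proof of Proposition~\ref{prop:GELEsetSdr}) first normalizes $\bldtheta$ without loss of generality into the canonical form~(\ref{eq:Qyjzj}) --- a prefix of $*$'s, a run of $1$'s, then $\reject$'s --- and then shows the breakpoint cover $[1:q\rangle \subseteq \bigcup_j \{y_j, z_j\}$, with the WLOG relabeling justified by appeal to the reasoning in Proposition~\ref{prop:GEsetSdd}; you instead count the rejection witnesses $R_j$ directly, with no normalization of~$\bldtheta$ at all, using the observation that a rejected input symbol at position $t-1$ cannot recur in the tail $\{u_j(x): x\ge t\}$ (so distinct elements of $R_j$ carry distinct symbols), and disposing of a third rejection under Scenario~(\Srr) by noting that a non-$*$ state accepts only a subset of $\{0,*\}$, $\{1,*\}$, or $\{*\}$, so $\vartheta_j(t_0)$ cannot accept two distinct symbols of $\{0,1,\reject\}$ sitting in its tail set. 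Your version is more self-contained --- it sidesteps the symbol-relabeling arguments whose verification the paper delegates to earlier proofs --- while the paper's normalization pays off later, since the form~(\ref{eq:Qyjzj}) is reused verbatim to handle the harder family $\GEset_q\cup\LEset_q$ in the same proof. For sufficiency, the paper extends the paired-unary mappings (\ref{eq:simple3u})--(\ref{eq:simple3theta}) of Proposition~\ref{prop:simple3} by $\bldu(2n)={*}{*}\ldots{*}$ and $\bldtheta(2n)=\reject\reject\ldots\reject$, whereas you patch Table~\ref{tab:GEsetSdd} with $\bldtheta(2n)={*}^{n-1}\reject$; your choice checks out for every $n\ge 1$ (the last coordinate of $\bldu(x)$ is $*$ exactly when $x=2n$, and the $[2n\rangle\times[2n\rangle$ block is the one the paper already validates for all $n$), so it legitimately removes the $n\ge 3$ restriction that the all-$\Boolean_*$ version of the table required. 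Both constructions respect the tighter Scenario~(\Sdr) ranges, so your scenario bookkeeping (necessity under~(\Srr), sufficiency under~(\Sdr)) is sound.
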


\begin{proof}
Necessity will follow from the proof of
Proposition~\ref{prop:GELEsetSdr} below,
and sufficiency follows from the mappings $x \mapsto \bldu(x)$
and $t \mapsto \bldtheta(t)$
defined in~(\ref{eq:simple3u})--(\ref{eq:simple3theta})
and extended to the domain $[2n+1 \rangle$ by:
\[
\bldu(2n) = {*}{*}\ldots{*}
\quad \textrm{and} \quad
\bldtheta(2n) = {\reject}{\reject}\ldots{\reject} .
\]
(For $n \ge 3$, we can also use the mappings
in Table~\ref{tab:GEsetSdd}.)
\end{proof}

\fi

\section{The family $\GEset_q \cup \LEset_q$}
\label{sec:GELEset}

\ifPAGELIMIT
    We consider here
\else
In this section, we consider
\fi
the family $\Phi = \GEset_q \cup \LEset_q$.
For the subset $\GEset_q$ of functions in this family,
we use---as in Section~\ref{sec:GEset}---the notation
$t \mapsto \bldtheta(t)$
for $\GE_q(\cdot,t) \mapsto \bldtheta(\GE_q(\cdot,t))$.
For the remaining subset $\LEset_q$,
we use $t \mapsto \bldtheta'(t)$
for $\LE_q(\cdot,t) \mapsto \bldtheta(\LE_q(\cdot,t))$
so that~(\ref{eq:TCAM}) becomes
\begin{equation}
\ifPAGELIMIT
    \ifANONYMOUS
        \label{eq:TCAM'}
    \else
    \nonumber
    \fi
\else
\label{eq:TCAM'}
\fi
\LE_q(x,t) = \TCAM(\bldu(x),\bldtheta'(t)) .
\end{equation}

\ifPAGELIMIT
\else
\begin{proposition}
\label{prop:GELEsetSdb}
Under Scenario~\textup{(\Sdb)},
the family $\GEset_q \cup \LEset_q$ is $n$-cell implementable,
\ifandonlyif
\[
q \le n.
\]
\end{proposition}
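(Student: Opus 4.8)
The plan is to separate sufficiency from necessity, with essentially all the work in necessity. Sufficiency is immediate: since $q \le n$, Proposition~\ref{prop:simple1} already guarantees that \emph{every} subset of $\Fset_q$, and in particular $\GEset_q \cup \LEset_q$, is $n$-cell implementable; moreover the construction used there for the ``otherwise'' case sends $\bldu$ into $\Boolean_*^n$ and $\bldtheta$ into $\Boolean^n$, which is exactly what Scenario~(\Sdb) demands. So I would only need to show that $n$-cell implementability forces $q \le n$.

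For necessity, suppose~(\ref{eq:TCAM}) holds under Scenario~(\Sdb), so $\bldu(x) \in \Boolean_*^n$ while every state word lies in $\Boolean^n$. The first step is to exploit that the constant-$1$ function $\GE_q(\cdot,0) = \LE_q(\cdot,q{-}1)$ belongs to the family. Its binary state word satisfies $\TCAM(\bldu(x),\bldtheta(0)) = 1$ for all $x$, which forces $u_j(x) \in \{ {*}, \vartheta_j(0) \}$ for every $x$ and $j$; hence each column~$j$ attains at most one of the two binary values. I would then define the \emph{atom} $A_j = \{ x \in [q\rangle : u_j(x) \ne {*} \}$ together with the unique binary value $c_j$ that column~$j$ takes on $A_j$.

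The second step reinterprets rejection. For any binary state word $\bldtheta$, column~$j$ outputs~$0$ at~$x$ exactly when $x \in A_j$ and $\vartheta_j = 1 - c_j$; thus the set on which $\TCAM(\bldu(\cdot),\bldtheta) = 0$ is the union of those atoms whose state bit is flipped. Applying this to $\GE_q(\cdot,t)$, whose zero-set is the prefix $[0:t\rangle$, shows that every such prefix is a union of atoms; applying it to $\LE_q(\cdot,t)$ through its own word $\bldtheta'$, whose zero-set is the suffix $[t{+}1:q\rangle$, shows that every such suffix is a union of atoms.

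The final, and most delicate, step is the counting that converts ``all prefixes and suffixes are unions of the $n$ atoms'' into $q \le n$. For each $t \in [q{-}1\rangle$ the prefix $[0:t{+}1\rangle$ is a union of atoms contained in $\{0,\ldots,t\}$, so some atom has maximum element exactly~$t$; these give $q-1$ pairwise distinct atoms, since their maxima differ. The singleton suffix $\{q{-}1\}$ is likewise a union of atoms, forcing an atom equal to $\{q{-}1\}$, whose maximum is $q{-}1$ and which is therefore distinct from all of the previous ones. Hence there are at least~$q$ distinct atoms among $A_0,\ldots,A_{n-1}$, giving $n \ge q$. I expect this last step to be the main obstacle: the prefix (i.e., $\GEset_q$) witnesses alone only yield $n \ge q-1$, i.e., the weaker bound $q \le n+1$, and the sharp bound $q \le n$ genuinely requires the $\LEset_q$ functions to contribute one further atom. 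The care is in verifying that this extra suffix witness $\{q{-}1\}$ cannot coincide with any prefix witness, which is immediate here because its maximum is $q{-}1$ while every prefix witness has maximum at most $q{-}2$.
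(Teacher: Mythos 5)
Your proof is correct, and it reaches the bound by a genuinely different (more self-contained) route than the paper's. The paper disposes of necessity in a few lines by recycling the machinery from the proof of Proposition~\ref{prop:GEsetSdd}: with the thresholds $y_j, z_j$ defined there, the containment~(\ref{eq:GEsetSdd}) gives $[1:q\rangle \subseteq \bigcup_{j} \{ y_j, z_j \}$; under Scenario~(\Sdb) the binary state alphabet forces $y_j = 0$ for every $j$, so $[1:q\rangle \subseteq \{ z_j \,:\, j \in [n\rangle \}$, and if all $z_j$ were finite then~(\ref{eq:x}) would make $\bldu(q-1)$ the all-$*$ word, contradicting $\LE_q(q-1,0) = 0$; hence at least one $z_j$ is infinite and $q - 1 \le n - 1$. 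You instead build everything from scratch: the constant-one function $\GE_q(\cdot,0)$ pins each column~$j$ to a single binary value $c_j$ on an atom $A_j$, every implemented zero-set is then a union of atoms, and the $q-1$ prefixes together with the suffix $\{q-1\}$ exhibit $q$ atoms with pairwise distinct maxima, forcing $n \ge q$. The two arguments are tightly parallel underneath --- up to the paper's normalization, your $\max A_j$ is the paper's $z_j - 1$, and your suffix witness $\{q-1\}$ is exactly the paper's step ruling out that all $z_j$ are finite (some atom containing $q-1$ is the same as $\bldu(q-1) \ne {*}{*}\ldots{*}$) --- but your version avoids the without-loss-of-generality massaging of the state mapping and the $\GE_q(y-1,y)$ contradiction entirely, since with states restricted to $\Boolean^n$ the atom/zero-set description is immediate. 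What you give up is reusability: the paper's containment~(\ref{eq:GEsetSdd}) also serves Scenario~(\Sdd) and the necessity in Proposition~\ref{prop:GEsetSbd}, whereas your atom lemma is specific to binary state words. Your sufficiency argument, including the check that the ``otherwise'' construction in Proposition~\ref{prop:simple1} has ranges $\Boolean_*^n$ and $\Boolean^n$ as Scenario~(\Sdb) requires, coincides with the paper's.
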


\begin{proof}
Sufficiency follows from Proposition~\ref{prop:simple1}.
As for necessity, we consider first just
the~$q$ mappings $x \mapsto \GE_q(x,t)$
in $\GEset_q \cup \LEset_q$
and refer to the containment~(\ref{eq:GEsetSdd})
in the proof of Proposition~\ref{prop:GEsetSdd}.
For Scenario~(\Sdb)
we have $y_j = 0$ for all $j \in [n \rangle$ and, so,
\[
[1:q \rangle \subseteq \bigl\{ z_j \,:\, j \in [n \rangle \bigr\} .
\]
If $z_j = \infty$ for at least one $j \in [n \rangle$ we are done.
Otherwise, by~(\ref{eq:x}),
we must have $\bldu(q-1) = {*}{*} \ldots {*}$,
yet this would imply
$\LE_q(q-1,0) = \TCAM(\bldu(q-1),\bldtheta'(0)) = 1$,
which is impossible.
\end{proof}

%%\begin{table}[hbt]
%%\caption{Mappings for $\GEset_q \cup \LEset_q$
%%that attain $q = n$ under Scenario~(\Sdb).}
%%\label{tab:GELEsetSdb}
%%\[
%%\begin{array}{c|c}
%%x      & \bldu(x) \\
%%\hline
%%\raisebox{0ex}[2ex]{}
%%0      & {1}{*}{*}{*}{*}     \ldots {*} \\
%%1      & {*}{1}{*}{*}{*}     \ldots {*} \\
%%2      & {*}{*}{1}{*}{*}     \ldots {*} \\
%%\vdots & \hspace{1ex} \ddots            \\
%%n{-}3  & {*}{*}     \ldots {*}{1}{*}{*} \\
%%n{-}2  & {*}{*}{*}     \ldots {*}{1}{*} \\
%%n{-}1  & {*}{*}{*}{*}     \ldots {*}{1}
%%\end{array}
%%\quad\quad\quad
%%\begin{array}{c|c|c}
%%t      & \bldtheta(t) & \bldtheta'(t)         \\
%%\hline
%%\raisebox{0ex}[2ex]{}
%%0      & {1}{1}{1}{1}{1}     \ldots {1}
%%       & {1}{0}{0}{0}{0}     \ldots {0} \\
%%1      & {0}{1}{1}{1}{1}     \ldots {1}
%%       & {1}{1}{0}{0}{0}     \ldots {0} \\
%%2      & {0}{0}{1}{1}{1}     \ldots {1}
%%       & {1}{1}{1}{0}{0}     \ldots {0} \\
%%\vdots & \ddots \hspace{1ex} & \hspace{1ex} \ddots           \\
%%n{-}3  & {0}{0}     \ldots {0}{1}{1}{1}
%%       & {1}{1}     \ldots {1}{1}{0}{0} \\
%%n{-}2  & {0}{0}{0}     \ldots {0}{1}{1}
%%       & {1}{1}{1}     \ldots {1}{1}{0} \\
%%n{-}1  & {0}{0}{0}{0}     \ldots {0}{1}
%%       & {1}{1}{1}{1}     \ldots {1}{1}
%%\end{array}
%%\]
%%\end{table}

\begin{proposition}
\label{prop:GELEsetSbd}
Under Scenario~\textup{(\Sbd)},
the family $\GEset_q \cup \LEset_q$ is $n$-cell implementable,
\ifandonlyif
\[
q \le n + 1.
\]
\end{proposition}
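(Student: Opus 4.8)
The plan is to prove the two directions of the equivalence separately, noting that necessity comes essentially for free from an earlier result and that only sufficiency requires a (short) construction.

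For necessity I would simply restrict any implementation of the union to its subfamily $\GEset_q$. Concretely, if $\bldu:[q\rangle\to\Boolean^n$ and $\bldtheta:\GEset_q\cup\LEset_q\to\Boolean_*^n$ realize~(\ref{eq:TCAM}) for every member of $\GEset_q\cup\LEset_q$, then the very same $\bldu$, together with the restriction of $\bldtheta$ to $\GEset_q$, realizes~(\ref{eq:TCAM}) for every member of $\GEset_q$. Hence $\GEset_q$ is itself $n$-cell implementable under Scenario~(\Sbd), and Proposition~\ref{prop:GEsetSbd} forces $q\le n+1$, which is exactly the claimed bound. No new argument is needed for this direction.

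For sufficiency I would exhibit mappings attaining $q=n+1$ (when $q<n+1$ one first extends the functions arbitrarily to the domain $[n+1\rangle$, as in Proposition~\ref{prop:simple1}). The key idea is that a single input encoding serves both subfamilies: I reuse the threshold word $\bldu(x)=1^x0^{\,n-x}$ of Proposition~\ref{prop:GEsetSbd}, i.e. $u_j(x)=1$ when $x>j$ and $u_j(x)=0$ when $x\le j$. For the $\GE$ functions I keep the state words $\bldtheta(t)$ of that proposition ($\vartheta_j(t)=1$ when $t>j$ and $\vartheta_j(t)={*}$ otherwise), which already give $\GE_q(x,t)=\TCAM(\bldu(x),\bldtheta(t))$. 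The only fresh ingredient is a state map $\bldtheta'$ for the $\LE$ functions: since $x\le t$ is equivalent to $u_t(x)=0$ for every $t<n$, I set $\vartheta'_t(t)=0$ with all other entries $*$ (and $\bldtheta'(n)={*}{*}\cdots{*}$, matching $\LE_q(\cdot,n)\equiv1$). A one-line evaluation $\TCAM(\bldu(x),\bldtheta'(t))=\TCAM(u_t(x),0)$ then confirms $\LE_q(x,t)=\TCAM(\bldu(x),\bldtheta'(t))$ for all $x,t\in[n+1\rangle$, and all state symbols lie in $\Boolean_*$ as Scenario~(\Sbd) requires.

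I do not expect a genuine obstacle: once one observes that $\GEset_q$ already caps $q$ at $n+1$, the task reduces to checking that the $\LE$ functions can be appended without enlarging $\bldu$. The one point that deserves care is precisely this sharing of the input map---$\bldu$ must be common to both subfamilies, while $\bldtheta$ and $\bldtheta'$ may differ---and it succeeds because the word $1^x0^{\,n-x}$ encodes ``$x\ge t$'' through the conjunction of the constraints $u_j=1$ for $j<t$ and, simultaneously, encodes ``$x\le t$'' through the single constraint $u_t=0$.
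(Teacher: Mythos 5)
Your proposal is correct and takes essentially the same route as the paper: necessity by restricting any implementation to the subfamily $\GEset_q$ and invoking Proposition~\ref{prop:GEsetSbd}, and sufficiency by reusing that proposition's $\bldu$ and $\bldtheta$ while adding a fresh state map $\bldtheta'$ for $\LEset_q$. The only (immaterial) difference is in $\bldtheta'$: the paper sets $\vartheta'_j(t)=0$ for \emph{all} $j\ge t$, whereas you place a single $0$ at $j=t$; the two are functionally equivalent, since $u_t(x)=0$ already forces $u_j(x)=0$ for every $j>t$.
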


\begin{proof}
Necessity follows from Proposition~\ref{prop:GEsetSbd}.
Sufficiency follows by taking the mappings $x \mapsto \bldu(x)$
and $t \mapsto \bldtheta(t)$
as in the proof of that proposition and
defining the mapping
$t \mapsto \bldtheta'(t)$ by
\[
\vartheta'_j(t) =
\left\{
\begin{array}{ll}
0   & \textrm{if $t \le j$} \\
{*} & \textrm{if $t >   j$}
\end{array}
\right.
.
\]
%%shown in Tables~\ref{tab:GEsetSbd} and~\ref{tab:C'setSbd}.
\end{proof}

%%\begin{table}[hbt]
%%\caption{Mapping~$\bldtheta'$
%%for $\LEset_q$ that attains $q = n+1$ under Scenario~(\Sbd).}
%%\label{tab:C'setSbd}
%%\[
%%\begin{array}{c|c}
%%t      & \bldtheta'(t)              \\
%%\hline
%%\raisebox{0ex}[2ex]{}
%%0      & 00000             \ldots 0 \\
%%1      & {*}0000           \ldots 0 \\
%%2      & {*}{*}000         \ldots 0 \\
%%3      & {*}{*}{*}00       \ldots 0 \\
%%\vdots & \hspace{3ex} \ddots        \\
%%n{-}1  & {*}{*}{*}{*}   \ldots {*}0 \\
%%n      & {*}{*}{*}{*}{*} \ldots {*}
%%\end{array}
%%\]
%%\end{table}
\fi

\begin{proposition}
\label{prop:GELEsetSdd}
Under each of the scenarios~\textup{(\Sdd)} or~\textup{(\Srd)},
the family $\GEset_q \cup \LEset_q$ is $n$-cell implementable,
\ifandonlyif\
\ifPAGELIMIT
    $q \le \max \{ 2n-1, 2 \}$.
\else
\[
q \le
\left\{
\begin{array}{lcl}
2      && \textrm{if $n = 1$} \\
2n - 1 && \textrm{if $n \ge 2$}
\end{array}
\right.
.
\]
\fi
\end{proposition}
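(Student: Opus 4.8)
The plan is to prove the two bounds separately: sufficiency (an explicit $n$-cell implementation for $q=2n-1$, valid already under the more restrictive Scenario~(\Sdd)) and necessity (no implementation exists for $q=2n$, argued under the looser Scenario~(\Srd), which then covers~(\Sdd) as well). The degenerate case $n=1$, where the claimed bound is $q\le 2$, I would dispose of by direct inspection; so assume $n\ge 2$ throughout. Since $\GE_q(x,t)=\LE_q(t,x)$, the two halves of the family share the single input map $x\mapsto\bldu(x)$ but use separate state maps $t\mapsto\bldtheta(t)$ and $t\mapsto\bldtheta'(t)$, and it is precisely this sharing that should make the combined bound ($2n-1$) strictly smaller than the bound $2n+1$ for $\GEset_q$ alone (Proposition~\ref{prop:GEsetSdd}).

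For sufficiency I would exhibit explicit mappings and verify~(\ref{eq:TCAM}) for both $\GE$ and $\LE$, presented as a table in the style of Table~\ref{tab:GEsetSdd}. The design principle is that each cell should detect two thresholds of each type: with $\vartheta_j=1$ a Boolean cell tests $x\ge a_j$, while with $\vartheta_j=0$ it tests $x<a_j$, so a single Boolean threshold column already serves one $\GE$-parameter and one $\LE$-parameter. To push the count from the $n+1$ of Scenario~(\Sbd) up to $2n-1$, I would insert don't-care symbols into the columns so that one cell contributes two distinct $\GE$- (and two $\LE$-) parameters, giving columns of the ``sandwich'' shape $*\cdots*\,1\cdots1\,0\cdots0\,1\cdots1\,*\cdots*$. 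The verification that the resulting $2(n-1)$ parameters of each type are all covered is routine once the table is written; this construction (flagged as one of the notable ones in the introduction) is the more delicate of the two directions.

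For necessity, assuming~(\ref{eq:TCAM}) holds for the whole family with $n$ cells, I would first run the argument of Proposition~\ref{prop:GEsetSdd} on the $\GE$-half to obtain, for each $j$, the transitions $y_j<z_j$ of $\vartheta_j$ with $u_j(x)={*}$ for all $x\ge z_j$ and the covering $[1:q\rangle\subseteq\bigcup_j\{y_j,z_j\}$. Running the mirror-image argument (via $x\mapsto q{-}1{-}x$, $\GE\leftrightarrow\LE$) on the $\LE$-half yields symmetric transitions with $u_j(x)={*}$ for all $x\le z'_j$ and $[0:q-2]\subseteq\bigcup_j\{y'_j,z'_j\}$. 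Thus every column is a don't-care outside an interval $(z'_j,z_j)$. Each covering on its own only gives $q\le 2n+1$, and even naively combining them, using that $\bldu(0)$ must carry a Boolean symbol (for $\GE_q(0,1)=0$) and $\bldu(q{-}1)$ one (for $\LE_q(q{-}1,q{-}2)=0$), yields no better than $q\le 2n$. The remaining savings must come from the interaction of the two halves through the shared $\bldu$.

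The main obstacle is exactly extracting this final ``$-2$''. The point I would exploit is that the two coverings are \emph{not} independent: a column that has already ``retired'' to the don't-care symbol near one end (small $z_j$, or large $z'_j$) is useless for the opposite family at that end. Concretely, let $Z=\max\{z_j : z_j<\infty\}$; every column with finite $z_j$ equals ${*}$ on all positions $\ge Z$, so the down-thresholds $\LE_q(\cdot,t)$ restricted to the tail $\{Z,\dots,q{-}1\}$ must be realized by the columns with $z_j=\infty$ alone, and symmetrically at the low end for $\GE$. Turning this ``the tail must be served by dedicated columns'' observation into a clean count---showing that at least two of the $2n$ transition slots are necessarily lost---is the crux, and it is what upgrades $q\le 2n$ to $q\le 2n-1$. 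Finally I would note that allowing $\reject$ in $\bldu$ (Scenario~(\Srd)) does not help, since $\reject$ matches only ${*}$ and the transition analysis carries over verbatim, so the same bound holds under both scenarios.
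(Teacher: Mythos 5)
Your proposal is incomplete in both directions, and you say so yourself at the two decisive points. On necessity: the part you do carry out---the covering $[1:q\rangle\subseteq\bigcup_j\{y_j,z_j\}$ from the proof of Proposition~\ref{prop:GEsetSdd}, its mirror image for the $\LE$-half, and the observation that $\bldu(q-1)$ (resp.\ $\bldu(0)$) cannot be all-$*$ because $\LE_q(q-1,0)=0$---reproduces exactly the paper's intermediate bound $q-1\le\min\{|Y|+|Z|,|Y'|+|Z'|\}\le 2n-1$, i.e., $q\le 2n$. But the improvement to $2n-1$, which you label ``the crux,'' is precisely what you do not supply, and the paper does \emph{not} obtain it by the static count you envision (``at least two of the $2n$ transition slots are lost''). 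Instead it argues by induction on~$n$: in the extremal case $Y=Y'=[n\rangle$, $|Z|=|Z'|=n-1$, the words $\bldu(0)$ and $\bldu(q-1)$ each have a unique non-$*$ entry, and a case split on $\vartheta'_{n-1}(q-1)$ either deletes one cell and one endpoint of the domain (apply the induction hypothesis with $q-1$, $n-1$) or deletes one cell and \emph{both} endpoints (apply it with $q-2$, $n-1$, after re-indexing $x,t\mapsto x+1,t+1$), with a residual branch $\bldu(q-1)={*}\cdots{*}\reject$ killed by reducing to Proposition~\ref{prop:GEsetSdd} with $n-1$ cells. Even then, $n=2$ needs a separate ad hoc contradiction, because the inductive Case~2 only yields $q\le\max\{2n-1,4\}$. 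This boundary case is a concrete warning that a one-shot ``lost slots'' count along the lines of your $Z=\max\{z_j\}$ observation would have to be considerably more delicate than you suggest; your observation is in the right spirit (it is how the paper pins down the surviving column), but the count is never turned into a proof.

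On sufficiency you exhibit no mappings at all, and your stated design principle does not match a working construction. In the paper's Table~\ref{tab:GELEsetSdd} the generic column of $x\mapsto\bldu(x)$ is \emph{not} of the sandwich shape ${*}\cdots{*}1\cdots10\cdots01\cdots1{*}\cdots{*}$: it is $*$ everywhere except a single~$0$ (at $x=j$) followed later by a single~$1$ (at $x=n+j-1$), with two special boundary columns; each interior column contributes one transition to $t\mapsto\bldtheta(t)$ and one to $t\mapsto\bldtheta'(t)$, not ``two distinct $\GE$-parameters per column.'' Indeed, your sandwich columns conflict with the structural constraints you yourself derive in the necessity part: once $\vartheta_j(t)=1$ on $[y_j,z_j)$, Eq.~(\ref{eq:TCAM}) forces $u_j(x)\in\{1,*\}$ for all $x\ge y_j$, ruling out a later $0$-block flanked by~$1$'s on both sides. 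Since you flag this construction as one of the delicate contributions and then declare its verification ``routine once the table is written,'' the harder half of the sufficiency direction is simply missing. (Your closing remark that Scenario~(\Srd) adds nothing is essentially right for sufficiency, but for necessity the $\reject$-input possibility is a genuine branch---$\bldu(q-1)$ ending in $\reject$---that the paper must, and does, handle separately.)
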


\ifPAGELIMIT
    Sufficiency (for $n \ge 2$) follows from Table~\ref{tab:GELEsetSdd}
    (where some symbols are underlined to make it easier to see
    the general pattern).
\else
\begin{proof}
Sufficiency follows from Table~\ref{tab:GELEsetSdd} for $n \ge 2$
(some symbols in the table are underlined to make it easier to see
the general pattern). As for $n = 1$, we take:
\[
\begin{array}{c|c}
x & u(x) \\
\hline
\raisebox{0ex}[2ex]{}
0 & 0 \\
1 & 1
\end{array}
\quad\quad\quad
\begin{array}{c|c|c}
t & \vartheta(t) & \vartheta'(t)         \\
\hline
\raisebox{0ex}[2ex]{}
0 & {*} & 0                              \\
1 & 1   & {*}
\end{array}
\]

We show necessity by induction on~$n$.
Suppose that Eq.~(\ref{eq:TCAM}) holds and,
for $j \in [n \rangle$, let $y_j$ and $z_j$ be
defined for the mapping $t \mapsto \bldtheta(t)$ in
as in the proof of Proposition~\ref{prop:GEsetSdd}.
Also, let $y'_j$ be the largest
$t \in [q \rangle$ such that $\vartheta'_j(t) \in \Boolean$
(define $y'_j = -\infty$ if no such~$t$ exists)
and let $z'_j$ be the largest
$t \in [y'_j \rangle$ such that
$\vartheta'_j(t) \ne \vartheta'_j(y'_j)$
(with $z'_j = -\infty$ if no such~$t$ exists).\footnote{%
In other words, if $\hat{y}_j$ and $\hat{z}_j$ are defined
for the mapping
$t \mapsto \hat{\vartheta}_j(t) = \vartheta'_j(q{-}1{-}t)$
as $y_j$ and $z_j$ were defined for $t \mapsto \vartheta_j(t)$
in the proof of Proposition~\ref{prop:GEsetSdd},
then $y'_j = q{-}1{-}\hat{y}_j$ and $z'_j = q{-}1{-}\hat{z}_j$.}
Denote by~$Y$ (respectively, $Y'$) the set of all indexes
$j \in [n \rangle$ such that $y_j$ (respectively, $y'_j$) is finite
and by $Z \; (\subseteq Y)$
(respectively, $Z' \; (\subseteq Y')$)
the set of all indexes $j \in [n \rangle$
such that $z_j$ (respectively, $z'_j$) is finite
Clearly, $|Z| \le |Y| \le n$ and $|Z'| \le |Y'| \le n$
and, by~(\ref{eq:GEsetSdd})
(when applied to
$x \mapsto \bldu(x)$ and $t \mapsto \bldtheta(t)$
on the one hand, and to
$x \mapsto \bldu(q{-}1{-}x)$ and $t \mapsto \bldtheta'(q{-}1{-}t)$
on the other hand), we get:
\begin{equation}
\label{eq:J}
q-1 \le \min \{ |Y| + |Z|, |Y'| + |Z'| \} .
\end{equation}

If $Y = Z = [n \rangle$
then necessarily $\bldu(q-1) = {*}{*} \ldots {*}$ (by~(\ref{eq:x})),
yet this would imply by~(\ref{eq:TCAM})
that $\LE_q(q-1,0) = 1$, which is impossible.
Hence, $|Y| + |Z|\le 2n-1$ and, similarly, $|Y'| + |Z'| \le 2n-1$.
We conclude that the right-hand side of~(\ref{eq:J})
is at most $2n-1$, thereby establishing the necessity condition
for the induction base $n = 1$,
and also when $\min \{ |Y| + |Z|, |Y'| + |Z'| \} \le 2n-2$.

It remains to consider the case where
$Y = Y' = [n \rangle$ and $|Z| = |Z'| = n-1$.
By~(\ref{eq:x}), the entries of $\bldu(q-1)$ are
then all~$*$ except one,
and the same applies to $\bldu(0)$. Without loss of generality
we can assume that $Z = [n-1 \rangle$, thereby implying that
\[
\bldu(q-1) = {*}{*} \ldots {*} \reject
\quad \textrm{or} \quad
\bldu(q-1) = {*}{*} \ldots {*} 1 .
\]
In the first case we must have $\vartheta_{n-1}(t) = {*}$
for all $t \in [q \rangle$, thus effectively reducing
the value of~$n$ by~$1$ when writing~(\ref{eq:TCAM})
for the functions in $\GEset_q$ within $\GEset_q \cup \LEset_q$;
by Proposition~\ref{prop:GEsetSdd} we then get
$q \le 2(n-1) + 1 = 2n-1$.
In the remaining part of the proof we assume
that $\bldu(q-1) = {*}{*} \ldots {*} 1$;
this, in turn, forces having
\begin{equation}
\label{eq:n-1,q-1}
\vartheta'_{n-1}(t) = 0 \;\; \textrm{for all $t \in [q-1 \rangle$} .
\end{equation}
We distinguish between
the two possible values for $\vartheta'_{n-1}(q-1)$.

\emph{Case 1: $\vartheta'_{n-1}(q-1) = 1$.}
Here we get from~(\ref{eq:n-1,q-1}) that
$u_{n-1}(x) = {*}$ for all $x \in [q-1 \rangle$,
which means that Eq.~(\ref{eq:TCAM})
holds for all $f \in \GEset_q \cup \LEset_q$
with~$q$ and~$n$ in~(\ref{eq:TCAM}) replaced by
$q-1$ and $n-1$, respectively.
Hence, by the induction hypothesis,
\[
q-1 \le \max \{ 2(n-1) - 1, 2 \} ,
\]
namely, $q \le \max \{ 2n-2, 3 \} \le 2n-1$ when $n \ge 2$.

\emph{Case 2: $\vartheta'_{n-1}(q-1) = {*}$.}
Here we get from~(\ref{eq:n-1,q-1}) that
$Z' = Z = [n-1 \rangle$,
which means that the unique \mbox{non-$*$}
entry in $\bldu(0)$ must be its last (i.e., co-located with
the unique \mbox{non-$*$} entry in $\bldu(q-1)$).
Recalling that $\vartheta'_{n-1}(0) = 0$
(by~(\ref{eq:n-1,q-1})), we must then have
$\bldu(0) = {*}{*} \ldots {*} 0$, thereby implying that
\[
\vartheta_{n-1}(t) = 1
\;\; \textrm{for all $t \in [1:q \rangle$} .
\]
Combining this with~(\ref{eq:n-1,q-1})
we must also have $u_{n-1}(x) = {*}$ for all $x \in [1:q-1 \rangle$.
We conclude that Eq.~(\ref{eq:TCAM}) holds
for all $f \in \GEset_q \cup \LEset_q$
with~$n$ in that equation replaced by $n-1$
and with~$x$ and~$t$ restricted to $[1:q-1 \rangle$.
Replacing~$x$ and~$t$ in~(\ref{eq:TCAM})
by $x+1$ and $t+1$, respectively, and noting that
$\GE_{q-2}(x,t) = \GE_q(x+1,t+1)$
and
$\LE_{q-2}(x,t) = \LE_q(x+1,t+1)$
for $x, t \in [q-2 \rangle$, we get from the induction hypothesis that
\[
q-2 \le \max \{ 2(n-1)-1, 2 \} ,
\]
namely, $q \le \max \{ 2n-1, 4 \}$.
This completes the proof for Case~2 when $n > 2$.
As for $n = 2$, if $q = 4$ were attainable, then, from
what we have just shown, we would have $u_1(1) = u_1(2) = {*}$,
which would imply that
$u_0(1) = \vartheta'_0(1) \in \Boolean$
and $u_0(2) = \vartheta_0(2) \in \Boolean$
and $u_0(1) \ne u_0(2)$.
However, this would
mean that $\LE_4(x,0) = \TCAM(u_0(x),\vartheta'_0(0)) = 1$
for at least one $x \in \{ 1, 2 \}$
(regardless of the value of $\vartheta'_0(0)$),
which is a contradiction.
\end{proof}
\fi

\begin{table}[hbt]
\caption{Mappings for $\GEset_q \cup \LEset_q$
that attain $q = 2n-1$ under Scenario~(\Sdd).%
\ifPAGELIMIT
    \vspace{-2ex}
\fi}
\label{tab:GELEsetSdd}
\[
\ifPAGELIMIT
    \renewcommand{\arraystretch}{0.7}
\fi
\begin{array}{c|c}
x      & \bldu(x) \\
\hline
\raisebox{0ex}[2ex]{}
0      & \Emph{00}{*}{*}{*}{*}     \ldots {*} \\
\hline
\raisebox{0ex}[2ex]{}
1      & \Emph{1}0{*}{*}{*}{*}     \ldots {*} \\
2      & \Emph{1}{*}0{*}{*}{*}     \ldots {*} \\
3      & \Emph{1}{*}{*}0{*}{*}     \ldots {*} \\
\vdots & \hspace{0.4ex}\Emph{\vdots}
         \hfill \hspace{1ex} \ddots \hfill    \\
n{-}2  & \Emph{1}{*}{*}{*}     \ldots {*}0{*} \\
n{-}1  & \Emph{1}{*}{*}{*}{*}     \ldots {*}0 \\
\cline{1-2}
\raisebox{0ex}[2ex]{}
n      & \Emph{11}{*}{*}{*}{*}     \ldots {*} \\
\cline{1-2}
\raisebox{0ex}[2ex]{}
n{+}1  & \Emph{{*}1}1{*}{*}{*}     \ldots {*} \\
n{+}2  & \Emph{{*}1}{*}1{*}{*}     \ldots {*} \\
n{+}3  & \Emph{{*}1}{*}{*}1{*}     \ldots {*} \\
\vdots & \hspace{1.4ex}\Emph{\vdots} \hfill\hspace{2ex} \ddots \hfill \\
2n{-}3 & \Emph{{*}1}{*}{*}     \ldots {*}1{*} \\
2n{-}2 & \Emph{{*}1}{*}{*}{*}     \ldots {*}1
\end{array}
\quad\quad\quad
\begin{array}{c|c|c}
t      & \bldtheta(t) & \bldtheta'(t)         \\
\hline
\raisebox{0ex}[2ex]{}
0      & \Emph{{*}}{*}{*}{*}{*}{*} \ldots {*}
       & \Emph{00}0000               \ldots 0 \\
\hline
\raisebox{0ex}[2ex]{}
1      & \Emph{1}{*}{*}{*}{*}{*}   \ldots {*}
       & \Emph{{*}0}1111             \ldots 1 \\
2      & \Emph{1}1{*}{*}{*}{*}     \ldots {*}
       & \Emph{{*}0}0111             \ldots 1 \\
3      & \Emph{1}11{*}{*}{*}       \ldots {*}
       & \Emph{{*}0}0011             \ldots 1 \\
\vdots & \ddots & \hspace{5ex} \ddots         \\
n{-}2  & \Emph{1}111           \ldots 1{*}{*}
       & \Emph{{*}0}000             \ldots 01 \\
n{-}1  & \Emph{1}1111             \ldots 1{*}
       & \Emph{{*}0}0000             \ldots 0 \\
\cline{1-1} \cline{3-3}
\raisebox{0ex}[2ex]{}
n      & \Emph{1}11111               \ldots 1
       & \Emph{{*}}{*}0000           \ldots 0 \\
\cline{1-2}
\raisebox{0ex}[2ex]{}
n{+}1  & \Emph{01}1111               \ldots 1
       & \Emph{{*}}{*}{*}000         \ldots 0 \\
n{+}2  & \Emph{01}0111               \ldots 1
       & \Emph{{*}}{*}{*}{*}00       \ldots 0 \\
n{+}3  & \Emph{01}0011               \ldots 1
       & \Emph{{*}}{*}{*}{*}{*}0     \ldots 0 \\
\vdots & \hspace{4ex} \ddots & \hspace{6ex} \ddots         \\
2n{-}3 & \Emph{01}00               \ldots 011
       & \Emph{{*}}{*}{*}{*}{*}   \ldots {*}0 \\
2n{-}2 & \Emph{01}000 \ldots 01
       & \Emph{{*}}{*}{*}{*}{*}{*} \ldots {*}
\end{array}
\]
\end{table}

\begin{proposition}
\label{prop:GELEsetSdr}
Under each of the scenarios~\textup{(\Sdr)} or~\textup{(\Srr)},
the family $\GEset_q \cup \LEset_q$ is $n$-cell implementable,
\ifandonlyif\
\ifPAGELIMIT
    $q \le 2n$.
\else
\[
q \le 2n .
\]
\fi
\end{proposition}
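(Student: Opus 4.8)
The plan is to dispose of sufficiency in one line and put all the work into necessity, which I would prove under the looser Scenario~(\Srr); the bound for~(\Sdr) then follows automatically, since in~(\Sdr) the map $\bldu$ takes values in $\Boolean_*^n \subseteq \Boolean_\reject^n$, so every (\Sdr)-implementation is in particular a (\Srr)-implementation. For sufficiency, since $\GEset_q \cup \LEset_q \subseteq \Fset_q$, Proposition~\ref{prop:simple3} already furnishes mappings realizing all of $\Fset_q$ whenever $q \le 2n$, hence a fortiori our family.

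For necessity I would pass to the Boolean reformulation of Remark~\ref{rem:lambda}. Under~(\Srr) an $n$-cell implementation is equivalent, with no constraints, to Boolean maps $\widetilde{\bldu}:[q\rangle \to \Boolean^{2n}$ and $\widetilde{\bldtheta},\widetilde{\bldtheta}':[q\rangle \to \Boolean^{2n}$ with $\GE_q(x,t) = \bigwedge_{j \in [2n\rangle}(\widetilde{u}_j(x) \vee \widetilde{\vartheta}_j(t))$ and $\LE_q(x,t) = \bigwedge_{j \in [2n\rangle}(\widetilde{u}_j(x) \vee \widetilde{\vartheta}'_j(t))$. Writing $m = 2n$ and recording each map by its zero-set, $U(x) = \{j : \widetilde{u}_j(x) = 0\}$, $\Theta(t) = \{j : \widetilde{\vartheta}_j(t) = 0\}$, $\Theta'(t) = \{j : \widetilde{\vartheta}'_j(t) = 0\}$, these identities become the disjointness statements
\[
\GE_q(x,t) = 1 \iff U(x) \cap \Theta(t) = \emptyset,
\qquad
\LE_q(x,t) = 1 \iff U(x) \cap \Theta'(t) = \emptyset .
\]
A short argument (take $\Theta$ to be the complement of $\bigcup_{y:f(y)=1} U(y)$, which is the optimal choice) shows that, for a \emph{fixed} input map $U$, a function $f$ admits such a representation if and only if $U(x) \not\subseteq \bigcup_{y:f(y)=1} U(y)$ for every $x$ with $f(x) = 0$.

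Applying this to each $\GE_q(\cdot,t)$ and, separately, to each $\LE_q(\cdot,t)$, and noting that for $\GE$ the binding threshold is $t = x+1$ while for $\LE$ it is $t = x-1$, I obtain two families of constraints on the common map $U$:
\[
U(x) \not\subseteq \bigcup_{y > x} U(y) \;\; (0 \le x \le q-2),
\qquad
U(x) \not\subseteq \bigcup_{y < x} U(y) \;\; (1 \le x \le q-1) .
\]
The first says that for each $x \le q-2$ some index of $[m\rangle$ has its \emph{last} occurrence (among $U(0),\ldots,U(q-1)$) exactly at $x$; the second says that for each $x \ge 1$ some index has its \emph{first} occurrence exactly at $x$. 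The counting then runs as follows: the $q-1$ indices witnessing first occurrences at $1,\ldots,q-1$ are pairwise distinct (distinct first-occurrence positions) and all have first occurrence $\ge 1$, whereas the index witnessing a last occurrence at $0$ must occur only at position $0$, so its first occurrence is $0$ and it differs from all of the former. Hence at least $q$ distinct indices of $[m\rangle$ are used, forcing $m = 2n \ge q$.

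The main obstacle is exactly this last counting step. Using only the $\GE$-constraints, the same argument gives merely $q \le 2n+1$ (which is where the necessity half of Proposition~\ref{prop:GEsetSdr} comes from), and the extra unit of slack has to be squeezed out by combining both directions. The crux is the observation that the ``last occurrence at $0$'' witness forced by $\GEset_q$ cannot be recycled as any of the ``first occurrence'' witnesses forced by $\LEset_q$; this single non-reusable index is precisely the obstruction that separates $\GEset_q \cup \LEset_q$ from $\GEset_q$, and it is what turns $2n+1$ into $2n$.
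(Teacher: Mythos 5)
Your proposal is correct, and its necessity argument takes a genuinely different route from the paper's. The paper works on the \emph{state} side, in the TCAM alphabet: it normalizes each coordinate map $t \mapsto \vartheta_j(t)$ into the threshold form $({*},\ldots,{*},1,\ldots,1,\reject,\ldots,\reject)$ with breakpoints $y_j, z_j$, notes that consecutive words $\bldtheta(y{-}1)$ and $\bldtheta(y)$ must differ so that $[1:q\rangle \subseteq \bigcup_{j}\{y_j,z_j\}$ (which already yields $q \le 2n+1$, i.e., the necessity half of Proposition~\ref{prop:GEsetSdr}), and then squeezes out the last unit by showing that some $z_j = \infty$: otherwise $\bldtheta(q{-}1) = \reject\reject\ldots\reject$ would force $\bldu(q{-}1) = {*}{*}\ldots{*}$ and hence the contradiction $\LE_q(q{-}1,0)=1$. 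You instead dualize to the \emph{input} side: via the $\lambda$-encoding of Remark~\ref{rem:lambda} (where Scenario~(\Srr) is precisely the unconstrained case, so your use of it is legitimate) you reduce everything to the zero-sets $U(x) \subseteq [2n\rangle$, observe that $\Theta = [2n\rangle \setminus \bigcup_{y:f(y)=1}U(y)$ is the optimal state choice, and translate the two comparison families into first-/last-occurrence constraints on the sequence $U(0),\ldots,U(q{-}1)$. Your final count is sound: the $q-1$ first-occurrence witnesses at positions $1,\ldots,q-1$ are pairwise distinct, and the last-occurrence-at-$0$ witness occurs \emph{only} at position~$0$, so it is distinct from all of them, giving $q \le 2n$; this non-reusable index is the exact combinatorial shadow of the paper's ``$\bldtheta(q{-}1)=\reject\cdots\reject$ forces $\bldu(q{-}1)={*}\cdots{*}$'' step. (Your sufficiency via Proposition~\ref{prop:simple3} and your reduction of Scenario~(\Sdr) to~(\Srr) by inclusion of alphabets both coincide with the paper.) As for what each approach buys: the paper's route reuses machinery built for Proposition~\ref{prop:GEsetSdd} (Eqs.~(\ref{eq:x}) and~(\ref{eq:GEsetSdd})), so all the $\GEset_q$/$\LEset_q$ results live in one framework; yours is self-contained, symmetric in $\GE$/$\LE$, replaces the paper's without-loss-of-generality re-normalizations of the state maps by the one-line optimal-$\Theta$ observation, and makes completely transparent where the gap between $2n+1$ (for $\GEset_q$ alone) and $2n$ (for the union) comes from.
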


\ifPAGELIMIT
    Sufficiency follows from Proposition~\ref{prop:simple3},
    and necessity follows from arguments that are similar to
    those that were used in the proof
    Proposition~\ref{prop:GEsetSdd}.
\else
\begin{proof}
Sufficiency follows from Proposition~\ref{prop:simple3}.

Necessity follows from arguments that are similar to
(and in fact are even simpler than)
those that we used in the proofs of
Propositions~\ref{prop:GEsetSdd} and~\ref{prop:GELEsetSdd}.
Specifically, for each $j \in [n \rangle$, let $y_j$ denote the smallest
$t \in [q \rangle$ such that
$\vartheta_j(t) \in \Boolean \cup \{ \reject \}$
(with $y_j = \infty$ if no such~$t$ exists).
By the very same reasoning that was given
in the proof of Proposition~\ref{prop:GEsetSdd},
we can assume hereafter without loss of generality that
$\vartheta_j(y_j) \in \{ 1, \reject \}$
and that
$\vartheta_j(t) \in \Boolean \cup \{ \reject \}$ for every $t > y_j$.

Next, for each $j \in [n \rangle$, let $z_j$ be the smallest
$t \in [y_j:q \rangle$ such that
$\vartheta_j(t) \in \{ 0, \reject \}$
(with $z_j = \infty$ if no such~$t$ exists).
By~(\ref{eq:x}) we then must have
$u_j(x) = {*}$ for every $x \ge z_j$
and, so, without loss of generality we can assume that
$\vartheta_j(t) = \reject$ for every $t \ge z_j$.
In summary, the mapping $t \mapsto \bldtheta(t)$
takes the following form:
\begin{equation}
\label{eq:Qyjzj}
\vartheta_j(t) =
\left\{
\begin{array}{lcl}
{*}     && \textrm{if $t < y_j$} \\
1       && \textrm{if $y_j \le t < z_j$} \\
\reject && \textrm{if $t \ge z_j$}
\end{array}
\right.
.
\end{equation}

As our next step, we show that~(\ref{eq:GEsetSdd}) holds.
Indeed, if there were $y \in [1:q \rangle$ that did not belong
to the right-hand side of~(\ref{eq:GEsetSdd})
then, from~(\ref{eq:Qyjzj}), we would have
$\bldtheta(y) = \bldtheta(y-1)$.
Yet, by~(\ref{eq:TCAM}), this would mean
that $\GE_q(y-1,y) = \GE_q(y,y) = 1$, which is a contradiction.
  From~(\ref{eq:GEsetSdd}) we now get the necessary condition in
Proposition~\ref{prop:GEsetSdr} (for the family $\Phi = \GEset_q$).
The necessary condition for
$\Phi = \GEset_q \cup \LEset_q$
follows by showing that $z_j = \infty$
for at least one $j \in [n \rangle$ (and, therefore,
the right-hand side of~(\ref{eq:GEsetSdd}) contains less
than $2n$ elements).
Otherwise, we would have
$\bldtheta(q-1) = \reject \reject \ldots \reject$
and, consequently, $\bldu(q-1) = {*}{*} \ldots {*}$,
which, with~(\ref{eq:TCAM'}), would
yield $\LE_q(q-1,0) = 1$, thereby reaching a contradiction.
\end{proof}
\fi

\section{The whole set $\Fset_q$}
\label{sec:Fset}

\ifPAGELIMIT
    \begin{proposition}
    \label{prop:FsetSdd}
    Under each of the scenarios~\textup{(\Sdb)}, \textup{(\Sbd)},
    \textup{(\Sdd)}, or~\textup{(\Srd)},
    the set $\Fset_q$ is $n$-cell implementable, \ifandonlyif\
    $q \le n$.
    \end{proposition}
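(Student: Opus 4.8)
The plan is to dispatch sufficiency in one line and then prove the matching necessity $q\le n$ for each of the four scenarios, organizing them by their state alphabet. Sufficiency is immediate: in every one of the four scenarios, $q\le n$ gives $n$-cell implementability of $\Fset_q$ (indeed of any $\Phi\subseteq\Fset_q$) by Proposition~\ref{prop:simple1}. The easiest necessity is Scenario~\textup{(\Sdb)}, where the state alphabet is $\Boolean$, so $\bldtheta:\Fset_q\to\Boolean^n$. This map is injective, since $\bldtheta(f)=\bldtheta(g)$ forces $f(x)=\TCAM(\bldu(x),\bldtheta(f))=\TCAM(\bldu(x),\bldtheta(g))=g(x)$ for all $x$; hence $2^q=|\Fset_q|\le|\Boolean^n|=2^n$, i.e.\ $q\le n$.

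For the remaining three scenarios I would exploit that they share the state alphabet $\Boolean_*$ and have nested input alphabets $\Boolean\subseteq\Boolean_*\subseteq\Boolean_\reject$, so any implementation valid under \textup{(\Sbd)} or \textup{(\Sdd)} is a fortiori valid under \textup{(\Srd)}; it therefore suffices to prove $q\le n$ under the most permissive scenario \textup{(\Srd)}. Here I would pass to a Boolean-feature description via the map $\lambda:\Boolean_\reject\to\Boolean^2$ with $\lambda(\reject)=00$, $\lambda(0)=01$, $\lambda(1)=10$, $\lambda({*})=11$, for which $\TCAM(u,\vartheta)=(\lambda_0(u)\vee\lambda_1(\vartheta))\wedge(\lambda_1(u)\vee\lambda_0(\vartheta))$. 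Expanding~(\ref{eq:TCAM}) coordinatewise rewrites each row as a monotone conjunction $f(x)=\bigwedge_{k\in K_f}\widetilde u_k(x)$ over $2n$ Boolean features $\widetilde u_k(x)$, where $K_f$ records the positions at which $\lambda(\bldtheta(f))$ equals $0$. Since the state alphabet forbids $\reject$ (i.e.\ $\lambda(\vartheta_j)\ne 00$), the index set $K_f$ contains at most one feature of each pair $\{2j,2j{+}1\}$. Writing $A_k=\widetilde u_k^{-1}(1)\subseteq[q\rangle$, the row-realizable functions are exactly the intersections $\bigcap_{k\in K}A_k$ that use at most one set from each of the $n$ pairs $\{A_{2j},A_{2j+1}\}$, so requiring all of $\Fset_q$ to be realizable says precisely that these $n$ pairs generate $2^{[q\rangle}$ under such restricted intersections; equivalently, $[q\rangle$ is shattered by the class of (generalized) monomials in $n$ variables, and $q$ is at most the VC~dimension of that class.

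The remaining, and principal, task is to show this VC~dimension equals $n$. I expect the naive route through co-singletons to yield only the weaker bound $q\le 2n$: realizing $[q\rangle\setminus\{x\}$ forces some feature $A_k$ to equal $[q\rangle\setminus\{x\}$ exactly, and a single pair can host two such co-singleton features (take $u_j(x)=0$, $u_j(x')=1$, and $u_j={*}$ elsewhere), so each pair can account for two points. Closing the gap from $2n$ to $n$ is exactly where the pairing must be spent: if a pair hosts the co-singletons $[q\rangle\setminus\{x\}$ and $[q\rangle\setminus\{x'\}$, then realizing the two-point deletion $[q\rangle\setminus\{x,x'\}$ cannot conjoin both literals of that pair, so it must recruit a feature from another pair. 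I would formalize this by induction on $n$: choose a co-singleton $[q\rangle\setminus\{x^*\}$ equal to some feature $A_k$ of a pair $j$, restrict the domain to $[q\rangle\setminus\{x^*\}$ (on which $A_k$ becomes the full, hence trivial, set), argue that the partner literal of pair $j$ can then be dispensed with, and conclude that the remaining $n-1$ pairs shatter the $(q-1)$-element domain, giving $q-1\le n-1$. The base case $n=1$ is immediate, as a single pair yields at most the three sets $[q\rangle$, $A_0$, $A_1$, forcing $2^q\le 3$ and $q\le 1$.

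The main obstacle is precisely the redundancy/absorption step of this induction: making rigorous that, once a co-singleton literal has been trivialized by restricting the domain, its partner literal can be removed without losing any realizable subset. This is the only place where the bound improves on the elementary $q\le 2n$, and it relies essentially on the exclusion of $\reject$ from the state (the ``at most one literal per pair'' rule). It is also exactly the point that separates the present case from $\Nset_q$, which under the same scenario only needs $q\le 2n$; so I would expect the argument here to be genuinely more delicate than the pigeonhole-style proofs used for $\Nset_q$ and $\GEset_q$.
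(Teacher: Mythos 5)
Your sufficiency argument (Proposition~\ref{prop:simple1}), your counting bound for Scenario~(\Sdb), your reduction of the three $\Boolean_*$-state scenarios to the loosest one, (\Srd), and your $\lambda$-translation into shattering by monomials using at most one literal per coordinate pair are all correct, and they match the paper's framing: the paper likewise proves necessity only under ``(the loosest) Scenario~(\Srd)'', and your monomial picture is exactly the paper's VC-dimension reformulation (cf.\ Remark~\ref{rem:lambda} and Section~\ref{sec:VCdimension}). But the proof has a genuine gap exactly where you flag it: the absorption step of your induction on~$n$ is asserted, not established, and as stated it is dubious. After you delete $x^*$ and trivialize the co-singleton literal $A_k$, every $S\subseteq D=[q\rangle\setminus\{x^*\}$ must still be realized over~$D$, and the realizations available to you are restrictions of realizations over $[q\rangle$ of either $S$ or $S\cup\{x^*\}$. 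A realization of $S\cup\{x^*\}$ cannot use $A_k$, but it \emph{can} use the partner $A_{k'}$ (nothing prevents $x^*\in A_{k'}$), and a realization of $S$ can use $A_{k'}$ as well; you never rule out the case in which every realization of both lifts does, so ``the partner literal can then be dispensed with'' is precisely the unproven heart of the matter. As you correctly observe, without it one only gets the co-singleton count $q\le 2n$.

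The paper closes this gap by \emph{merging} two points rather than deleting one, inducting on~$q$. Pigeonholing the images $\bldtheta(\N_q(\cdot,t))$---each normalizable to a single non-$*$ coordinate, as in the proof of Proposition~\ref{prop:NsetSbd}---produces, when $q>n$, two parameters (w.l.o.g.\ $q-2$ and $q-1$) sharing coordinate $n-1$, which pins down $u_{n-1}$ completely as in~(\ref{eq:un-1}); in your language, pair $n-1$ hosts exactly the two co-singletons of $q-2$ and $q-1$. The paper then sets $u'_j(q-2)=\mu(u_j(q-2),u_j(q-1))$, where $\mu$ is the meet under $\preceq$, and $\vartheta'_j(f)=\vartheta_j(\tilde f)$ with $\tilde f(q-1)=f(q-2)$; Lemma~\ref{lem:FsetSdd}, i.e.\ $\TCAM(\mu(u,v),\vartheta)=\TCAM(u,\vartheta)\wedge\TCAM(v,\vartheta)$, makes the merged input act as the conjunction of the two old evaluations. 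Merging is what makes the dropped coordinate genuinely redundant: in the reduced instance every target function takes \emph{equal} values on the two identified points, so the pair whose only job was to separate them is never needed---exactly the redundancy your domain restriction cannot certify, since after deletion the surviving subsets of $D$ still encode arbitrary behavior at the neighbors of $x^*$. Two features of the paper's construction corroborate this design: first, $\mu(0,1)=\reject$, so the merged inputs may leave $\Boolean_*^n$, which is why the induction is closed under (\Srd) and why your reduction to (\Srd) is not merely convenient but essential; second, the verification splits into the cases $x\in[q-2\rangle$, and $x=q-2$ with $f(q-2)=1$ or $f(q-2)=0$, the last using that $\TCAM(u_{n-1}(y),\vartheta_{n-1}(\tilde f))=1$ for at least one $y\in\{q-2,q-1\}$, so some surviving coordinate must reject. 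If you wish to salvage your formulation, replace ``restrict the domain to $[q\rangle\setminus\{x^*\}$'' by ``identify $x^*$ with the point whose co-singleton is hosted by the partner feature of the same pair''---that identification, implemented via $\mu$, is the missing idea.
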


    Sufficiency follows from Proposition~\ref{prop:simple1}.

    Proposition~\ref{prop:FsetSdd} can also be stated
    in terms of the VC~dimension of Boolean monomials
    under extended substitutions rules~\cite[\S 7.3]{AB},\cite{NS}.
    Specifically, in addition to the elements of~$\Boolean$,
    we allow to substitute~$*$ (respectively, $\reject$) into each
    variable, in which case both the variable and its complement
    are defined to be~$1$ (respectively, $0$).
    Scenario~(\Sbd) corresponds to the case
    where the evaluation points are restricted to $\Boolean^n$
    and, for this scenario, it was shown in~\cite{NS} that
    the VC~dimension of the set of all $n$-variate monomials equals~$n$.
    Proposition~\ref{prop:FsetSdd} implies that the VC~dimension
    does not increase even if we extend
    the set of evaluation points to $\Boolean_\reject^n$.
\else
In this section, we treat the case where $\Phi = \Fset_q$
(the whole set of functions $[q \rangle \rightarrow \Boolean$).
Differently from previous sections, we start with
Scenarios~(\Srd) and~(\Srr), as they are rather straightforward.

\begin{proposition}
\label{prop:FsetSdr}
Under each of the scenarios~\textup{(\Sdr)} or~\textup{(\Srr)},
the set $\Fset_q$ is $n$-cell implementable, \ifandonlyif\
\[
q \le 2n .
\]
\end{proposition}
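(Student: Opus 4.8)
The plan is to handle the two directions separately while treating Scenarios~(\Sdr) and~(\Srr) together, since in both of them the codomain of $\bldtheta$ is $\Boolean_\reject^n$. For sufficiency there is nothing to prove beyond quoting Proposition~\ref{prop:simple3}: it already asserts that under either scenario \emph{every} subset $\Phi \subseteq \Fset_q$, and in particular the whole set $\Phi = \Fset_q$, is $n$-cell implementable as soon as $q \le 2n$ (the explicit mappings~(\ref{eq:simple3u})--(\ref{eq:simple3theta}) have exactly the prescribed alphabets). Thus the entire content of the proposition is the necessity bound.

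For necessity I would argue by a pure cardinality count, and the argument is uniform across both scenarios because it never refers to $\bldu$ at all. The key observation is that the mapping $f \mapsto \bldtheta(f)$ must be injective on $\Fset_q$: if $\bldtheta(f) = \bldtheta(g)$ for two functions $f, g \in \Fset_q$, then for every $x \in [q\rangle$,
\[
f(x) = \TCAM(\bldu(x),\bldtheta(f)) = \TCAM(\bldu(x),\bldtheta(g)) = g(x)
\]
by~(\ref{eq:TCAM}), forcing $f = g$. Since under both Scenarios~(\Sdr) and~(\Srr) the range of $\bldtheta$ is $\Boolean_\reject^n$, which has $4^n = 2^{2n}$ elements, injectivity gives
\[
2^q = |\Fset_q| \le |\Boolean_\reject^n| = 2^{2n} ,
\]
and hence $q \le 2n$, as required.

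I do not expect a genuine obstacle here; the whole point is that $\Fset_q$ is so large ($2^q$ functions) that the crudest conceivable estimate, namely the size of the codomain of $\bldtheta$, is already tight. This is in sharp contrast with the smaller families $\Eset_q$, $\Nset_q$, and $\GEset_q$ treated earlier, each of size only $q$: there $q \ll |\Boolean_\reject^n|$, so counting the codomain of $\bldtheta$ says nothing, and one is forced into the structural antichain bounds of Lemmas~\ref{lem:SpernerSdr}--\ref{lem:SpernerSrr} or into pigeonhole-on-indices arguments. The only point requiring a moment's care is to confirm that the sufficient construction of Proposition~\ref{prop:simple3} indeed lands in the alphabets prescribed by these two scenarios, which it does by inspection of~(\ref{eq:simple3u})--(\ref{eq:simple3theta}).
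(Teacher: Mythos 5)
Your proof is correct and matches the paper's own argument: the paper likewise cites Proposition~\ref{prop:simple3} for sufficiency and establishes necessity by exactly your counting argument, namely that $\bldtheta : \Fset_q \rightarrow \Boolean_\reject^n$ must be injective, whence $2^q \le 2^{2n}$. (The paper also notes necessity follows alternatively from Proposition~\ref{prop:NsetSdd}, but the counting route you chose is the one it spells out.)
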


\begin{proof}
Sufficiency follows from Proposition~\ref{prop:simple3}
and necessity follows either from Proposition~\ref{prop:NsetSdd} or
by a counting argument:
$\bldtheta : \Fset_q \rightarrow \Boolean_\reject^n$ is injective
and, so, $2^q \le 2^{2n}$.
\end{proof}

The next proposition covers all the remaining scenarios
in Table~\ref{tab:summary}.

\begin{proposition}
\label{prop:FsetSdd}
Under each of the scenarios~\textup{(\Sdb)}, \textup{(\Sbd)},
\textup{(\Sdd)}, or~\textup{(\Srd)},
the set $\Fset_q$ is $n$-cell implementable, \ifandonlyif
\[
q \le n .
\]
\end{proposition}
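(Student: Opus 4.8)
The plan is to dispose of sufficiency at once and to concentrate the whole argument on necessity, which I would first reduce to a single sharpest scenario. Sufficiency (that $q\le n$ implies $n$-cell implementability) is already Proposition~\ref{prop:simple1}. For necessity, I would observe that among the four scenarios listed, Scenario~\textup{(\Srd)} is the loosest: since $\Boolean\subseteq\Boolean_*\subseteq\Boolean_\reject$ for the input alphabet and $\Boolean\subseteq\Boolean_*$ for the state alphabet, any mappings $\bldu,\bldtheta$ witnessing implementability under \textup{(\Sdb)}, \textup{(\Sbd)} or \textup{(\Sdd)} are also admissible under \textup{(\Srd)}. Hence it suffices to prove the single statement: if $\Fset_q$ is $n$-cell implementable under Scenario~\textup{(\Srd)}, then $q\le n$.

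Next I would recast this as a shattering (VC-dimension) problem, matching the viewpoint the paper announces and the benchmark of~\cite{NS}. Since the state never equals $\reject$ here, each $\bldtheta(f)\in\Boolean_*^n$ is an ordinary monomial, and implementing \emph{all} of $\Fset_q$ says exactly that the $q$ points $\bldu(x)$ are shattered by such monomials. Concretely, for each coordinate $j$ I would introduce the two rejection sets $R_j^0=\{x:u_j(x)\in\{1,\reject\}\}$ and $R_j^1=\{x:u_j(x)\in\{0,\reject\}\}$; setting $\vartheta_j=0,1,*$ rejects the points of $R_j^0,R_j^1,\emptyset$ respectively, so that $\{x:f(x)=0\}=\bigcup_j R_j^{\vartheta_j(f)}$. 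Implementability of $\Fset_q$ is then the combinatorial requirement that \emph{every} $U\subseteq[q\rangle$ be a union of chosen rejection sets, using at most one set per coordinate.

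From realizing the singletons $U=\{x\}$ I would extract, for each $x$, a coordinate $\pi_x$ and a bit $\sigma_x$ with $R_{\pi_x}^{\sigma_x}=\{x\}$. A short check shows the fibers of $\pi$ have size at most $2$, and a size-$2$ fiber is a ``twin pair'' $\{x,x'\}$ carried by a coordinate that equals $0$ and $1$ on the two twins and $*$ on every other point. This already yields $q\le 2n$. The real content is upgrading $2n$ to $n$, and this is exactly where the state alphabet being $\Boolean_*$ rather than $\Boolean_\reject$ is used: such a saturated coordinate offers only the rejection sets $\{x\}$, $\{x'\}$ and $\emptyset$, hence can reject at most \emph{one} twin, so realizing $U=\{x,x'\}$ forces a second, distinct coordinate whose chosen rejection set lies inside $\{x,x'\}$ and covers the other twin. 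I would turn this ``one extra coordinate per twin pair'' into an injective charging---equivalently, verify Hall's condition for the bipartite graph joining $x$ to every $j$ admitting some $R_j^{s}=\{x\}$ and extract a matching of the $q$ points to distinct coordinates---whence $q\le n$.

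The main obstacle is precisely this factor-of-two: the per-point witness $x\mapsto\pi_x$ is genuinely non-injective, and making the charging rigorous is delicate because a point may possess several witness coordinates and a twin-carrying coordinate's \emph{other} rejection set $R_j^{1-\sigma}$ can spill outside $\{x,x'\}$, so the second coordinate is not automatically ``free.'' I expect the clean way to control this is an induction on $n$ that peels off a twin pair together with its two dedicated coordinates (base case $n=1$ being the trivial count $2^q\le 3$), or a direct Hall/matching argument. As a built-in sanity check, the argument must \emph{fail} under Scenario~\textup{(\Srr)}: there the state value $\reject$ supplies the missing rejection set $R_j^{\reject}=\{x,x'\}$ on a saturated coordinate, no second coordinate is needed, and the bound correctly relaxes to $2n$, consistent with Proposition~\ref{prop:FsetSdr}.
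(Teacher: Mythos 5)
Your reduction to Scenario~(\Srd), the rejection-set reformulation, and the first steps are all sound, and in fact they retrace the paper's opening moves: your singleton witnesses are exactly the paper's use of $\Nset_q$ (for each $t$, the word $\bldtheta(\N_q(\cdot,t))$ has a unique non-${*}$ entry, which forces $R_j^s=\{t\}$), your pigeonhole giving fibers of size at most two is the paper's pigeonhole argument, and your ``saturated'' twin coordinate is precisely the structure derived in~(\ref{eq:un-1}). But the decisive step---upgrading $2n$ to $n$---is left open in your proposal, and both completions you suggest fail. The claim that a twin pair comes with \emph{two dedicated} coordinates is false: a single coordinate $j'$ with $\bigl(u_{j'}(0),u_{j'}(1),u_{j'}(2),u_{j'}(3)\bigr)=(1,1,0,0)$ has the two doubleton rejection sets $R_{j'}^0=\{0,1\}$ and $R_{j'}^1=\{2,3\}$ and can serve as the ``second coordinate'' for two disjoint twin pairs simultaneously; peeling one pair together with $j'$ destroys a resource the other pair still needs, so the induction on $n$ does not go through. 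The Hall route fails too: in the configuration $q=4$, $n=3$ with $u_0=(1,0,{*},{*})$, $u_1=({*},{*},1,0)$, $u_2=(1,1,0,0)$, every one of your local requirements is met (each point has a singleton witness, each twin pair has a second coordinate covering the other twin), and indeed \emph{every proper} subset of $[4\rangle$ is realizable as a one-set-per-coordinate union; the only failure is at $U=[4\rangle$ itself (the all-zero function), since coordinate~$2$ can contribute only one of its two doubletons. So singleton and twin-pair constraints alone can never force $q\le n$, and implementability does not imply Hall's condition for your bipartite graph (realizations of larger zero sets legitimately use non-singleton rejection sets); any correct completion must bring in richer test functions.

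The paper closes the gap by a different device: instead of charging points to coordinates, it \emph{merges} the twin pair into one point and inducts on $q$. Having located the saturated coordinate (say coordinate $n-1$, with $u_{n-1}(q{-}2)=1$, $u_{n-1}(q{-}1)=0$, ${*}$ elsewhere), it defines $u'_j(q{-}2)=\mu\bigl(u_j(q{-}2),u_j(q{-}1)\bigr)$ on the remaining coordinates, where $\mu$ is the meet in the partial order $\preceq$, and pairs each $f\in\Fset_{q-1}$ with its extension $\tilde f$ satisfying $\tilde f(q{-}1)=f(q{-}2)$. The identity $\TCAM(\mu(u,v),\vartheta)=\TCAM(u,\vartheta)\wedge\TCAM(v,\vartheta)$ (Lemma~\ref{lem:FsetSdd}) is exactly what guarantees the merged point behaves correctly in all cases---note that $\mu(0,1)=\reject$, which is why the input alphabet $\Boolean_\reject$ of Scenario~(\Srd) is needed for the merged word. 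Deleting coordinate $n-1$ then yields an $(n-1)$-cell implementation of $\Fset_{q-1}$, contradicting the induction hypothesis (base case $q=2$ by your same count $4>3$). This merging idea, which sidesteps the combinatorial charging entirely, is the ingredient missing from your proposal.
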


Sufficiency follows from Proposition~\ref{prop:simple1}.
As for necessity, for Scenarios~\textup{(\Sdb)} and~\textup{(\Sbd)}
it is implied by Proposition~\ref{prop:NsetSbd}
(for Scenario~\textup{(\Sdb)} we can also use
Proposition~\ref{prop:GELEsetSdb} or just a counting argument).
For Scenarios~(\Sdd) and~(\Srd), however, some more effort is needed;
notice that a counting argument only leads to
$2^q \le 3^n$, namely, to the weaker inequality
$q \le n \cdot \log_2 3$.

The proof of Proposition~\ref{prop:FsetSdd}
will use the following notation and lemma.
Recalling the partial ordering~$\preceq$ of
Section~\ref{sec:Eset},
for elements $u, v \in \Boolean_\reject$,
we denote by $\mu(u,v)$ the largest element
$s \in \Boolean_\reject$ such that both
$s \preceq u$ and $s \preceq v$, where ``largest'' is
with respect to~$\preceq$. Thus, for every $u \in \Boolean_\reject$,
\[
\mu(u,\reject) = \reject , \quad
\mu(u,{*}) = \mu(u,u) = u , \quad \textrm{and} \quad
\mu(0,1) = \reject .
\]
The next lemma is easily verified.

\begin{lemma}
\label{lem:FsetSdd}
For every $u, v, \vartheta \in \Boolean_\reject$,
\[
\TCAM(\mu(u,v),\vartheta) =
\TCAM(u,\vartheta) \wedge \TCAM(v,\vartheta) .
\]
\end{lemma}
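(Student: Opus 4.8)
The plan is to deduce the identity from the product representation of $\TCAM$ recorded in Remark~\ref{rem:lambda}, namely
$$\TCAM(u,\vartheta) = \bigl( \lambda_0(u) \vee \lambda_1(\vartheta) \bigr) \wedge \bigl( \lambda_1(u) \vee \lambda_0(\vartheta) \bigr) ,$$
together with one structural fact about $\mu$. First I would observe that $\lambda : \Boolean_\reject \rightarrow \Boolean^2$ is an isomorphism of posets from $(\Boolean_\reject, \preceq)$ onto $(\Boolean^2, \le)$ with the componentwise order ($0 \le 1$): here $\lambda(\reject) = 00$ is the least element, $\lambda({*}) = 11$ the greatest, and $\lambda(0) = 01$, $\lambda(1) = 10$ the two incomparable middle elements, matching the ordering $\reject \preceq 0, 1 \preceq {*}$. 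Since the meet in $(\Boolean^2, \le)$ is componentwise minimum, i.e.\ bitwise conjunction, the meet $\mu$ in $(\Boolean_\reject, \preceq)$ satisfies
$$\lambda_0(\mu(u,v)) = \lambda_0(u) \wedge \lambda_0(v) \quad \textrm{and} \quad \lambda_1(\mu(u,v)) = \lambda_1(u) \wedge \lambda_1(v) .$$

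With these two identities it only remains to expand. Abbreviating $a_i = \lambda_i(u)$, $b_i = \lambda_i(v)$, and $c_i = \lambda_i(\vartheta)$, I would write
$$\TCAM(\mu(u,v),\vartheta) = \bigl( (a_0 \wedge b_0) \vee c_1 \bigr) \wedge \bigl( (a_1 \wedge b_1) \vee c_0 \bigr)$$
and apply the distributive law $(X \wedge Y) \vee Z = (X \vee Z) \wedge (Y \vee Z)$ to each of the two factors, obtaining the four-clause conjunction
$$(a_0 \vee c_1) \wedge (b_0 \vee c_1) \wedge (a_1 \vee c_0) \wedge (b_1 \vee c_0) .$$
Regrouping the clauses according to whether they mention $u$ or $v$ gives
$$\bigl[ (a_0 \vee c_1) \wedge (a_1 \vee c_0) \bigr] \wedge \bigl[ (b_0 \vee c_1) \wedge (b_1 \vee c_0) \bigr] = \TCAM(u,\vartheta) \wedge \TCAM(v,\vartheta) ,$$
as required.

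There is no genuinely hard step; the only point needing a moment's attention is the claim that $\mu$ corresponds to bitwise conjunction under $\lambda$, which is immediate once $\lambda$ is recognized as the order isomorphism above (and which can alternatively be read off the four values of $\mu(u,\cdot)$ listed just before the lemma). As a fully self-contained cross-check avoiding Remark~\ref{rem:lambda}, I would note that $\TCAM(\cdot,\vartheta)$ is monotone with respect to $\preceq$, so $\mu(u,v) \preceq u, v$ already yields ``$\le$''; the reverse inequality is trivial unless $u$ and $v$ are incomparable, i.e.\ $\{u,v\} = \{0,1\}$ with $\mu(u,v) = \reject$, and in that single case $\TCAM(0,\vartheta) = \TCAM(1,\vartheta) = 1$ forces $\vartheta = {*}$, whence $\TCAM(\reject,{*}) = 1$.
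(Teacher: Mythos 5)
Your proof is correct, and it is considerably more structured than what the paper does: the paper disposes of Lemma~\ref{lem:FsetSdd} with the single sentence ``The next lemma is easily verified,'' i.e., the intended argument is a brute-force check over the finitely many cases (at most $4^3$ triples $(u,v,\vartheta)$, and far fewer after using the symmetry of $\mu$ and the four listed values $\mu(u,\reject)=\reject$, $\mu(u,{*})=\mu(u,u)=u$, $\mu(0,1)=\reject$). Your main route is genuinely different: you recognize $\lambda$ from Remark~\ref{rem:lambda} as an order isomorphism from $(\Boolean_\reject,\preceq)$ onto $(\Boolean^2,\le)$, so that the meet $\mu$ transports to bitwise conjunction, and the identity then falls out of the two-clause product form of $\TCAM$ by one application of distributivity and a regrouping of clauses. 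This buys an explanation of \emph{why} the lemma holds (conjunction is the meet, and $\TCAM$ is a conjunction of clauses each depending on $u$ through a single coordinate of $\lambda(u)$), and since Remark~\ref{rem:lambda} appears before the lemma in the paper, the citation order is legitimate. Your closing ``cross-check'' is in fact a second complete, self-contained proof: monotonicity of $\TCAM(\cdot,\vartheta)$ under $\preceq$ together with $\mu(u,v)\preceq u,v$ gives one inequality; for the converse, comparability of $u,v$ makes it trivial, and the unique incomparable pair $\{0,1\}$ with $\mu(0,1)=\reject$ is handled by noting that $\TCAM(0,\vartheta)=\TCAM(1,\vartheta)=1$ forces $\vartheta={*}$, where $\TCAM(\reject,{*})=1$. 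That second argument is arguably the closest in spirit to the paper's ``easily verified,'' while avoiding any case enumeration over $\vartheta$ except at the single genuinely nontrivial point.
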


\begin{proof}[Proof of Proposition~\protect\ref{prop:FsetSdd}
(necessity)]
We prove necessity under (the loosest) Scenario~(\Srd)
by induction on~$q$, with the induction
base ($q = 2$) following from a simple counting argument:
there are four distinct functions
$f : [2 \rangle \rightarrow \Boolean$ yet only
three choices for (the scalar) $\vartheta(f)$ in this case,
hence we must have $n \ge 2$.\footnote{%
\label{footnote:Fsetq=2}
Note that $n = 1$ would suffice for $q = 2$ if we allowed
$\vartheta(f)$ to take the value~$\reject$ as well:
this corresponds to Scenario~(\Sbr)
and it differs from Scenario~(\Sbd) only when $n = 1$ and $q = 2$.}

Turning to the induction step,
suppose to the contrary that $\Fset_q$ is $n$-cell implementable
for $q > n$ and
let $\bldu : [q \rangle \rightarrow \Boolean_\reject^n$
and $\bldtheta : \Fset_q \rightarrow \Boolean_*^n$
be mappings such that~(\ref{eq:TCAM}) holds for all
the functions in $\Fset_q$;
obviously, both these mappings are injective.

Consider the images of the functions in $\Nset_q$
under~$\bldtheta$. As argued in the proof of
Proposition~\ref{prop:NsetSbd},
we can assume that for each $t \in [q \rangle$
there is a unique entry in $\bldtheta(\N_q(\cdot,t))$
which is \mbox{non-$*$}. From $q > n$
we get, by the pigeonhole principle,
that there exist two distinct elements $t_0, t_1 \in [q \rangle$
for which the position~$j$ of such an entry is the same, say $j = n-1$.
Without loss of generality we can assume further
that $t_0 = q-2$, $t_1 = q-1$,
\[
\vartheta_{n-1}(\N_q(\cdot,q-2)) = 0 ,
\quad \textrm{and} \quad
\vartheta_{n-1}(\N_q(\cdot,q-1)) = 1 .
\]
This, in turn, implies that for every $x \in [q \rangle$:
\begin{equation}
\label{eq:un-1}
u_{n-1}(x) =
\left\{
\begin{array}{lcl}
{*} && \textrm{if $x \in [q-2 \rangle$} \\
1   && \textrm{if $x = q-2$} \\
0   && \textrm{if $x = q-1$}
\end{array}
\right.
.
\end{equation}

Next, we define the mappings
\[
\bldu' : [q-1 \rangle \rightarrow \Boolean_\reject^{n-1}
\quad \textrm{and} \quad
\bldtheta' : \Fset_{q-1} \rightarrow \Boolean_*^{n-1}
\]
as follows:
for every $x \in [q-1 \rangle$ and $j \in [n-1 \rangle$,
\begin{equation}
\label{eq:u'j}
u'_j(x) =
\left\{
\begin{array}{cl}
u_j(x)                  & \textrm{if $x \in [q-2 \rangle$} \\
\mu(u_j(q-2), u_j(q-1)) & \textrm{if $x = q-2$}
\end{array}
\right.
,
\end{equation}
and for every $f \in \Fset_{q-1}$ and $j \in [n-1 \rangle$,
\begin{equation}
\label{eq:theta'j}
\vartheta'_j(f) = \vartheta_j(\tilde{f}) ,
\end{equation}
where~$\tilde{f}$ is the extension of~$f$
to the domain $[q \rangle$ with
\[
\tilde{f}(q-1) = f(q-2) .
\]
In particular, it follows from
(\ref{eq:u'j}), (\ref{eq:theta'j}), and Lemma~\ref{lem:FsetSdd}
that for every $j \in [n-1 \rangle$:
\begin{eqnarray}
\lefteqn{
\TCAM(u'_j(q{-}2),\vartheta'_j(f))
} \makebox[2ex]{} \nonumber \\
\label{eq:x=q-2}
& = &
\!\!\!
\TCAM(u_j(q{-}2),\vartheta_j(\tilde{f}))
\wedge \TCAM(u_j(q{-}1)),\vartheta_j(\tilde{f})) .
\end{eqnarray}

We show that
\[
f'(x) = \TCAM(\bldu'(x),\bldtheta'(f))
\]
for every $x \in [q-1 \rangle$ and $f \in \Fset_{q-1}$
which, in turn, will imply that $\Fset_{q-1}$
is $(n-1)$-cell implementable, thereby contradicting
the induction hypothesis. We distinguish between three cases.

\emph{Case 1: $x \in [q-2 \rangle$.}
We recall from~(\ref{eq:un-1}) that $u_{n-1}(x) = {*}$ and, so,
\[
f(x) = \tilde{f}(x)
\stackrel{\textrm{(\ref{eq:TCAM})}}{=}
\TCAM(\bldu(x),\bldtheta(\tilde{f}))
\stackrel{\textrm{(\ref{eq:u'j})+(\ref{eq:theta'j})}}{=}
\TCAM(\bldu'(x),\bldtheta'(f)) .
\]

\emph{Case 2: $x = q-2$ and $f'(q-2) = 1$.}
Here $\tilde{f}(q-2) = \tilde{f}(q-1) = 1$ and,
therefore, from~(\ref{eq:TCAM}),
for every $j \in [n \rangle$,
\[
\TCAM(u_j(q-2),\vartheta_j(\tilde{f}))
= \TCAM(u_j(q-1),\vartheta_j(\tilde{f})) = 1 .
\]
Hence, by~(\ref{eq:x=q-2}), for every $j \in [n-1 \rangle$,
\[
\TCAM(u'_j(q-2),\vartheta'_j(f)) = 1 ,
\]
namely, $\TCAM(\bldu'(q-2),\bldtheta'(f)) = 1 = f(q-2)$.

\emph{Case 3: $x = q-2$ and $f(q-2) = 0$.}
Here $\tilde{f}(q-2) = \tilde{f}(q-1) = 0$;
yet, from~(\ref{eq:un-1})
we have $\TCAM(u_{n-1}(y),\vartheta_{n-1}(\tilde{f})) = 1$ for at least
one $y \in \{ q-2, q-1 \}$. Hence, from
$\TCAM(\bldu(y),\bldtheta(\tilde{f})) = \tilde{f}(y) = 0$
there must be at least one index $j \in [n-1 \rangle$ for which
\[
\TCAM(u_j(y),\vartheta_j(\tilde{f})) = 0 ,
\]
and for that index we have, by~(\ref{eq:x=q-2}),
\[
\TCAM(u'_j(q-2),\vartheta'_j(f)) = 0 ,
\]
namely, $\TCAM(\bldu'(q-2),\bldtheta'(f)) = 0 = f(q-2)$.
\end{proof}

\subsection{Connection to the VC~dimension of Boolean monomials}
\label{sec:VCdimension}

Proposition~\ref{prop:FsetSdd} can be stated also
in terms of the VC~dimension of the following collection,
$\Collection_n$, of $3^n$ subsets of $\Boolean_\reject^n$:
\[
\Collection_n =
\bigl\{
\Sphere = \Sphere(\bldtheta) \,:\, \bldtheta \in \Boolean_*^n \bigr\} ,
\]
where, for each $\bldtheta \in \Boolean_*^n$,
\[
\Sphere(\bldtheta) =
\bigl\{ \bldv \in \Boolean_\reject^n \,:\, \TCAM(\bldv,\bldtheta) = 1
\bigr\} .
\]
We demonstrate this next.

Let $\bldu : [q \rangle \rightarrow \Boolean_\reject^n$
and $\bldtheta : \Fset_q \rightarrow \Boolean_*^n$
be injective mappings. The following two conditions are
equivalent for any $f \in \Fset_q$.
\begin{itemize}
\item
Eq.~(\ref{eq:TCAM}) holds for~$f$.
\item
The images of $x \mapsto \bldu(x)$ form a subset
$\U \subseteq \Boolean_\reject^n$ of size~$q$ such that
\begin{equation}
\label{eq:shatter}
\U \cap \Sphere(\bldtheta(f)) =
\{ \bldu(x) \,:\,
x \in [q \rangle \; \textrm{such that} \; f(x) = 1 \} .
\end{equation}
\end{itemize}
In particular, if Eq.~(\ref{eq:TCAM}) holds for every $f \in \Fset_q$,
then~(\ref{eq:shatter})
implies that~$\U$ is \emph{shattered} by $\Collection_n$:
each of the $2^q$ subsets of~$\U$ can be expressed
as an intersection $\U \cap \Sphere$, for some
$\Sphere \in \Collection_n$.
Conversely, if~$\U$ is a subset of size~$q$ of $\Boolean_\reject^n$
that is shattered by $\Collection_n$,
we can fix some arbitrary bijection $\bldu : [q \rangle \rightarrow \U$
and define a mapping $\bldtheta : \Fset_q \rightarrow \Boolean_*^n$
so that~(\ref{eq:shatter}) holds for every $f \in \Fset_q$;
specifically, we select $\bldtheta(f)$ to be such that
$\Sphere(\bldtheta(f))$ is an element
$\Sphere \in \Collection_n$ for which $\U \cap \Sphere$
equals the subset of~$\U$ given by
the right-hand side of~(\ref{eq:shatter}).
The largest size~$q$ of~$\U$ for which this holds is
the VC~dimension of~$\Collection_n$~\cite[\S 7.3]{AB}.
Thus, Proposition~\ref{prop:FsetSdd}
states that the VC~dimension of~$\Collection_n$
is~$n$.

Equivalently, we can state Proposition~\ref{prop:FsetSdd}
in terms of the VC~dimension of
the collection of $n$-variate Boolean monomials where
the evaluation points are taken from $\Boolean_\reject^n$
(rather than just from $\Boolean^n$).
Specifically, given a vector of $n$ Boolean indeterminates,
$\bldxi = (\xi_j)_{j \in [n \rangle}$,
we associate with every word
$\bldtheta = (\vartheta_j)_{j \in [n \rangle} \in \Boolean_*^n$
the $n$-variate Boolean monomial
\[
M_\bldtheta(\bldxi)
=
\Bigl(
\bigwedge_{j \,:\, \vartheta_j = 0} \xi_j
\Bigr)
\wedge
\Bigl(
\bigwedge_{j \,:\, \vartheta_j = 1} \overline{\xi}_j
\Bigr)
,
\]
where $\overline{\xi}_j$ stands for the complement of $\xi_j$.
Substituting an element of~$\Boolean$ into a variable $\xi_j$
carries its ordinary meaning
(with $\overline{0} = 1$ and $\overline{1} = 0$),
whereas when substituting~$*$ (respectively, $\reject$),
both $\xi_j$ and $\overline{\xi}_j$ are defined to be~$1$
(respectively, $0$).
Under these rules, $\Sphere(\bldtheta)$ is the set of
all words in $\Boolean_\reject^n$ at which
$M_\bldtheta(\bldxi)$ evaluates to~$1$.
Scenario~(\Sbd) corresponds to the case
where the evaluation points are restricted to $\Boolean^n$
and, for this scenario, it was shown in~\cite{NS} that
the VC~dimension of the set of all $n$-variate monomials equals~$n$.
Proposition~\ref{prop:FsetSdd} implies that the VC~dimension
does not increase even if we extend
the set of evaluation points to $\Boolean_\reject^n$.\footnote{%
A setting where some variables are allowed to
be ``unspecified'' has been studied in
the literature (see, for example,~\cite{BKKS}),
yet the meaning of~$*$ therein is different.}

\subsection{Subsets that are as hard to implement as $\Fset_q$}
\label{sec:hardsubsets}
\fi

Recall that under Scenarios~(\Sdb), (\Sbd), (\Sdr), and~(\Srr),
there are small subsets of $\Fset_q$
which are $n$-cell implementable, (if and) only if $\Fset_q$ is;
e.g., $\Nset_q$ is such as subset of size $q = \log_2 |\Fset_q|$.
In contrast, it turns out that
under Scenarios~(\Sdd) and~(\Srd), the condition $q \le n$
becomes necessary only for fairly large subsets of $\Fset_q$.
\ifPAGELIMIT
    Specifically, we can show
\else
The next two propositions (which we prove
in Appendix~\ref{sec:Fset'}) imply
\fi
that, with very few exceptions,
a deletion of just a single function from $\Fset_q$
results in a subset which is $(q-1)$-cell implementable.
\ifPAGELIMIT
    The smallest \emph{unique} subset of $\Fset_q$ that
    is \emph{not} $(q-1)$-cell implementable under Scenario~(\Sdd) is
    \[
    \Fset_q \setminus \bigl( \Eset_q \cup \{ \allone_q \} \bigr) ,
    \]
    where $x \mapsto \allone_q(x)$ is
    the constant-$1$ function over $[q \rangle$;
    under Scenario~(\Srd), the respective smallest subset
    is $\Fset_q \setminus \{ \allone_q \}$.

    \begin{proposition}
    \label{prop:FsetSdr}
    Under each of the scenarios~\textup{(\Sdr)} or~\textup{(\Srr)},
    the set $\Fset_q$ is $n$-cell implementable, \ifandonlyif\
    $q \le 2n$.
    \end{proposition}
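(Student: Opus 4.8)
The plan is to treat sufficiency and necessity separately, with sufficiency already supplied by an earlier result and necessity reducing to a short counting argument. For sufficiency I would simply invoke Proposition~\ref{prop:simple3}: it already asserts that under Scenario~(\Sdr) or~(\Srr), \emph{every} subset $\Phi \subseteq \Fset_q$ --- and in particular the whole set $\Fset_q$ --- is $n$-cell implementable whenever $q \le 2n$. Thus no new construction is needed; the explicit mappings~(\ref{eq:simple3u})--(\ref{eq:simple3theta}) already witness implementability at $q = 2n$.

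For necessity I would argue that the state mapping $f \mapsto \bldtheta(f)$ must be injective. Indeed, if two distinct functions $f, g \in \Fset_q$ satisfied $\bldtheta(f) = \bldtheta(g)$, then Eq.~(\ref{eq:TCAM}) would force $f(x) = \TCAM(\bldu(x),\bldtheta(f)) = \TCAM(\bldu(x),\bldtheta(g)) = g(x)$ for every $x \in [q \rangle$, contradicting $f \ne g$. Under both Scenarios~(\Sdr) and~(\Srr) the range of~$\bldtheta$ is $\Boolean_\reject^n$, which has exactly $4^n = 2^{2n}$ elements; since $|\Fset_q| = 2^q$, injectivity yields $2^q \le 2^{2n}$, i.e.\ $q \le 2n$. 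This argument is insensitive to the range of~$\bldu$, so proving it under the looser Scenario~(\Srr) disposes of~(\Sdr) simultaneously.

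I expect essentially no obstacle here: the entire content is the exact match between the cardinality $2^{2n}$ of the state alphabet $\Boolean_\reject^n$ (note $|\Boolean_\reject| = 4$) and the sufficiency threshold $q \le 2n$ furnished by Proposition~\ref{prop:simple3}, which pins the bound down tightly. As an alternative route for necessity, one could instead observe that $\Nset_q \subseteq \Fset_q$ and appeal to Proposition~\ref{prop:NsetSdd}, by which $\Nset_q$ --- and hence any superset, including $\Fset_q$ --- is $n$-cell implementable only if $q \le 2n$; the counting argument is, however, the more direct of the two.
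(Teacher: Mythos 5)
Your proposal is correct and matches the paper's own proof essentially verbatim: sufficiency via Proposition~\ref{prop:simple3}, and necessity via the injectivity of $\bldtheta : \Fset_q \rightarrow \Boolean_\reject^n$ giving $2^q \le 2^{2n}$. Even your alternative route through Proposition~\ref{prop:NsetSdd} is the same fallback the paper itself mentions.
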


    \begin{proof}
    Sufficiency follows from Proposition~\ref{prop:simple3}
    and necessity follows by a counting argument:
    $\bldtheta : \Fset_q \rightarrow \Boolean_\reject^n$ is injective
    and, so, $2^q \le 2^{2n}$.
    \end{proof}
\else

Henceforth, $x \mapsto \allone_q(x)$ stands for
the tautology function over $[q \rangle$ (which evaluates to~$1$
on all the elements of $[q \rangle$).

\begin{proposition}
\label{prop:Fset'Sdd}
Let $\Phi = \Fset_q \setminus \{ g \}$
where~$g$ is any function in $\Fset_q$
that is not in $\Eset_q \cup \{ \allone_q \}$.
Under Scenario~\textup{(\Sdd)}, the subset~$\Phi$
is $n$-cell implementable, \ifandonlyif
\[
q \le n+1 .
\]
\end{proposition}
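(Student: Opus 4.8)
The plan is to argue entirely in the shattering language of Section~\ref{sec:VCdimension}. There, an $(\Sdd)$-implementation of a family is the same data as an injective $\bldu:[q\rangle\to\Boolean_*^n$ whose image $\U$ has size $q$, together with a choice of $\bldtheta(f)\in\Boolean_*^n$ for each $f$, and $f$ is realizable precisely when its support-image $\{\bldu(x):f(x)=1\}$ is a trace $\U\cap\Sphere(\bldtheta)$ for some $\bldtheta\in\Boolean_*^n$. Hence $\Phi=\Fset_q\setminus\{g\}$ is $n$-cell implementable iff there is a $q$-point set $\U\subseteq\Boolean_*^n$ whose traces under $\Collection_n$ comprise every subset of $\U$ except (at most) one, the possible exception being a subset of size $s:=|g^{-1}(1)|$. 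Because $\bldu$ may be any bijection onto $\U$, only the value $s$ matters; and the hypothesis $g\notin\Eset_q\cup\{\allone_q\}$ is exactly the statement $s\in\{0\}\cup[2:q-1]$. Since $\Fset_q$ is already $q$-cell implementable by Proposition~\ref{prop:simple1}, only the extremal case $q=n+1$ requires work.

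For necessity I would bypass any reduction and use the VC bound directly. Proposition~\ref{prop:FsetSdd} asserts that the VC dimension of $\Collection_n$ is $n$. Suppose, for contradiction, that $q\ge n+2$ and that $\Phi$ is implementable, witnessed by $\U\subseteq\Boolean_*^n$ with $|\U|=q$; then every subset of $\U$ save at most one is a trace. Choose any $\U'\subseteq\U$ with $|\U'|=n+1$. Each $S'\subseteq\U'$ extends to $2^{\,q-n-1}\ge 2$ subsets of $\U$, and at most one subset of $\U$ is a non-trace, so some extension of $S'$ is a trace; intersecting that trace with $\U'$ shows $S'$ is itself a trace on $\U'$. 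Thus $\U'$ is shattered by $\Collection_n$, contradicting VC dimension $n$. This forces $q\le n+1$ (and note this direction needs nothing about $g$).

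For sufficiency I would construct the witness $\U$ by induction on $q$, inverting the merging step in the proof of Proposition~\ref{prop:FsetSdd}. Given a realization of $\Fset_{q-1}\setminus\{g'\}$ on $n-1$ cells, split one point $p$ into two points $p_0,p_1$, append one new coordinate carrying $1$ at $p_0$, $0$ at $p_1$, and $*$ elsewhere, and read off the new $\vartheta$-entry from $(f(p_0),f(p_1))$, the patterns $(0,0),(0,1),(1,0),(1,1)$ being served by $*,0,1,*$. A short check using Lemma~\ref{lem:FsetSdd} shows the lifted $f$ is realizable iff its ``OR-merge'' $h$ (with $h(p)=f(p_0)\vee f(p_1)$, $h=f$ off $p$) is realizable below; when $g'(p)=0$ this merge has a unique preimage, so exactly one function becomes unrealizable, namely the $g$ obtained from $g'$ by splitting a non-support point. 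The lift keeps $s$ unchanged and lowers $q$, so it reduces every case with $2\le s\le q-2$ to the family $\Nset_{s+1}$ (support size $=$ domain $-\,1$). The two remaining base families are $g=\allzero_q$ (the constant-$0$ function) and $g\in\Nset_q$.

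The base $s=0$ is immediate: take $\U=\{\blde_0,\dots,\blde_{n-1},(0,\dots,0)\}$, where $\blde_j$ carries a single $1$ in position $j$ and $*$ elsewhere; one checks every subset except $\emptyset=\allzero_q^{-1}(1)$ is a trace. The hard part will be $\Nset_q$, i.e.\ $s=q-1$: here I must produce a size-$(n+1)$ set in $\Boolean_*^n$ whose \emph{unique} missing trace is a full $n$-subset, and the naive choices fail because the extra point is dragged into, or forced out of, too many traces. My plan is a cyclic gadget: let $\bldu(i)$, for $i\in[n\rangle$, carry $0$ in position $i$, $1$ in position $i{+}1\bmod n$, and $*$ elsewhere, and set $\bldu(n)=(1,\dots,1)$. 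Every coordinate then holds both a $0$ and a $1$ among $\bldu(0),\dots,\bldu(n-1)$, so the only $\bldtheta$ compatible with all of them is all-$*$, which also matches $\bldu(n)$; hence the $n$-subset $\{\bldu(0),\dots,\bldu(n-1)\}$ is not a trace. For every other subset I would exhibit an explicit $\bldtheta$: subsets containing $\bldu(n)$ use $\bldtheta\in\{1,*\}^n$ with $*$ placed exactly on the chosen $\bldu(i)$'s, while a proper sub-collection of the $\bldu(i)$'s is cut out by writing its index set as a union of arcs on the cycle $[n\rangle$ and putting a $0$ at each arc's start and a $1$ just past its end (the leading $0$ simultaneously excludes $\bldu(n)$). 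Checking that these $\bldtheta$'s realize all $2^{n+1}-1$ remaining subsets is the main computation, and is where I expect the genuine difficulty to lie.
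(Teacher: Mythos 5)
Most of your plan is sound, and two parts of it are genuinely different from the paper in a good way: your necessity argument (any $(n{+}1)$-subset $\U'$ of a $q$-point witness with $q\ge n+2$ is shattered, because each $S'\subseteq\U'$ has $2^{q-n-1}\ge 2$ extensions to $\U$ and at most one subset of $\U$ is a non-trace) is a clean direct reduction to Proposition~\ref{prop:FsetSdd}, whereas the paper instead proves Lemma~\ref{lem:Fset'Sdd} by appending one coordinate that splits $g$ from $\overline{g}$ to recover an $(n{+}1)$-cell implementation of all of $\Fset_q$. Likewise your split-a-point lift (new coordinate $1$ at $p_0$, $0$ at $p_1$, $*$ elsewhere, with $\vartheta$-entry ${*},0,1,{*}$ according to $(f(p_0),f(p_1))$, valid when $g'(p)=0$ so that $g$ is the unique OR-merge preimage) is a correct variant of the paper's induction, which instead adjoins the point $q-1$ via~(\ref{eq:x=q-1})--(\ref{eq:f!=overlineg}) with a special state for $\overline{g}$; both reduce to the same two base families, and your cyclic gadget for $g\in\Nset_q$ is, up to complementing $0\leftrightarrow 1$ and a cyclic index shift, exactly the paper's Lemma~\ref{lem:1110}.

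However, the step you deferred as ``the main computation'' is where your construction actually fails. For a proper sub-collection $I\subsetneq[n\rangle$ of the cyclic points, your recipe ($0$ at each arc's start, $1$ just past its end, $*$ elsewhere) does not exclude points in the \emph{interior} of a gap of the complement: such a point $\bldu(i)$ has its only two binary coordinates ($0$ at position $i$, $1$ at position $i{+}1$) facing $*$'s of $\bldtheta$, and since $\TCAM(u,\vartheta)=0$ requires a coordinate where both symbols are binary and unequal, $\bldu(i)$ matches. Concretely, for $n=5$ and $I=\{0\}$ your recipe gives $\bldtheta=(0,1,{*},{*},{*})$, whose trace is $\{\bldu(0),\bldu(2),\bldu(3)\}$ rather than $\{\bldu(0)\}$. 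Exclusion of $\bldu(i)$ forces $\vartheta_i=1$ or $\vartheta_{i+1}=0$, so the repair is to place a $1$ at \emph{every} position $j\notin I$ (not only just past arc ends), a $0$ at each arc start of $I$, $*$ at the remaining positions of $I$, and $\bldtheta=00\ldots 0$ for $I=\emptyset$; this is precisely the $0\leftrightarrow 1$-complemented form of the mapping~(\ref{eq:1110f!=0}) in the paper's Lemma~\ref{lem:1110}. With that correction (and the routine verification that accompanies it), your argument goes through.
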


\begin{proposition}
\label{prop:Fset'Srd}
Let $\Phi = \Fset_q \setminus \{ g \}$
where~$g$ is any function in $\Fset_q \setminus \{ \allone_q \}$.
Under Scenario~\textup{(\Srd)}, the subset~$\Phi$
is $n$-cell implementable, \ifandonlyif
\[
q \le n+1 .
\]
\end{proposition}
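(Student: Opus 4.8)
The plan is to read the statement through the trace/shattering lens of Section~\ref{sec:VCdimension}. Under Scenario~(\Srd), an injective $\bldu : [q \rangle \rightarrow \Boolean_\reject^n$ with image $\U$, together with $\bldtheta : \Phi \rightarrow \Boolean_*^n$, implements $\Phi = \Fset_q \setminus \{ g \}$ exactly when every subset of $\U$ \emph{other than} $\{ \bldu(x) : g(x) = 1 \}$ arises as a trace $\U \cap \Sphere(\bldtheta)$ with $\Sphere \in \Collection_n$. Since $\bldtheta = {*}{*}\cdots{*}$ always yields the full trace $\U$, the missing (``hole'') subset can never be $\U$ itself, which is precisely why $g = \allone_q$ is excluded. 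Moreover, Proposition~\ref{prop:FsetSdd} tells us that no $(n{+}1)$-element set is shattered by $\Collection_n$, so on \emph{any} $\U$ of size $n+1$ at least one trace is missing; the whole sufficiency question becomes: can we choose $\U$ so that exactly one trace is missing, and of the prescribed size?

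For sufficiency I would first relabel the domain $[q \rangle$ so that $g^{-1}(1)$ becomes a canonical set of its cardinality $k = |g^{-1}(1)| \le n$; relabelling merely permutes $\bldu$ and preserves implementability, so only $k$ matters. It then suffices to exhibit, for each $k \in [0:n]$, a size-$(n{+}1)$ set $\U \subseteq \Boolean_\reject^n$ whose \emph{unique} non-realizable trace has size $k$. For $k = 0$ there is a clean choice: let $\bldu(x)$ be the word with a single $1$ in coordinate $x$ (and $*$ elsewhere) for $x \in [n \rangle$, and $\bldu(n) = 0 0 \cdots 0$; one checks that the only query excluding every point is the all-$0$ query, so the empty trace (the all-zero function $\allzero$) is the sole hole. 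For $1 \le k \le n$ the template with pure indicator points can be shown to \emph{only} ever miss the empty trace, so one must mix $0$, $1$, $*$, and crucially $\reject$ to install the right dependencies (e.g.\ for $n=2$, $k=1$ the set $\{\,\reject{*},\,01,\,00\,\}$ misses exactly one singleton); I would present these as an explicit parametrized table in the spirit of Table~\ref{tab:GELEsetSdd} and verify the trace count directly.

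For necessity I would argue by induction on $n$. The base $n = 1$ is a counting bound: $\bldtheta$ is injective into $\Boolean_*^1$ (of size $3$), so $2^q - 1 = |\Phi| \le 3$ forces $q \le 2 = n+1$. For the step, assume $\Phi = \Fset_q \setminus \{ g \}$ (with $g \ne \allone_q$) is $n$-cell implementable and suppose, for contradiction, that $q \ge n+2$. Since $g$ deletes at most one function of $\Nset_q$, at least $q-1 \ge n+1$ of the functions $\N_q(\cdot,t)$ lie in $\Phi$, and, exactly as in the proof of Proposition~\ref{prop:NsetSbd}, each may be assumed to have a single non-$*$ entry in its $\bldtheta$-image. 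By the pigeonhole principle two of them, say at $t_0, t_1$, use the same coordinate $j$, necessarily with opposite bits; this forces $u_j(t_0) = 1$, $u_j(t_1) = 0$, and $u_j(x) = {*}$ otherwise. I would then apply the merge of Proposition~\ref{prop:FsetSdd} verbatim---collapsing $t_0$ and $t_1$ via $\mu$ and Lemma~\ref{lem:FsetSdd} and discarding coordinate $j$---to obtain an $(n{-}1)$-cell implementation of the induced family on $[q-1 \rangle$.

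The conclusion follows from a short case split on the merged pair. If $g(t_0) \ne g(t_1)$, then $g$ is not a lift of any function, so the induced family is all of $\Fset_{q-1}$; Proposition~\ref{prop:FsetSdd} then gives $q-1 \le n-1$. If $g(t_0) = g(t_1)$, the induced family is $\Fset_{q-1} \setminus \{ g' \}$, where $g'$ is the collapse of $g$, and $g' \ne \allone_{q-1}$ precisely because $g \ne \allone_q$; the induction hypothesis then gives $q-1 \le n$. In either case $q \le n+1$, contradicting $q \ge n+2$. I expect the routine part to be this necessity argument (it is essentially Proposition~\ref{prop:FsetSdd}'s reduction plus the bookkeeping that keeps the deleted function off the tautology). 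The genuine obstacle is the sufficiency side: producing, for every target size $1 \le k \le n$, a single size-$(n{+}1)$ configuration with exactly one missing trace of that size---equivalently, realizing a maximum class under $\Collection_n$ whose unique hole can be placed at any cardinality below $n+1$.
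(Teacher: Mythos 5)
Your necessity argument is correct, and it takes a genuinely different route from the paper's. You re-run the merge induction of Proposition~\ref{prop:FsetSdd} directly on the punctured family: the pigeonhole over the (at least $q-1$) functions of $\Nset_q$ surviving in $\Phi$, the forced form of $u_j$ at the colliding coordinate, the $\mu$-collapse of $t_0,t_1$ via Lemma~\ref{lem:FsetSdd}, and the case split on whether $g(t_0)=g(t_1)$ (inducing $\Fset_{q-1}\setminus\{g'\}$ with $g'\ne\allone_{q-1}$) or $g(t_0)\ne g(t_1)$ (inducing all of $\Fset_{q-1}$, where Proposition~\ref{prop:FsetSdd} even gives the stronger $q-1\le n-1$) all check out. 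The paper instead proves necessity in a few lines (Lemma~\ref{lem:Fset'Sdd}): since $g\ne\allone_q$ it has a neighbor $\overline{g}\in\Phi$ differing at a single point $y$ with $g(y)=0$, and appending one cell---a~$0$ to $\bldu(y)$, a~$*$ to every other $\bldu(x)$, and sending $g\mapsto\bldtheta'(\overline{g})\,1$---turns an $n$-cell implementation of $\Phi$ into an $(n{+}1)$-cell implementation of all of $\Fset_q$, whence $q\le n+1$ by Proposition~\ref{prop:FsetSdd}. Your induction is heavier but sound; the paper's lifting buys brevity and isolates all the hard work in Proposition~\ref{prop:FsetSdd}, which both arguments ultimately rest on.

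The genuine gap is on the sufficiency side, and you name it yourself: for every $k=|g^{-1}(1)|$ with $1\le k\le n$ you must exhibit a size-$(n{+}1)$ set $\U\subseteq\Boolean_\reject^n$ on which every trace except one designated size-$k$ subset is realizable, and you supply only the $k=0$ indicator construction plus a single $n=2$, $k=1$ example ($\{\reject{*},01,00\}$, which is indeed correct). A ``parametrized table to be verified directly'' is a statement of the problem, not a proof---this construction is the entire nontrivial content of the proposition, and it is not routine. The paper builds it by induction on~$n$: after relabelling so that $g=\LE_q(\cdot,t)$, the one-coordinate extension (\ref{eq:x=q-1})--(\ref{eq:f!=overlineg}) (append $0$ or $*$ to $\bldu(x)$ according to $g(x)$, set $\bldu(q{-}1)=\bldtheta(\overline{g})={*}{*}\cdots{*}1$, append $0$ or $*$ to $\bldtheta(f)$ according to $f(q{-}1)$) reduces $(q,t)$ to $(q{-}1,t)$ whenever $t\le q-3$, anchored at two nontrivial bases: the cyclic construction of Lemma~\ref{lem:1110} for $t=q-2$ (hole with a single zero), and---specific to Scenario~(\Srd})---the $q=2$ base of Remark~\ref{rem:1110} with $u(0)=\reject$, $u(1)=1$ for $g=\LE_2(\cdot,0)$, which is the one place the $\reject$ input is genuinely needed (consistent with your observation that pure indicator points only ever miss the empty trace). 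Without this recursion, or some equally explicit closed-form family covering all $1\le k\le n$, your proposal does not establish the ``if'' direction.
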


It readily follows from Proposition~\ref{prop:Fset'Sdd}
that under Scenario~(\Sdd),
any subset of~$\Fset_q$ of size smaller than $2^q - q$
is $(q-1)$-cell implementable, except (possibly) for the subset
\begin{equation}
\label{eq:Fsetstar}
\Fset_q^\star
= \Fset_q \setminus \bigl( \Eset_q \cup \{ \allone_q \} \bigr) .
\end{equation}
And by the following result (which we also prove
in Appendix~\ref{sec:Fset'}), this subset is indeed an exception.

\begin{proposition}
\label{prop:Fsetstar}
For $q \ge 3$, the subset $\Fset_q^\star$ in~(\ref{eq:Fsetstar})
is $n$-cell implementable under Scenario~(\Sdd),
(if and) only if $\Fset_q$ is.
\end{proposition}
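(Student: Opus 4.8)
The plan is to prove the two directions separately. The ``if'' direction is immediate: since $\Fset_q^\star \subseteq \Fset_q$, any implementation of $\Fset_q$ restricts to one of $\Fset_q^\star$, so $\Fset_q$ being $n$-cell implementable forces $\Fset_q^\star$ to be as well. For the ``only if'' direction I would show that $n$-cell implementability of $\Fset_q^\star$ under Scenario~(\Sdd) forces $q \le n$, which by Proposition~\ref{prop:FsetSdd} is exactly the condition under which $\Fset_q$ itself is $n$-cell implementable. I would establish $q \le n$ by induction on $q$, re-running the necessity argument of Proposition~\ref{prop:FsetSdd} and verifying that deleting $\Eset_q \cup \{\allone_q\}$ neither deprives that argument of the functions it needs nor violates the scenario constraint.

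For the inductive step (with base case $q=3$), assume toward a contradiction that $\Fset_q^\star$ is $n$-cell implementable with $q > n$. Since $\Nset_q \subseteq \Fset_q^\star$ for $q \ge 3$, the $\Nset_q$-part of the necessity argument of Proposition~\ref{prop:FsetSdd} applies verbatim: each $\bldtheta(\N_q(\cdot,t))$ may be taken to have a single non-$*$ entry, and by the pigeonhole principle two of them, relabeled $t=q-2$ and $t=q-1$, share that coordinate, say $n-1$, forcing the structure~(\ref{eq:un-1}). I would then attempt the same point-merge as in Proposition~\ref{prop:FsetSdd}, identifying $q-2$ with $q-1$ and eliminating coordinate $n-1$ via~(\ref{eq:u'j}) and Lemma~\ref{lem:FsetSdd}. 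The key bookkeeping observation is that the extension $f \mapsto \tilde f$ (with $\tilde f(q-1)=f(q-2)$) carries $\Fset_{q-1}^\star$ into $\Fset_q^\star$: among functions constant on $\{q-2,q-1\}$, those lying in $\Eset_q \cup \{\allone_q\}$ are exactly $\allone_q$ and the $\E_q(\cdot,s)$ with $s \in [q-2\rangle$, namely the images of $\Eset_{q-1} \cup \{\allone_{q-1}\}$ (the functions $\E_q(\cdot,q-2),\E_q(\cdot,q-1)$ are not constant on $\{q-2,q-1\}$, hence never arise). Thus every function the reduction invokes remains available, and when the merge is legal it produces an implementation of $\Fset_{q-1}^\star$ on $n-1$ cells, so the induction hypothesis gives $q-1 \le n-1$, contradicting $q>n$.

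The genuinely new difficulty, and the step I expect to be the main obstacle, is that we are confined to Scenario~(\Sdd): the merged input $\mu(u_j(q-2),u_j(q-1))$ equals $\reject$ whenever some coordinate $j$ has $u_j(q-2)$ and $u_j(q-1)$ Boolean and distinct, and a $\reject$ in $\bldu$ is illegal under~(\Sdd) (it is exactly what the looser Scenario~(\Srd), used to prove Proposition~\ref{prop:FsetSdd}, permits for free). I would therefore split into two cases. If no such ``bad'' coordinate exists, the merge stays within $\Boolean_*$ and the previous paragraph applies. If a bad coordinate $j_0$ exists, I would note that for every $f \in \Fset_q^\star$ with $f(q-2)=f(q-1)=1$ both $\vartheta_{j_0}(f)$ and $\vartheta_{n-1}(f)$ are forced to be $*$ (only $*$ matches two distinct Boolean inputs); dropping these two coordinates, the restrictions of such $f$ to $[q-2\rangle$ realize all of $\Fset_{q-2}$ except the tautology on the remaining $n-2$ cells, and since this family contains $\Fset_{q-2}^\star$, the induction hypothesis yields $q-2 \le n-2$, again contradicting $q>n$. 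This two-cell drop is precisely where removing $\Eset_q \cup \{\allone_q\}$ turns out to be ``just barely'' insufficient to save a cell, which is the conceptual reason $\Fset_q^\star$ is the unique exceptional subset.

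Finally, the base case $q=3$ must be checked by hand, since there the reduction only lands in a trivial family. Here $\Fset_3^\star = \Nset_3 \cup \{\allzero_3\}$, and a direct finite check shows it is not $2$-cell implementable under~(\Sdd): the three $\Nset_3$-constraints pin down $\bldu(0),\bldu(1),\bldu(2) \in \Boolean_*^2$ up to the free value of $\bldtheta(\allzero_3)$ (in particular forcing $u_0(1),u_0(2)\in\{{*},\,1{-}u_0(0)\}$ and $u_1(1)=1,\,u_1(2)=0$), and then the requirement $\TCAM(\bldu(x),\bldtheta(\allzero_3))=0$ for all $x$ cannot be met, because the first coordinate of $\bldtheta(\allzero_3)$ can never conflict with $\bldu(1)$ or $\bldu(2)$, forcing its second coordinate to equal both $0$ and $1$. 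The only remaining small instance, $\Fset_2\setminus\{\allone_2\}$ arising in the bad-coordinate case when $q=4$, is dispatched by the same kind of finite check (it already requires two cells). Together these establish $q \le n$, completing the ``only if'' direction and hence the proposition.
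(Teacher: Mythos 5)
Your proposal is correct, but it takes a genuinely different route from the paper. The paper's proof is a short, non-inductive \emph{completion} argument: given any $n$-cell implementation of $\Fset_q^\star$ under Scenario~(\Sdd), it notes that the images $\{\bldu(x)\}_{x \in [q\rangle}$ form an antichain in $\Boolean_*^n$ and extends $\bldtheta$ to the deleted functions on the \emph{same} $n$ cells: $\bldtheta(\allone_q) = {*}{*}\ldots{*}$, and for each $t$ the state of $\E_q(\cdot,t)$ is synthesized from the state of the two-point indicator $f_t$ of $\{t,t'\}$ (which lies in $\Fset_q^\star$) by overwriting a single coordinate $\ell(t)$ at which the antichain property separates $\bldu(t)$ from $\bldu(t')$; Proposition~\ref{prop:FsetSdd} is then invoked once. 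You instead re-prove the necessity bound $q \le n$ for $\Fset_q^\star$ itself by induction on $q$, porting the point-merging argument of Proposition~\ref{prop:FsetSdd} into the stricter Scenario~(\Sdd). Your load-bearing steps all check out: $\Nset_q \subseteq \Fset_q^\star$ for $q \ge 3$, so the pigeonhole structure~(\ref{eq:un-1}) is available; the bookkeeping that $f \mapsto \tilde f$ carries $\Fset_{q-1}^\star$ into $\Fset_q^\star$ is exactly right; in the bad-coordinate case, $\vartheta_{j_0}(f) = \vartheta_{n-1}(f) = {*}$ is indeed forced for every $f$ with $f(q-2)=f(q-1)=1$, yielding an $(n-2)$-cell implementation of $\Fset_{q-2}\setminus\{\allone_{q-2}\} \supseteq \Fset_{q-2}^\star$; and the finite checks ($\Fset_3^\star$ not $2$-cell implementable, $\Fset_2\setminus\{\allone_2\}$ not $1$-cell implementable) are correct --- I verified both. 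Comparing: the paper's argument is shorter and conceptually explains why $\Fset_q^\star$ is the threshold subset (any implementation of it already implicitly encodes the deleted functions), while yours is longer and needs hand-checked base cases, but stays entirely on the necessity side and cleanly isolates where Scenario~(\Sdd) is stronger than Scenario~(\Srd): the forbidden $\reject$-merge costs exactly the two cells that deleting $\Eset_q \cup \{\allone_q\}$ could have hoped to save. One further point in your favor: the paper's ``readily verified'' extension is in fact delicate in the subcase $u_{\ell(t)}(t) = {*}$ with $\vartheta_{\ell(t)}(f_t) = u_{\ell(t)}(t') \in \Boolean$; for instance, with $q = n = 3$, $\bldu(0) = ({*},0,{*})$, $\bldu(1) = (1,{*},0)$, $\bldu(2) = (0,{*},{*})$ and $\bldtheta(110) = (1,{*},{*})$, $\bldtheta(101) = (0,{*},{*})$, taking $t=0$, $t'=1$, $\ell(0)=0$ produces the state $(0,{*},{*})$ for $\E_3(\cdot,0)$, which realizes the truth table $101$ rather than $100$, so $\ell(t)$ must be chosen judiciously there (here $\ell(0)=2$ works) --- a subtlety your inductive route avoids altogether.
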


Similar claims can be stated for Scenario~(\Srd),
yet with the subset in~(\ref{eq:Fsetstar}) replaced by
\[
\Fset_q \setminus \{ \allone_q \} .
\]
Specifically, all other proper subsets of~$\Fset_q$
are $(q-1)$-cell implementable.
\fi

\ifPAGELIMIT
\else
\section{Discussion}
\label{sec:discussion}

We can see from Table~\ref{tab:summary} that under all scenarios
and for all families of functions therein except $\Eset_q$,
the number $n$ of cells must grow linearly with the alphabet size~$q$.
While this is expected for the whole set $\Fset_q$ (simply because of
a counting argument), this also holds for the families
$\Nset_q$, $\GEset_q$, $\LEset_q$
(and $\GEset_q \cup \LEset_q$), which are only of size $O(q)$.
This means that for the latter families, CAM implementations are
necessarily exponential in the (bit) representation
of the parameter $t \in [q \rangle$ that identifies
each function in the family, as well as in the representation of
the input $x \in [q \rangle$ to each function.
Thus, some hardware modification is inevitable if more efficient
implementations are sought.

And this is indeed possible. For example, suppose that $q = 2n$
and we wish to implement each function $x \mapsto \N_{q^2}(x,t)$
in $\Nset_{q^2}$ under Scenario~(\Sdd), (\Srd), (\Sdr), or~(\Srr).
We can express each $x, t \in [q^2 \rangle$ as
\[
x = x_1 q + x_0
\quad \textrm{and} \quad
t = t_1 q + t_0 ,
\]
where $x_0, x_1, t_0, t_1 \in [q \rangle$, and observe that
\begin{equation}
\label{eq:Nsquaring}
\N_{q^2}(x,t) = \N_q(x_0,t_0) \vee \N_q(x_1,t_1) .
\end{equation}
Thus, while we have squared the alphabet size,
the right-hand side of~(\ref{eq:Nsquaring}) requires only $q = 2n$ cells
(and not $q^2/2 = 2n^2$ cells), yet such cells have to be arranged in
two $n$-cell blocks that are connected \emph{serially}
(to compute the disjunction~$\vee$), as opposed
to the ordinary \emph{parallel} connection among
the cells along a row in the CAM.

\ifARXIV
\else
Similarly, under Scenario~(\Sdr) or~(\Srr), the family
$\GEset_{q^2} \cup \LEset_{q^2}$ can be implemented using the identities
\begin{equation}
\label{eq:GEsquaring}
\GE_{q^2}(x, t) = \GE_q(x_1, t_1 + 1)
\vee \bigl( \E_q(x_1, t_1) \wedge \GE_q(x_0, t_0) \bigr) \phantom{,}
\end{equation}
and
\begin{equation}
\label{eq:LEsquaring}
\LE_{q^2}(x, t) = \LE_q(x_1, t_1 - 1)
\vee \bigl( \E_q(x_1, t_1) \wedge \LE_q(x_0, t_0) \bigr) ,
\end{equation}
where $x \mapsto \GE_q(x,t)$ and $x \mapsto \LE_q(x,t)$
are implemented using the mapping
$\bldtheta : \GEset_q \cup \LEset_q \rightarrow \Boolean_\reject^n$
that is defined by~(\ref{eq:simple3theta})
and extended so that
$\bldtheta(\GE_q(\cdot,q)) =
\bldtheta(\LE_q(\cdot,-1)) = \reject \reject \ldots \reject$.
The implementation of (\ref{eq:GEsquaring})--(\ref{eq:LEsquaring}),
in turn, requires only $2n + m$ cells, with~$m$ cells
allocated for implementing the function $x_1 \mapsto \E_q(x_1,t_1)$;
in particular, we can take
$m = \log_3 n + O(\log \log n)$ under Scenario~(\Sdr), and
$m = \log_4 n + O(\log \log n)$ under Scenario~(\Srr)
(see~(\ref{eq:approximation})).
Thus, again, while squaring the alphabet size, the number of cells
nearly just doubled, yet with a customized layout of the CAM cells.
\fi
\fi

\ifIEEE
   \appendices
\else
   \section*{$\,$\hfill Appendices\hfill$\,$}
   \appendix
\fi

\ifPAGELIMIT
\else
\section{Special cases in Proposition~\protect\ref{prop:GEsetSdd}}
\label{sec:specialcases}

We verify here the necessary condition $q \le 2$ for $n = 1$,
and $q \le 4$ for $n = 2$.

Starting with $n = 1$, suppose to the contrary
that Eq.~(\ref{eq:TCAM}) can hold for $q = 2n + 1 = 3$.
By~(\ref{eq:GEsetSdd}) we then have $y_0 = 1$ and $z_0 = 2$,
namely, $\vartheta(0) = {*}$,
$\vartheta(1) = 1$, and $\vartheta(2) = 0$.
Yet this means that $\GE_3(0,t) = \TCAM(u(0), \vartheta(t)) = 1$
for at least one $t \in \{ 1, 2 \}$
(regardless of the value of $u(0)$),
which is absurd.

Turning to $n = 2$, suppose to the contrary
that Eq.~(\ref{eq:TCAM}) holds for $q = 2n + 1 = 5$.
Again, by~(\ref{eq:GEsetSdd})
we have $\{ 1, 2, 3, 4 \} = \{ y_0, z_0, y_1, z_1 \}$,
where $y_0 < z_0$ and $y_1 < z_1$.
Without loss of generality we assume that $z_0 > z_1$,
in which case $z_0 = 4$ and, so
$\vartheta_0(3) = 1$ and $\vartheta_0(4) = 0$
(see~(\ref{eq:t})).
Now, if $z_1 < 3$ then, by~(\ref{eq:x}),
we would have $u_1(2) = {*}$; yet then
$\GE_5(2,t) = \TCAM(u_0(2),\vartheta_0(t)) = 1$
for at least one $t \in \{ 3, 4 \}$, which is impossible.
Hence, $z_1 = 3$, and from~(\ref{eq:t})
we get $\vartheta_1(2) = 1$ and $\vartheta_1(3) = 0$;
moreover, $\{ y_0, y_1 \} = \{ 1, 2 \}$
and, so, $\vartheta_0(2) = 1$.
In summary, we have shown that
$\bldtheta(2) = 11$, $\bldtheta(3) = 10$,
and $\vartheta_0(4) = 0$.
Now, since $\GE_5(x,t) = \TCAM(\bldu(x),\bldtheta(t)) = 0$
for $x \in \{ 0, 1 \}$ and $t \in \{ 2, 3 \}$,
we must have $u_0(0) = u_0(1) = 0 \; (= \vartheta_0(4))$,
which implies that $u_1(0) \ne u_1(1)$. But this means
that $\GE_5(x,4) = \TCAM(u_1(x),\vartheta_1(4)) = 1$
for at least one $x \in \{ 0, 1 \}$, which is
a contradiction.

\section{More on the whole set $\Fset_q$}
\label{sec:Fset'}

We provide in this appendix
the proofs of Propositions~\ref{prop:Fset'Sdd},
\ref{prop:Fset'Srd}, and~\ref{prop:Fsetstar}.

We will sometimes find it convenient to represent a function
$f \in \Fset_q$ by its truth table
$\bldf = (f(x))_{x \in [q \rangle} \in \Boolean^q$;
for example, $11 \ldots 1$
represents the tautology function $x \mapsto \allone_q(x)$
over $[q \rangle$ and $00 \ldots 0$
represents the all-zero function, which we denote by
$x \mapsto \allzero_q(x)$.

The next lemma provides the necessity part
in Proposition~\ref{prop:Fset'Srd}
(and, therefore, also in Proposition~\ref{prop:Fset'Sdd}).

\begin{lemma}
\label{lem:Fset'Sdd}
Let $\Phi = \Fset_q \setminus \{ g \}$
where~$g$ is any function in $\Fset_q \setminus \{ \allone_q \}$.
Under Scenario~\textup{(\Srd)},
the subset~$\Phi$ is $n$-cell implementable, only if
\[
q \le n+1 .
\]
\end{lemma}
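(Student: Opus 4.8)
The plan is to adapt the reduction used in the necessity proof of Proposition~\ref{prop:FsetSdd}, but to track carefully how the single omitted function~$g$ transforms under that reduction. I would argue by induction on~$n$. The base case $n = 1$ is immediate: since $\bldtheta : \Phi \rightarrow \Boolean_*^n$ must be injective, we have $|\Phi| = 2^q - 1 \le 3^n = 3$, forcing $q \le 2 = n+1$. For the inductive step ($n \ge 2$), suppose to the contrary that $\Phi = \Fset_q \setminus \{ g \}$ is $n$-cell implementable under Scenario~(\Srd) for some $q \ge n+2$, with injective mappings $\bldu$ and $\bldtheta$.

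First I would locate the coordinate to collapse. Since $g \ne \allone_q$, the removal of~$g$ deletes at most one element from $\Nset_q$, so $\Nset_q \cap \Phi$ contains at least $q-1 \ge n+1$ functions. As in the proof of Proposition~\ref{prop:NsetSbd}, for each surviving~$t$ the word $\bldtheta(\N_q(\cdot,t))$ may be assumed to have a single non-$*$ entry; by the pigeonhole principle two of these words, say those of $\N_q(\cdot,t_0)$ and $\N_q(\cdot,t_1)$ (both in~$\Phi$), carry that entry in the same position~$j$. Injectivity of~$\bldtheta$ forces the two values at position~$j$ to differ, so after relabelling $t_0 = q-2$, $t_1 = q-1$, $j = n-1$, the derivation that yields~(\ref{eq:un-1}) applies verbatim, pinning $u_{n-1}(x)$ to be~$*$ for $x < q-2$, and $1$ and $0$ at $x = q-2$ and $x = q-1$, respectively.

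Next I would collapse the coordinates $q-2$ and $q-1$ exactly as in~(\ref{eq:u'j})--(\ref{eq:x=q-2}), using Lemma~\ref{lem:FsetSdd}, to obtain $\bldu' : [q-1 \rangle \rightarrow \Boolean_\reject^{n-1}$ and $\bldtheta'$ that realize every $f \in \Fset_{q-1}$ whose extension $\tilde f$ (with $\tilde f(q-1) = f(q-2)$) lies in~$\Phi$. The key observation is that $\tilde f$ ranges over precisely the $2^{q-1}$ functions of $\Fset_q$ that agree on $q-2$ and $q-1$; hence the collapsed family $\Phi'$ equals $\Fset_{q-1}$ when $g(q-2) \ne g(q-1)$, and equals $\Fset_{q-1} \setminus \{ g' \}$ for a single $g'$ when $g(q-2) = g(q-1)$. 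In the former case Proposition~\ref{prop:FsetSdd}, applied to the resulting $(n-1)$-cell implementation of $\Fset_{q-1}$, gives $q-1 \le n-1$, contradicting $q \ge n+2$.

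The main obstacle—and the precise reason the hypothesis $g \ne \allone_q$ is needed—lies in the remaining case, where I must verify that $g'$ is again not the tautology so that the induction hypothesis applies. Since $g'$ agrees with~$g$ on $[q-1 \rangle$, one has $g' = \allone_{q-1}$ if and only if $g \equiv 1$ on $[q-1 \rangle$, which together with $g(q-1) = g(q-2) = 1$ would force $g = \allone_q$; as $g \ne \allone_q$, we conclude $g' \ne \allone_{q-1}$. The induction hypothesis then yields $q-1 \le (n-1)+1 = n$, again contradicting $q \ge n+2$. Both cases being impossible, we obtain $q \le n+1$, as claimed.
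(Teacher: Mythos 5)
Your proof is correct, but it takes a genuinely different route from the paper's. The paper proves this lemma by a short \emph{augmentation} argument with no induction at all: since $g \ne \allone_q$, there is a function $\overline{g} \in \Phi$ agreeing with~$g$ everywhere except at a single point~$y$ where $g(y) = 0$ and $\overline{g}(y) = 1$; given an $n$-cell implementation of~$\Phi$, one appends a single cell, setting the new input coordinate to~$0$ at $x = y$ and to~$*$ elsewhere, the new state coordinate to~$*$ for every $f \ne g$, and defining $\bldtheta(g)$ to be $\bldtheta(\overline{g})$ with a~$1$ appended. This produces an $(n{+}1)$-cell implementation of the \emph{whole} set $\Fset_q$ under Scenario~(\Srd), whence $q \le n+1$ follows from a single application of Proposition~\ref{prop:FsetSdd}. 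You instead re-run the inductive collapse from the necessity proof of Proposition~\ref{prop:FsetSdd} while tracking the hole left by~$g$, and your accounting is sound: the pigeonhole still works because $|\Nset_q \cap \Phi| \ge q-1 \ge n+1$ under your contradiction hypothesis $q \ge n+2$; the derivation of~(\ref{eq:un-1}) needs only the two surviving functions $\N_q(\cdot,q{-}2), \N_q(\cdot,q{-}1) \in \Phi$; Cases 1--3 of the collapse need only $\tilde{f} \ne g$; and the collapsed hole~$g'$ is non-tautological precisely because $g \ne \allone_q$ --- which is where the hypothesis enters in your argument, whereas in the paper it enters through the existence of the one-point neighbor~$\overline{g}$. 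What each approach buys: the paper's reduction is shorter and fully modular (all combinatorial work stays encapsulated in Proposition~\ref{prop:FsetSdd}), while yours is more self-contained and makes visible how a missing function propagates through the coordinate collapse, at the cost of re-verifying that machinery. One small simplification available to you: in the branch $g(q{-}2) \ne g(q{-}1)$ you need not invoke Proposition~\ref{prop:FsetSdd} at all --- pass to a subfamily $\Fset_{q-1} \setminus \{h\}$ for any non-tautology~$h$ and apply your induction hypothesis, which gives $q-1 \le n$, still contradicting $q \ge n+2$.
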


\begin{proof}
Suppose that~(\ref{eq:TCAM}) holds for the functions in~$\Phi$
with the mappings
$\bldu' : [q \rangle \rightarrow \Boolean_\reject^n$
and $\bldtheta' : \Phi \rightarrow \Boolean_*^n$.
Since $g \ne \allone_q$, there exists
a function $\overline{g} \in \Phi$ which is identical to~$g$
on the whole domain $[q \rangle$ except
for one element $y \in [q \rangle$
at which $g(y) = 0$ yet $\overline{g}(y) = 1$.

Define the mappings
$\bldu : [q \rangle \rightarrow \Boolean_\reject^{n+1}$
and $\bldtheta : \Fset_q \rightarrow \Boolean_*^{n+1}$ as follows.
For every $x \in [q \rangle$,
\[
\bldu(x) =
\left\{
\begin{array}{cl}
\bldu'(x) \, {*} & \textrm{if $x \ne y$} \\
\bldu'(y) \, 0   & \textrm{if $x = y$}
\end{array}
\right.
\]
(namely, $\bldu(x)$ is obtained from $\bldu'(x)$
by appending a~$*$ or a~$0$, depending on~$x$),
and for every $f \in \Fset_q$,
\[
\bldtheta(f) =
\left\{
\begin{array}{cl}
\bldtheta'(f) \, {*}          & \textrm{if $f \ne g$} \\
\bldtheta'(\overline{g}) \, 1 & \textrm{if $f = g$}
\end{array}
\right. .
\]
It can be verified that $f(x) = \TCAM(\bldu(x),\bldtheta(f))$
for every $x \in [q \rangle$ and $f \in \Fset_q$, which means
that $\Fset_q$ is $(n+1)$-cell implementable under
Scenario~(\Srd). The result follows from
Proposition~\ref{prop:FsetSdd}.
\end{proof}

The next lemma provides the sufficiency part
in Proposition~\ref{prop:Fset'Sdd}
(and, therefore, also in Proposition~\ref{prop:Fset'Srd} for $q \ge 3$)
for the special case where~$\bldf$ contains exactly one~$0$;
due to symmetry, we can assume that $\bldf = 11 \ldots 110$,
i.e., $f(x) = \LE_q(x,q-2)$.

\begin{lemma}
\label{lem:1110}
Under Scenario~\textup{(\Sdd)},
the subset $\Phi = \Fset_q \setminus \{ \LE_q(\cdot,q-2) \}$
is $n$-cell implementable, whenever
\[
3 \le q \le n+1 .
\]
\end{lemma}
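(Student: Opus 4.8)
The plan is to reduce to the case $q = n+1$ and then exhibit an explicit pair of mappings. For $q \le n$ even the full set $\Fset_q$ is $n$-cell implementable by Proposition~\ref{prop:simple1}, so I would assume henceforth that $q = n+1$, whence $n = q-1 \ge 2$. Index the $n$ cells by $[n\rangle$ with all coordinate arithmetic taken modulo $n$. Define $x \mapsto \bldu(x)$ by letting, for each $j \in [n\rangle$, the word $\bldu(j) \in \Boolean_*^n$ have a $1$ in coordinate $j$, a $0$ in coordinate $j+1$, and a $*$ in every other coordinate, and by letting $\bldu(n) = 11\cdots1$. These words all lie in $\Boolean_*^n$, as required by Scenario~(\Sdd).

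The engine of the verification is a per-coordinate description of which points a candidate word $\bldtheta \in \Boolean_*^n$ rejects (say point $x$ is \emph{rejected} when $\TCAM(\bldu(x),\bldtheta) = 0$). Reading off the truth table of $\TCAM$, coordinate $i$ rejects point $i$ and point $n$ when $\vartheta_i = 0$, rejects point $i-1$ when $\vartheta_i = 1$, and rejects nothing when $\vartheta_i = {*}$; moreover these are the only rejections it can cause (here coordinate $i$ is simultaneously the ``$1$-coordinate'' of point $i$ and the ``$0$-coordinate'' of point $i-1$, while point $n$ carries a $1$ in every coordinate). Writing $R(f) = \{x : f(x) = 0\}$, I would first note that the single function $11\cdots10 = \LE_q(\cdot,q-2)$, whose reject set is $\{n\}$, cannot be realized: rejecting point $n$ forces some $\vartheta_i = 0$, whereas accepting every point of $[n\rangle$ forces $\vartheta_j \ne 0$ \emph{and} $\vartheta_{j+1} \ne 1$ for all $j$, i.e.\ $\bldtheta = {*}{*}\cdots{*}$, a contradiction.

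It then remains to realize every $f \in \Phi$, equivalently every reject set $R \subseteq [q\rangle$ with $R \ne \{n\}$, which I would do by splitting on whether $n \in R$. If $n \notin R$, set $\vartheta_i = 1$ whenever $i-1 \in R$ and $\vartheta_i = {*}$ otherwise; then no coordinate is $0$, so point $n$ is accepted, and point $j \in [n\rangle$ is rejected exactly when $\vartheta_{j+1} = 1$, that is, exactly when $j \in R$. If $n \in R$ then, since $R \ne \{n\}$, the set $S = R \cap [n\rangle$ is nonempty; set $\vartheta_i = 0$ for $i \in S$ and $\vartheta_i = {*}$ otherwise. Then point $n$ is rejected (some coordinate is $0$), and point $j \in [n\rangle$ is rejected exactly when $\vartheta_j = 0$, that is, exactly when $j \in S$. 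Defining $\bldtheta(f)$ by these two rules according to $R(f)$ yields $f(x) = \TCAM(\bldu(x),\bldtheta(f))$ for all $x \in [q\rangle$ and all $f \in \Phi$, which is the desired implementation.

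The one step that requires care—and where the bookkeeping is easiest to get wrong—is the per-coordinate rejection table in the second paragraph, precisely because of the cyclic overlap in which each coordinate plays a double role. Once that table is pinned down, the impossibility of the reject set $\{n\}$ and the correctness of both explicit assignments fall out immediately, and the argument is uniform in $n \ge 2$ (equivalently $q \ge 3$), matching the hypothesis $3 \le q \le n+1$.
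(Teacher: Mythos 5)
Your proof is correct, and its core is the same construction as the paper's: the paper also encodes each $x \in [q-1\rangle$ by a cyclic word with a single $1$ and a single adjacent $0$ (there the $0$ sits at coordinate $x-1$ rather than $x+1$, and the distinguished point $q-1$ is sent to the all-zero word rather than the all-one word, which is the same thing up to the $0 \leftrightarrow 1$ symmetry of $\TCAM$ and a cyclic relabeling of $[q-1\rangle$). Your per-coordinate rejection table is accurate, both case assignments check out (including the corner cases $R = [n\rangle$, $R = [q\rangle$, and $n = 2$), and the reduction to $q = n+1$ via Proposition~\ref{prop:simple1} is fine. Where you genuinely diverge is in the state assignment: the paper uses one unified formula whose words can mix $0$'s and $1$'s when $f(q-1) = 0$ (the $0$'s reject the points of $[q-1\rangle$ on which $f$ vanishes, while a $1$ placed at a cyclic $1$-to-$0$ transition of $f$ is needed to reject the all-zero point), and consequently it must treat $\allzero_q$ separately and argue that such a transition exists---which is exactly where the exclusion of $\LE_q(\cdot,q-2)$ enters. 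Your choice of the all-one word for point $n$ lets a $0$-state reject both point $i$ and point $n$ simultaneously, so your two cases use unipolar words (only $1$'s and $*$'s, or only $0$'s and $*$'s), $\allzero_q$ is absorbed into Case~2, and the excluded function surfaces only as the single unrealizable reject set $\{n\}$ (i.e., $S \neq \emptyset$ in Case~2). This buys a shorter verification and makes the role of the excluded function more transparent; the paper's version buys a closed-form formula for $\bldtheta(f)$ read directly off the truth table of $f$, which it then reuses as the base case in the recursive constructions of Appendix~B (Tables~\ref{tab:1110theta} and~\ref{tab:Fset!=t=3}).
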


\begin{proof}
We present mappings $\bldu : [q \rangle \rightarrow \Boolean_*^{q-1}$
and $\bldtheta : \Phi \rightarrow \Boolean_*^{q-1}$
for which~(\ref{eq:TCAM}) holds for every $f \in \Phi$.
We take the mapping~$\bldu$ to be
\begin{equation}
\label{eq:1110x=q-1}
\bldu(q-1) = 00 \ldots 0
\end{equation}
and, for $x \in [q-1 \rangle$:
\begin{equation}
\label{eq:1110x!=q-1}
u_j(x) =
\left\{
\begin{array}{ll}

1   & \textrm{if $x = j$} \\
0   & \textrm{if $x = \langle j{+}1 \rangle$} \\
{*} & \textrm{otherwise}
\end{array}
\right. ,
\end{equation}
where $\langle \cdot \rangle = \langle \cdot \rangle_{q-1}$ denotes
the remainder (in $[q-1 \rangle$) modulo $q-1$;
namely, $\bldu(x)$ is obtained from the unary representation of~$x$
by changing all the $0$'s into $*$'s, except the~$0$
that immediately precedes the (unique) $1$ in that representation
(with indexes extended cyclically modulo $q-1$).
Table~\ref{tab:1110u} shows this mapping for $q = n+1 = 5$.
Thus, $\bldu(x)$ contains at least one~$0$
for every $x \in [q \rangle$ (recall that $n \ge 2$).

Turning to the mapping~$\bldtheta$, we select it so that
\begin{equation}
\label{eq:1110f=0}
\bldtheta(\allzero_q) = 11 \ldots 1
\end{equation}
(indeed, $\TCAM(\bldu(x),\bldtheta(\allzero_q)) = 0$
for all $x \in [q \rangle$); for all functions other than
$\allzero_q$ and $\LE_q(\cdot,q-2)$ we let
\begin{equation}
\label{eq:1110f!=0}
\vartheta_j(f) =
\left\{
\begin{array}{ll}
0   & \textrm{if $f(j) = 0$} \\
1   & \textrm{if $f(j) = 1$ and
      $f( \langle j{+}1 \rangle) = f(q{-}1) = 0$} \\
{*} & \textrm{otherwise}
\end{array}
\right.
.
\end{equation}
Specifically,
$\bldtheta(f)$ is obtained from the $(q-1)$-prefix of the truth
table, $\bldf$, of~$f$
by changing into $*$'s all the $1$'s (if $f(q-1) = 1$), or
just the $1$'s that immediately precede other $1$'s (if $f(q-1) = 0$,
with indexes taken modulo $q-1$).
Table~\ref{tab:1110theta} shows this mapping for $q = n+1 = 4$
(note the missing entry from the table, $\bldf = 1110$,
which corresponds to $f(x) = \LE_4(x,2)$).
Thus, $\bldtheta(f)$ will contain at least one~$1$
\ifandonlyif\ $f(q-1) = 0$ and, so,
$\TCAM(\bldu(q-1),\bldtheta(f)) = f(q-1)$.

It remains to show that Eq.~(\ref{eq:TCAM}) holds
when $x \in [q-1 \rangle$ and $f \ne \allzero_q$.
If~$x$ is such that $f(x) = 0$ then,
by~(\ref{eq:1110x!=q-1}) and~(\ref{eq:1110f!=0}),
\[
u_x(x) = 1
\quad \textrm{yet} \quad \vartheta_x(f) = 0
\]
and, so, $\TCAM(\bldu(x),\bldtheta(f)) = 0 = f(x)$.
On the other hand, if $f(x) = 1$ then, for every $j \in [q-1 \rangle$:
\[
u_j(x) = {*}
\quad \textrm{if $j \not\in \{ x, \langle x-1 \rangle \}$}
\]
and
\[
u_j(x) = 1 \;\; \textrm{and} \;\; \vartheta_j(f) \in \{ 1, {*} \}
\quad \textrm{if $j = x$}.
\]
As for the index $j = \langle x-1 \rangle$,
here $u_j(x) = 0$ while $\vartheta_j(f) \ne 1$ in this case
(since $f( \langle j+1 \rangle) = f(x) = 1$).
Hence, $\TCAM(\bldu(x),\bldtheta(f)) = 1 = f(x)$.
\end{proof}

\begin{table}[hbt]
\caption{Mapping $x \mapsto \bldu(x)$ for $q = n+1 = 5$
in the proof of Lemma~\protect\ref{lem:1110}.}
\label{tab:1110u}
\[
\begin{array}{c|c}
x &  \bldu(x) \\
\hline
\raisebox{0ex}[2ex]{}
0      & 1{*}{*}0 \\
1      & 01{*}{*} \\
2      & {*}01{*} \\
3      & {*}{*}01 \\
4      & 0000
\end{array}
\]
\end{table}
\begin{table}[hbt]
\caption{Mapping $f \mapsto \bldtheta(f)$ for $q = n+1 = 4$
in the proof of Lemma~\protect\ref{lem:1110}.}
\label{tab:1110theta}
\[
\begin{array}{c|c}
\bldf  & \bldtheta(f) \\
\hline
\raisebox{0ex}[2ex]{}
0000   & 111           \\
1000   & 100           \\
0100   & 010           \\
1100   & {*}10         \\
0010   & 001           \\
1010   & 10{*}         \\
0110   & 0{*}1         \\
\multicolumn{2}{c}{}
\end{array}
\quad\quad
\begin{array}{c|c}
\bldf  & \bldtheta(f) \\
\hline
\raisebox{0ex}[2ex]{}
0001   & 000           \\
1001   & {*}00         \\
0101   & 0{*}0         \\
1101   & {*}{*}0       \\
0011   & 00{*}         \\
1011   & {*}0{*}       \\
0111   & 0{*}{*}       \\
1111   & {*}{*}{*}
\end{array}
\]
\end{table}

\begin{remark}
\label{rem:1110}
The case $q = n+1 = 2$
was excluded from Lemma~\ref{lem:1110},
since, under Scenario~(\Sdd), the lemma does not hold for the set
\begin{equation}
\label{eq:q=n+1=2}
\Phi = \Fset_2 \setminus \{ \LE_2(\cdot,0) \}
= \{ \allzero_2, \GE_2(\cdot,1), \allone_2 \} .
\end{equation}
Indeed, regardless of how we select $\vartheta(\allzero_2)$,
we would get $\TCAM(u(x),\vartheta(\allzero_2)) = 1$
for at least one $x \in [2 \rangle$.

However, the lemma does hold for the set~(\ref{eq:q=n+1=2})
under Scenario~(\Srd) by taking
\[
u(0) = \reject ,
\quad u(1) = 1
\]
and
\[
\vartheta(\allzero_2) = 0 , \quad
\vartheta(\GE_2(\cdot,1)) = 1 , \quad
\vartheta(\allone_2) = {*} .
\]\qed
\end{remark}

\begin{proof}[Proof of Proposition~\protect\ref{prop:Fset'Sdd}
(sufficiency)]
We show that $q = n+1$ is achievable by induction on~$n$.
For $n = 1$, the only choice for~$g$ is~$\allzero_2$,
in which case we take
\begin{equation}
\label{eq:base-u}
u(0) = 0 ,
\quad
u(1) = 1 ,
\end{equation}
and
\begin{equation}
\label{eq:base-theta}
\vartheta(\LE_2(\cdot,0)) = 0 ,
\quad
\vartheta(\GE_2(\cdot,1)) = 1 ,
\quad
\vartheta(\allone_2) = {*} .
\end{equation}

Turning to the induction step, assume that $q = n+1 \ge 3$
and let $\bldg = (g(x))_{x \in [q \rangle}$ be the truth
table of some function $g \in \Fset_q$
not in $\Eset_q \cup \{ \allone_q \}$.
Without loss of generality we can assume that
\[
\bldg = \underbrace{1 1 \ldots 1}_{t+1} 0 0 \ldots 0 ,
\]
namely, $g(x) = \LE_q(x,t)$, where $t \ne 0, q-1$.
The case $t = q-2$ is covered by
Lemma~\ref{lem:1110}, so we assume hereafter that
$t \le q-3$ (i.e., $g(q-2) = g(q-1) = 0$).

Consider the function $g' \in \Fset_{q-1}$ which
is the restriction of~$g$ to the domain $[q-1 \rangle$.
Then $g' \not\in \Eset_{q-1} \cup \{ \allone_{q-1} \}$
by our assumptions and, so, by the induction hypothesis,
there exist $\bldu' : [q-1 \rangle \rightarrow \Boolean_*^{n-1}$
and $\bldtheta' : \Fset_{q-1} \rightarrow \Boolean_*^{n-1}$ such that
\begin{equation}
\label{eq:TCAMf'}
\TCAM(\bldu'(x),\bldtheta'(f')) = f'(x)
\end{equation}
for every $f' \in \Fset_{q-1} \setminus \{ g' \}$
and $x \in [q-1 \rangle$.

Define the mapping $\bldu : [q \rangle \rightarrow \Boolean_*^n$ by
\begin{equation}
\label{eq:x=q-1}
\bldu(q-1) = {*}{*} \ldots {*}{*}1
\end{equation}
and, for $x \in [q-1 \rangle$:
\begin{equation}
\label{eq:x!=q-1}
\bldu(x) =
\left\{
\begin{array}{lcl}
\bldu'(x) \, 0        & \textrm{if $g(x) = 0$} \\
\bldu'(x) \, {*}      & \textrm{if $g(x) = 1$}
\end{array}
\right. .
\end{equation}

Let $\overline{g} : [q \rangle \rightarrow \Boolean$
be the function which is identical to~$g$ except that
$\overline{g}(q-1) = 1$.
We define the mapping
$\bldtheta : \Fset_q \setminus \{ g \} \rightarrow \Boolean_*^n$ by
\begin{equation}
\label{eq:f=overlineg}
\bldtheta(\overline{g}) = {*}{*} \ldots {*}{*}1
\end{equation}
while for $f \in \Fset_q \setminus \{ g, \overline{g} \}$:
\begin{equation}
\label{eq:f!=overlineg}
\bldtheta(f) =
\left\{
\begin{array}{lcl}
\bldtheta'(f') \, 0   & \textrm{if $f(q-1) = 0$} \\
\bldtheta'(f') \, {*} & \textrm{if $f(q-1) = 1$}
\end{array}
\right. ,
\end{equation}
where $f'$ is the restriction of~$f$ to the domain $[q-1 \rangle$.

We now verify that Eq.~(\ref{eq:TCAM}) holds for
every $x \in [q \rangle$ and
$f \in \Fset_q \setminus \{ g \}$.
For $x = q-1$ and $f = \overline{g}$ we have
\[
\TCAM(\bldu(q-1),\bldtheta(\overline{g}))
\stackrel{\textrm{(\ref{eq:x=q-1})+(\ref{eq:f=overlineg})}}{=}
1 = \overline{g}(q-1) ,
\]
while for $x = q-1$ and $f \ne \overline{g}$,
\[
\TCAM(\bldu(q-1),\bldtheta(f))
\stackrel{\textrm{(\ref{eq:x=q-1})}}{=}
\TCAM(1, \vartheta_{n-1}(f))
\stackrel{\textrm{(\ref{eq:f!=overlineg})}}{=}
f(q-1) .
\]
Assuming now that $x \in [q-1 \rangle$, for $f = \overline{g}$,
\[
\TCAM(\bldu(x),\bldtheta(\overline{g}))
\stackrel{\textrm{(\ref{eq:f=overlineg})}}{=}
\TCAM(u_{n-1}(x), 1)
\stackrel{\textrm{(\ref{eq:x!=q-1})}}{=}
g(x) = \overline{g}(x) ,
\]
while for $f \ne \overline{g}$,
\[
\TCAM(\bldu(x),\bldtheta(f))
\stackrel{\textrm{(\ref{eq:x!=q-1})+(\ref{eq:f!=overlineg})}}{=}
\TCAM(\bldu'(x),\bldtheta'(f'))
\stackrel{\textrm{(\ref{eq:TCAMf'})}}{=}
f'(x) = f(x) .
\]
\end{proof}

\begin{proof}[Proof of Proposition~\protect\ref{prop:Fset'Srd}
(sufficiency)]
The proof is very similar to that of
Proposition~\ref{prop:Fset'Srd}, except that in the induction base
we need to take into account that~$g$ may also be $\LE_2(\cdot,0)$
(or $\GE_2(\cdot,1)$); this case is covered by Remark~\ref{rem:1110}.
Respectively, in the induction step, $g$
can also be $\LE_q(\cdot,0)$.
\end{proof}

\begin{example}
\label{ex:g=0}
When $g = \allzero_q$,
running the recursive definitions
(\ref{eq:x=q-1})--(\ref{eq:f!=overlineg})
with the initial conditions
(\ref{eq:base-u})--(\ref{eq:base-theta})
yields, for every $j \in [q-1 \rangle$,
$x \in [q \rangle$,
and $f \in \Fset_q \setminus \{ \allzero_q \}$:
\[
u_j(x) =
\left\{
\begin{array}{ll}
0   & \textrm{if $x \le j$} \\
1   & \textrm{if $x = j+1$} \\
{*} & \textrm{if $x > j+1$}
\end{array}
\right.
\]
and
\[
\vartheta_j(f) =
\left\{
\arraycolsep0.9ex
\begin{array}{ll}
0   & \textrm{if $f(j{+}1) = 0$ and $f(x) = 1$ for some $x \le j$} \\
1   & \textrm{if $f(j{+}1) = 1$ and $f(x) = 0$ for all $x \le j$} \\
{*} & \textrm{otherwise}
\end{array}
\right.
\!\!\!\! .
\]
In words, $\bldtheta(f)$ is obtained from the truth table of~$f$
by changing all the $0$'s that precede the first~$1$
and all the $1$'s that succeed it into $*$'s, and then deleting
the first entry (which corresponds to $x = 0$).
Table~\ref{tab:Fset!=0Sdd}
shows the mappings $x \mapsto \bldu(x)$
and $f \mapsto \bldtheta(f)$ for $q = 4$.\qed
\begin{table}[hbt]
\caption{Mappings
for $\Fset_4 \setminus \{ \allzero_4 \}$
that attain $q = n+1 = 4$ under Scenario~(\Sdd).}
\label{tab:Fset!=0Sdd}
\[
\begin{array}{c|c}
x      & \bldu(x)     \\
\hline
\raisebox{0ex}[2ex]{}
0      & 000          \\
1      & 100          \\
2      & {*}10        \\
3      & {*}{*}1      \\
\multicolumn{1}{c}{} &
\end{array}
\quad\quad\quad
\begin{array}{c|c}
\bldf  & \bldtheta(f) \\
\hline
\raisebox{0ex}[2ex]{}
1000   & 000           \\
0100   & 100           \\
1100   & {*}00         \\
0010   & {*}10         \\
1010   & 0{*}0         \\
\end{array}
\quad
\begin{array}{c|c}
\bldf  & \bldtheta(f) \\
\hline
\raisebox{0ex}[2ex]{}
0110   & 1{*}0         \\
1110   & {*}{*}0       \\
0001   & {*}{*}1       \\
1001   & 00{*}         \\
0101   & 10{*}         \\
\end{array}
\quad
\begin{array}{c|c}
\bldf  & \bldtheta(f) \\
\hline
\raisebox{0ex}[2ex]{}
1101   & {*}0{*}       \\
0011   & {*}1{*}       \\
1011   & 0{*}{*}       \\
0111   & 1{*}{*}       \\
1111   & {*}{*}{*}
\end{array}
\]
\end{table}
\end{example}

When $g \ne \allzero_q$ we can assume that
$g(x) = \LE_q(x,t)$ for some $t \in [1:q-1 \rangle$
(or $t \in [q-1 \rangle$ for Scenario~(\Srd)).
We run the recursive definitions
(\ref{eq:x=q-1})--(\ref{eq:f!=overlineg})
yet now with the initial conditions
(\ref{eq:1110x=q-1})--(\ref{eq:1110f!=0})
when stated for $q = n+1 = t+2$.
Table~\ref{tab:Fset!=t=3}
shows the resulting mapping $x \mapsto \bldu(x)$
for $q = 8$ and $t = 3$.
The entries $u_j(x)$ that correspond to
$(x,j) \in [t+2 \rangle \times [t+1 \rangle$
are determined by the initial conditions
and are the same as in Table~\ref{tab:1110u},
while the entries that correspond to $(x+t+1,j+t+1)$
for $(x,j) \in [q-t-1 \rangle \times [q-t-2 \rangle$
are the same as in Table~\ref{tab:Fset!=0Sdd}.

\begin{table}[hbt]
\caption{Mapping $x \mapsto \bldu(x)$
that attains $q = n+1$ for $q = 8$ and $t = 3$.\vspace{-2ex}}
\label{tab:Fset!=t=3}
\[
\arraycolsep0.5ex
\begin{array}{c|r|l}
\quad x \quad     & \multicolumn{2}{c}{\bldu(x)}     \\
\hline
\raisebox{0ex}[2ex]{}
0      & 1{*}{*}0     & {*}{*}{*} \\
1      & 01{*}{*}     & {*}{*}{*} \\
2      & {*}01{*}     & {*}{*}{*} \\
3      & {*}{*}01     & {*}{*}{*} \\
\cline{3-3}
4      & 0000         & 000       \\
\cline{2-2}
5      & {*}{*}{*}{*} & 100       \\
6      & {*}{*}{*}{*} & {*}10     \\
7      & {*}{*}{*}{*} & {*}{*}1
\end{array}
\]
\end{table}

\begin{proof}[Proof of Proposition~\protect\ref{prop:Fsetstar}]
Suppose that $\Fset_q^\star$
is $n$-cell implementable with mappings
$\bldu : [q \rangle \rightarrow \Boolean_*^n$
and $\bldtheta : \Fset_q^\star \rightarrow \Boolean_*^n$.
Since $q \ge 3$, for any two distinct elements $x, x' \in [q \rangle$
there exists a function $f \in \Fset_q^\star$ such that
$f(x) = 1$ yet $f(x') = 0$.
Hence, the mapping~$\bldu$
has to be injective and its set of images must form
an antichain in $\Boolean_*^n$ (see Section~\ref{sec:Eset}).

For any element $t \in [q \rangle$
let $t'$ be a particular element in $[q \rangle \setminus \{ t \}$
(say, $t' = 1$ when $t = 0$, and $t' = 0$ otherwise).
By the antichain property it follows that
there is at least one index $\ell = \ell(t) \in [n \rangle$
such that $u_\ell(t) \npreceq u_\ell(t')$
and, in particular,
$u_\ell(t) \ne u_\ell(t')$,
$u_\ell(t) \ne \reject$, and $u_\ell(t') \ne {*}$.
We define $b(t) \in \Boolean$ by
\[
b(t)
=
\left\{
\begin{array}{ccl}
u_\ell(t)    && \textrm{if $u_\ell(t) \in \Boolean$} \\
1 &&
\textrm{if $u_\ell(t) = {*}$ and $u_\ell(t') \in \{ 0, \reject \}$} \\
0 &&
\textrm{if $u_\ell(t) = {*}$ and $u_\ell(t') = 1$}
\end{array}
\right.
.
\]
Notice that $\TCAM(u_\ell(t),b(t)) = 1$ yet
$\TCAM(u_\ell(t'),b(t)) = 0$.
We also define $f_t : [q \rangle \rightarrow \Boolean$ to be
the function in $\Fset_q^\star$
which evaluates to~$1$ when $x \in \{ t, t' \}$ and to~$0$ otherwise.

We now extend the mapping~$\bldtheta$ to the domain $\Fset_q$ by
\[
\bldtheta(\allone_q) = {*}{*} \ldots {*}
\]
and, for every $t \in [q \rangle$ and $j \in [n \rangle$,
\[
\vartheta_j(\E_q(\cdot,t))
=
\left\{
\begin{array}{ccl}
\vartheta_j(f_t)  && \textrm{if $j \ne \ell(t)$} \\
b(t)              && \textrm{if $j = \ell(t)$}
\end{array}
\right.
.
\]
It can be readily verified that Eq.~(\ref{eq:TCAM})
holds for the added functions $\allone_q$
and $\E_q(\cdot,t)$, $t \in [q \rangle$,
and, therefore, for the whole set $\Fset_q$.
Hence, $\Fset_q$ is $n$-cell implementable.
\end{proof}
\fi

\ifANONYMOUS
\else
\section*{Acknowledgment}

\ifPAGELIMIT
    This work resulted from many stimulating discussions that
    I had with Giacomo Pedretti from Hewlett Packard Labs.
\else
I would like to express my thanks
to Giacomo Pedretti from Hewlett Packard Labs.
This work resulted from many stimulating discussions that
I had with him during my visit at Labs.
\fi
Thanks are also due to Luca Buoanno and Cat Graves.
\fi

%%\ifANONYMOUS
    %%\input{tcam-supplementary.tex}
%%\fi

\begin{thebibliography}{99}
\bibitem{Anderson}
    \bibauthor{I. Anderson,}
    \bibbook{Combinatorics of Finite Sets,}
    Dover Publications, Mineola, New York, 1987.
\bibitem{AB}
    \bibauthor{M. Anthony, N. Biggs,}
    \bibbook{Computational Learning Theory,}
    Cambridge University Press, Cambridge, UK, 1992.
\ifPAGELIMIT
\else
\bibitem{BKKS}
    \bibauthor{A. Birkendorf, N. Klasner, C. Kuhlmann, H.U. Simon,}
    \bibpaper{Structural results about exact learning with
    unspecified attribute values,}
    \bibperiodical{J. Comput.\ Syst.\ Sci.,}
    60 (2000), 258--277.
\fi
\bibitem{BHH}
    \bibauthor{A. Bremler-Barr, D. Hay, D. Hendler,}
    \bibpaper{Layered interval codes for TCAM-based classification,}
    \bibperiodical{Proc.\ 28th IEEE Conf.\ on Computer Communication
    (INFOCOM 2009),}
    Rio de Janeiro, Brazil, 2009, pp.~1305--1313.
%%\bibitem{EHKV}
%%    \bibauthor{A. Ehrenfeucht, D. Haussler, M. Kearns, L. Valiant,}
%%    \bibpaper{A general lower bound on the number of examples needed
%%    for learning,}
%%    \bibperiodical{Inf.\ Comput.,} 82 (1989), 247--261.
\bibitem{FRY}
    \bibauthor{W. Fraczak, W. Rytter, M. Yazdani,}
    \bibpaper{Matching integer intervals by minimal sets of
    binary words with don't cares,}
    \bibperiodical{Proc.\ 19th Annual Symp.\ on
    Combinatorial Pattern Matching (CPM 2008),}
    Pisa, Italy, 2008, pp.~217--228.
\bibitem{GG}
    \bibauthor{R. Govindaraj, S. Ghosh,}
    \bibpaper{Design and analysis of STTRAM-based
    ternary content addressable memory cell,}
    \bibperiodical{ACM J. Emerg.\ Technol.\ Comput.\ Syst.,}
    13 (2017), Article 52, 1--22.
\bibitem{KTFY}
    \bibauthor{G. Kasai, Y. Takarabe, K. Furumi, M. Yoneda,}
    \bibpaper{200MH/200MSPS 3.2W at 1.5V Vdd, 9.4Mbits ternary CAM
    with new charge injection match detect circuits and bank selection
    scheme,}
    \bibperiodical{Proc.\ IEEE 2003 Custom Integrated Circuits Conf.,}
    San Jose, CA, 2003, pp.~387--380.
\bibitem{LGSMFPS}
    \bibauthor{C. Li, C.E. Graves, X. Sheng, D. Miller, M. Foltin,
    G. Pedretti, J.P. Strachan,}
    \bibpaper{Analog content-addressable memories with memristors,}
    \bibperiodical{Nat.\ Commun.,} 11 (2020), \#1638.
    %%https://doi.org/10.1038/s41467-020-15254-4
\bibitem{Lubell}
    \bibauthor{D. Lubell,}
    \bibpaper{A short proof of Sperner's theorem,}
    \bibperiodical{J. Comb.\ Theory,} 1 (1966), 299.
\bibitem{MS}
    \bibauthor{F.J. MacWilliams, N.J.A. Sloane,}
    \bibbook{The Theory of Error-Correcting Codes,}
    North-Holland, Amsterdam, 1977.
\bibitem{NS}
    \bibauthor{T. Natschl\"{a}ger, M. Schmitt,}
    \bibpaper{Exact VC-dimension of Boolean monomials,}
    \bibperiodical{Inf.\ Process.\ Lett.,} 59 (1996), 19--20.
\bibitem{PGLSSFMS}
    \bibauthor{G. Pedretti, C.E. Graves, C. Li, S. Serebryakov,
    X. Sheng, M. Foltin, R. Mao, J.P. Strachan,}
    \bibpaper{Tree-based machine learning performed in-memory
    with memristive analog CAM,}
    \bibperiodical{Nat.\ Commun.,} 12 (2021), \#5806.
    %%https://doi.org/10.1038/s41467-021-25873-0
%%\bibitem{SB}
%%    \bibauthor{S. Shalev-Shwartz, S. Ben-David,}
%%    \bibbook{Understanding Machine Learning: from Theory
%%    to Algorithms,}
%%    Cambridge University Press, 2014.
\bibitem{YSMB}
    \bibauthor{X. Yang, S. Sezer, J. McCanny, D. Burns,}
    \bibpaper{A versatile content addressable memory architecture,}
    \bibperiodical{Proc.\ 20th Anniversary IEEE Int'l SOC Conf.,}
    Hsinchu, Taiwan, 2007, pp.~215--218.
\end{thebibliography}
\end{document}